  \DeclareMathOperator\CSP{CSP}
  \DeclareMathOperator\Pol{Pol}
  \DeclareMathOperator\QCSP{QCSP}
\renewcommand{\le}{\leqslant}
\theoremstyle{definition}
\theoremstyle{plain}
\newtheorem{thm}{Theorem}
\newtheorem{lem}[thm]{Lemma}
\newtheorem{cor}[thm]{Corollary}
\title{The complete classification for quantified equality constraints}
\author{Dmitriy Zhuk, Barnaby Martin and Micha\l{} Wrona\thanks{Jagiellonian University, Krak\'ow. This author is partially supported by National Science Centre, Poland grant number 2020/37/B/ST6/01179.}}
\begin{document}

\maketitle

\begin{abstract}
    We prove that QCSP$(\mathbb{N};x=y\rightarrow y=z)$ is PSpace-complete, settling a question open for more than ten years. This completes the complexity classification for the QCSP over equality languages as a trichotomy between Logspace, NP-complete and PSpace-complete.
    
    We additionally settle the classification for bounded alternation QCSP$(\Gamma)$, for $\Gamma$ an equality language. Such problems are either in Logspace, NP-complete, co-NP-complete or rise in complexity in the Polynomial Hierarchy. 
\end{abstract}

\section{Introduction}

The \emph{quantified constraint satisfaction problem}, QCSP$(\Delta)$, asks whether some input sentence, whose quantifier-free part is positive conjunctive, is true on the \emph{constraint language} $\Delta$ comprised of a set of relations over the same domain $D$. The QCSP is itself a generalisation of its better-studied brother, the \emph{constraint satisfaction problem}, CSP$(\Delta)$, in which one assumes the quantification of the sentence is entirely existential. By now, a great deal is known about the CSP, while many important questions for the QCSP remain open. 

The celebrated Feder-Vardi Dichotomy Conjecture, which asserts that CSP$(\Delta)$ is always either in P or is NP-complete, when $\Delta$ is over a finite domain, has been proved independently in \cite{BulatovFVConjecture} and \cite{ZhukFVConjecture,Zhuk20}. This provides the basis for some CSP classifications over an infinite domain \cite{BodirskyMM18,BodirskyM18}. In the meantime, the study of infinite-domain CSPs has itself matured to become a field of active research incorporating Universal Algebra and Model Theory. Important results from its formative years include \cite{BodirskyK10,BodirskyP15,BodirskyMM18} (as well as the earlier \cite{KrokhinJJ03}), but most of the post-modern work concerns the finitely-bounded $\omega$-categorical case \cite{BartoKOPP17,BartoP20}.

At present, we are far from a complexity classification for finite domain QCSPs. The $3$-element case with constants is settled in \cite{zhuk2020qcsp}, where evidence is also given that the classification is more complicated than previously thought. The study of infinite-domain QCSPs is also far less advanced than that for infinite-domain CSPs, with only a partial known classification in the temporal case (see \cite{CharatonikW08,CharatonikW08-bis,ChenW12,Wrona14}). 

The modern, systematic study of infinite-domain CSPs began with a complexity classification for equality languages, which are defined as those which have a first-order definition in $(\mathbb{N};=)$ \cite{BodirskyK08}. For such languages, the CSP is always in NP and the QCSP always in PSpace. It made sense to begin the systematic study of infinite-domain QCSPs similarly, with the equality languages, spawning the works \cite{BodirskyC07,BodirskyC10}. There was an error in the conference version \cite{BodirskyC07} which was corrected in the journal version \cite{BodirskyC10} at the cost of a trichotomy between Logspace, NP-complete and Co-NP-hard instead of  Logspace, NP-complete and PSpace-complete. Thus, a gap was left in this foundational result. It was known that there were PSpace-complete cases, e.g. QCSP$(\mathbb{N};x=y\rightarrow u=v)$, but it was not known whether QCSP$(\mathbb{N};x=y\rightarrow y=z)$ was Co-NP-complete or PSpace-complete (or even somewhere in between). Indeed, this constraint language became notorious; more so, because a solution of it as either Co-NP-complete or PSpace-complete would complete the classification.

In an invited talk at the \emph{Arbeitstagung Allgemeine Algebra} (Workshop on General Algebra -- AAA98) in July 2019 in Dresden, Hubie Chen posed the complexity of QCSP$(\mathbb{N};x=y\rightarrow y=z)$ as one of three open problems in the area of the complexity of constraints that apparently needed new insights to resolve \cite{ChenAAA2019}. In this paper we settle the complexity of QCSP$(\mathbb{N};x=y\rightarrow y=z)$ as PSpace-complete.

We further consider the bounded alternation restrictions of the QCSP, in which there is an a priori bound on the quantifier alternations. The corresponding problems are denoted $\Sigma_k$-QCSP$(\Gamma)$ and $\Pi_k$-QCSP$(\Gamma)$, when the input is restricted to be $\Sigma_k$ and $\Pi_k$, respectively. Such problems arise especially naturally in the 2-element (Boolean) case, for example propositional \emph{abduction} and \emph{prioritization} (see \cite{EiterG95}). The complexity classification for bounded alternation QCSP has been known to display differences from the unbounded case since \cite{Chen09}. Indeed, there are constraint languages $\Gamma_1$ and $\Gamma_2$, both on three elements, so that QCSP$(\Gamma_1)$ and QCSP$(\Gamma_2)$ are PSpace-complete, yet their bounded alternation versions are Co-NP-complete and rising in the polynomial hierarchy, respectively. 

It has been thought that bounded alternation QCSP$(\mathbb{N};x=y\rightarrow u=v)$ is Co-NP-complete. Indeed, Hubie Chen has been saying this for many years. The idea was that the methods of \cite{Chen09} could be adapted for $(\mathbb{N};x=y\rightarrow u=v)$. However, the direct application of \cite{Chen09} is not completely trivial in the infinite-domain case. Our proof does not use the machinery of \cite{Chen09}, though it has various similarities with it.

We prove a complete complexity classification for the bounded alternation QCSP$(\Gamma)$, along the lines that the authors of \cite{BodirskyC10} would have expected, where $\Gamma$ is an equality language. Each of the complexity classes -- Logspace, NP-complete, Co-NP-complete and rising in the polynomial hierarchy -- appear.

\subsection{The classifications}

Equality constraint languages admit quantifier elimination\footnote{Indeed, equality sits within all models, so must be dealt with in any discussion of quantifier elimination. For more details on this, we refer the reader to Section 2.7 in \cite{hodges_1993}.} so let us assume $\Gamma$ is a finite set of relations over an infinite countable set (e.g. $\mathbb{N}$) where each relation is defined by some Boolean combination of atoms of the form $x=y$. Let us additionally assume this Boolean combination is in conjunctive normal form (CNF). A relation is \emph{negative} if it has a CNF definition in which all of the clauses are either equalities, or are disjunctions of negative literals (i.e. of the form $x \neq y$). A relation is \emph{positive}  if it has a CNF definition in which all of the literals are positive (i.e. of the form $x = y$). A relation is \emph{Horn} if it has a CNF definition in which each clause contains at most one positive literal. $\Gamma$ is \emph{negative} (respectively, \emph{positive}, \emph{Horn}) if all of its relations are negative (respectively, positive, Horn). Let $I$ be the relation $x=y\rightarrow y=z$.

\begin{cor}
Let $\Gamma$ be an equality constraint language. Either
\begin{itemize}
    \item $\Gamma$ is negative, and QCSP$(\Gamma)$ is in Logspace, or else
    \item $\Gamma$ is positive, and QCSP$(\Gamma)$ is NP-complete, or else
    \item QCSP$(\Gamma)$ is PSpace-complete.
\end{itemize}
\end{cor}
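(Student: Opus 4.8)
The plan is to derive this corollary from a trichotomy theorem that should be the main result of the paper (which presumably precedes the corollary but appears after the excerpt). So the work here is a case analysis over equality constraint languages, reducing each case to either a membership result or a hardness result for the notorious relation $I$, together with known facts from \cite{BodirskyC10}.

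First I would recall the relevant structural facts about equality languages. By quantifier elimination, every relation in $\Gamma$ is a CNF of equality atoms. The work \cite{BodirskyC10} already established: (i) if $\Gamma$ is negative then QCSP$(\Gamma)$ is in Logspace; (ii) if $\Gamma$ is positive then QCSP$(\Gamma)$ is NP-complete; (iii) otherwise QCSP$(\Gamma)$ is Co-NP-hard, and in the cases not covered by (i)--(ii) it is in fact PSpace-hard \emph{except} possibly when $\Gamma$ is ``Horn but not negative'' in a way that is pp-interdefinable (in the quantified sense, i.e.\ generates the same co-clone closed under the right operations) with the single relation $I = (x=y \to y=z)$. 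So the entire gap left open by \cite{BodirskyC10} reduces to pinning down the complexity of QCSP$(\mathbb{N}; I)$, and the main theorem of this paper (assumed available) says QCSP$(\mathbb{N}; I)$ is PSpace-complete.

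The key steps, in order, are then: (1) Membership: since $\Gamma$ is an equality language, QCSP$(\Gamma)$ is in PSpace by the general result of \cite{BodirskyK08} cited in the introduction; this handles the upper bound in the third bullet, and the standard Logspace/NP upper bounds in the first two bullets come from \cite{BodirskyC10}. (2) Negative case: invoke \cite{BodirskyC10} for Logspace membership and observe that negativity is the boundary condition — note that a negative $\Gamma$ cannot also realise anything NP-hard. (3) Positive but not negative case: invoke \cite{BodirskyC10} for NP-completeness. (4) Neither negative nor positive: here I would show PSpace-hardness. The cleanest route is to argue that any such $\Gamma$ pp-defines (with the quantified/primitive-positive machinery appropriate to QCSP, i.e.\ allowing existential quantification and conjunction, which preserves QCSP complexity up to logspace reductions) either the relation $x=y\to u=v$ — whose QCSP is classically known to be PSpace-complete — or the relation $I$ — whose QCSP is PSpace-complete by the main theorem of this paper. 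Concretely, one splits on whether $\Gamma$ is Horn: if $\Gamma$ is Horn but not negative, a clause with exactly one positive literal and at least one negative literal is present; by identifying and projecting variables one extracts either $I$ or $x=y\to u=v$. If $\Gamma$ is not Horn, one has a clause with at least two positive literals and, combined with non-positivity, enough to pp-define $x=y \to u=v$ directly. Either way PSpace-hardness follows, and combined with (1) we get PSpace-completeness.

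The main obstacle is step (4), specifically making the reductions from an arbitrary ``Horn, not negative'' equality language down to $I$ precise: one must verify that the pp-definition used does not require disjunction or negation (which QCSP syntax forbids) and that identifying variables inside the clauses of $\Gamma$'s relations genuinely yields $I$ up to the relevant equivalence, rather than something strictly weaker like an equality relation or something strictly stronger. This is exactly the delicate ``co-clone bookkeeping'' that \cite{BodirskyC10} carried out to get their Co-NP-hardness, so the honest statement is that steps (2)--(4) other than the complexity of $I$ itself are already in \cite{BodirskyC10}; the only new ingredient is upgrading the QCSP$(\mathbb{N};I)$ bound from Co-NP-hard to PSpace-complete, which is the theorem proved in the body of this paper. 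Thus the corollary is essentially a one-line consequence: replace ``Co-NP-hard'' by ``PSpace-complete'' in the \cite{BodirskyC10} trichotomy, using the PSpace upper bound from \cite{BodirskyK08}.
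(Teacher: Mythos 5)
Your proposal is correct and follows essentially the same route as the paper: the first two cases are quoted from \cite{BodirskyC10}, and in the remaining case one invokes the definability of $I=(x=y\rightarrow y=z)$ in $\Gamma$ (Theorem~8.1 of \cite{BodirskyC10}) together with the paper's new result that QCSP$(\mathbb{N};I)$ is PSpace-complete, with the PSpace upper bound holding for all equality languages. One caution: in step (4) the relevant notion is \emph{quantified conjunctive} definability (existential \emph{and} universal quantification, characterised by surjective polymorphisms), not pp-definability as your parenthetical suggests --- the paper's conclusion explicitly notes that the pp-definability version fails, with $x=y\rightarrow y=z$ as a counterexample --- but since you ultimately defer to \cite{BodirskyC10} for this bookkeeping, the argument stands.
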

\begin{proof}
We follow the argument as laid out in Theorem~5.5 of \cite{BodirskyC10}. If $\Gamma$ is negative then QCSP$(\Gamma)$ is in Logspace by Theorem 6.1 of \cite{BodirskyC10}. If $\Gamma$ is positive but not negative then QCSP$(\Gamma)$ is NP-complete by Theorem 7.1 of \cite{BodirskyC10}. Otherwise, by Theorem 8.1 of \cite{BodirskyC10}, $I$ is definable as a quantified conjunctive (positive) formula within $\Gamma$, and the result follows from our Corollary~\ref{cor:main}.
\end{proof}
Let us give examples of each of the three complexity classes in this classification. A non-trivial negative language might be $(x\neq y \vee u\neq v) \wedge v=w$. A positive language that is not negative might be $x=y \vee u=v$. Finally, the canonical example for the third class is, of course, $I$!

The classification for bounded alternation QCSP over equality languages is as follows.
\begin{thm}
Let $\Gamma$ be an equality language and $k \geq 2$.
\begin{itemize}
    \item If $\Gamma$ is negative, then $\Pi_k$-QCSP$(\Gamma)$ is in Logspace.
    \item Else, if $\Gamma$ is positive, then $\Pi_k$-QCSP$(\Gamma)$ is NP-complete.
    \item Else, if $\Gamma$ is Horn, then $\Pi_k$-QCSP$(\Gamma)$ is Co-NP-complete.
    \item Else, $\Pi_k$-QCSP$(\Gamma)$ is $\Pi_{k-2}^{\mathrm{P}}$-hard. 
\end{itemize}
\label{thm:bounded-alternation}
\end{thm}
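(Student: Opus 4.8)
The plan is to handle the four cases largely by invoking known results from \cite{BodirskyC10} and the present paper for the first two items, and then concentrate on the genuinely new content: the upper bound in the Horn case and the lower bound in the non-Horn case. For the \emph{negative} case, the Logspace algorithm of Theorem 6.1 of \cite{BodirskyC10} for QCSP$(\Gamma)$ already solves the unbounded problem, hence a fortiori $\Pi_k$-QCSP$(\Gamma)$; nothing new is needed. For the \emph{positive but not negative} case, membership in NP follows because a positive equality language is preserved by a constant operation, so quantified variables can be witnessed by collapsing everything onto a single fresh element and the problem reduces to a CSP-style check; NP-hardness is inherited from the unbounded positive case (Theorem 7.1 of \cite{BodirskyC10}), whose hardness proof already uses only a $\Sigma_2$ (indeed existential-conjunctive) sentence, and it lies inside $\Pi_k$ for $k\ge 2$.

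The first substantial step is the \emph{Horn, non-positive} case: $\Pi_k$-QCSP$(\Gamma)$ in Co-NP. Here I would exploit the structure of Horn equality relations. A Horn clause over equality atoms is equivalent to an implication $x_1=y_1 \wedge \cdots \wedge x_t=y_t \rightarrow x_0=y_0$ (or a pure disequality clause, or an equality), so the quantifier-free part of an instance is a conjunction of such implications. The key combinatorial fact is that on an \emph{equality} domain only the partition induced on the variables matters, and Horn constraints behave like propagation rules on partitions. I would show that a $\Pi_k$-sentence $\Phi$ is \emph{true} iff a certain universally-quantified polynomial-size certificate exists: namely, for the existential blocks one may always take witnesses to be ``as equal as forced and otherwise fresh,'' which is a canonical choice computable in polynomial time from the values of the preceding universal variables. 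Thus falsity of $\Phi$ is witnessed by a single assignment to the outermost universal block (with all later existentials resolved by the canonical propagation), giving an NP algorithm for the complement, i.e. $\Pi_k$-QCSP$(\Gamma) \in$ Co-NP. The Co-NP-hardness is inherited: since $\Gamma$ is not positive it contains a Horn relation with a genuine negative literal, and a standard reduction from the complement of a monotone CSP (or directly from the tautology problem) realised with a single universal block over $=$ gives Co-NP-hardness already at the $\Pi_2$ level.

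The final and hardest case is \emph{not Horn}: $\Pi_k$-QCSP$(\Gamma)$ is $\Pi_{k-2}^{\mathrm P}$-hard. A relation that is not Horn has, in every CNF, a clause with at least two positive literals that is not subsumed by an equality, and such a clause pp-defines (together with equality) a disjunction $x=y \OR u=v$ of two independent equalities. The plan is to encode, inside $\Pi_k$-QCSP$(\Gamma)$, evaluation of a $\Pi_{k-2}$ quantified Boolean formula. Boolean values are represented by the equality type of a pair of elements relative to a pair of ``reference'' elements supplied by two leading quantifier blocks (one $\forall$, one $\exists$) that set up ``true'' and ``false''; the remaining $k-2$ blocks of the QBF are simulated by the remaining $k-2$ blocks of the QCSP, with disjunctive clauses $x=y \OR u=v$ playing the role of clauses of the QBF, exactly as in the PSpace-hardness argument behind Corollary~\ref{cor:main} but truncated to bounded alternation. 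The main obstacle — and the place where care is required — is that equality languages cannot directly forbid a variable from taking a value, so one cannot freely assert ``$x\neq y$''; the encoding must be arranged so that every constraint used is genuinely pp-definable from $\Gamma$ over $(\mathbb N;=)$, and the two set-up blocks must robustly pin down a usable Boolean scale no matter how the adversarial universal player behaves. I expect verifying the correctness of this gadget — that the QCSP sentence is true precisely when the simulated QBF is — to be the technically heaviest part, and it will likely reuse machinery developed for Corollary~\ref{cor:main}.
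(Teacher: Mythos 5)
Your overall case decomposition matches the paper's, and the negative case is handled identically. But three of your four cases contain genuine gaps. First, in the positive case your hardness justification is wrong: a positive equality language is preserved by every constant operation, so \emph{every} existential-conjunctive sentence over it is true (collapse all variables to one point) and its CSP is trivial; NP-hardness therefore cannot come from a $\Sigma_1$/$\Sigma_2$ sentence and genuinely needs universal quantification. The paper supplies this via Corollary~\ref{cor:NP-hard-bis-general}, which universally quantifies a block of ``reference constants'' outermost and then simulates $\CSP(\mathbb{N};0,1,x=y\vee y=z)$. Second, and most seriously, your Co-NP membership argument for the Horn case assumes away the central difficulty. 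For $k\ge 3$ there are \emph{inner} universal blocks, so a falsity certificate is not ``a single assignment to the outermost universal block'': which equalities are ``forced'' at an early existential block depends on universal moves that have not yet been played (a premise $u=y$ of a Horn implication may involve a later universal $u$), so the ``as equal as forced, otherwise fresh'' strategy is neither well-defined from the prefix alone nor obviously optimal, and a universal counter-strategy is a priori an exponential object. The entire point of the paper's Section 4.1 — the inductively defined $k$-proofs, their soundness, the completeness theorem (Theorem~\ref{ExistenceOfAProof}) and the size bound (Lemma~\ref{SizeOfProof}) — is to manufacture a polynomial-size, polynomial-time-verifiable certificate of falsity. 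Your sketch contains no substitute for this machinery.

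Third, your plan for the non-Horn lower bound would fail. The construction behind Corollary~\ref{cor:main} uses only the Horn relation $I=(x=y\rightarrow y=z)$, and by the paper's own Co-NP membership result no reduction using only Horn relations can yield $\Pi^{\mathrm P}_{k-2}$-hardness (it would collapse the hierarchy). The reason there is no contradiction is precisely that this construction is \emph{not} alternation-preserving: each existential QBF variable $x_i$ is encoded by a pair $\exists x_i^0\,\forall x_i^1$ to force Existential's choice to be Boolean, so the number of quantifier alternations in $\Psi$ grows with the number of existential variables of $\Phi$, not with its alternation depth; ``truncating to bounded alternation'' trivialises the source problem (the paper remarks that the purely universal fragment of Quantified-3-SAT is in P). The paper's actual route is different: encode a Boolean variable as a \emph{pair} $(v,v')$ with truth value given by whether $v=v'$, which is alternation-preserving because a single quantifier block handles both truth values, reduce from $\Pi_k$-quantified NAE-3-SAT to $\Pi_k$-$\QCSP(\mathbb{N};\neq,x=y\vee u=v)$ (Lemma~\ref{lem:hardest}, using the chain gadget of Lemma~\ref{lem:build-disjunction} to build long disjunctions of (dis)equalities), and only then lose two levels by substituting the $\Pi_2$ quantified-conjunctive definitions of $\neq$ and $x=y\vee u=v$ (Lemmas~\ref{lem:hauptlemma} and~\ref{lem:tiny} plus Corollary~\ref{cor:pi2}). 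Note also that a non-Horn clause in a CNF does not ``pp-define'' $x=y\vee u=v$ — one cannot extract a clause from a CNF by a pp-formula; the paper obtains only quantified-conjunctive definability, via the fact that the surjective polymorphisms are essentially unary, and needs non-positivity separately to define $\neq$.
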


\section{Preliminaries}

 We identify a constraint language $\Gamma$ with a set of relations over a fixed domain $D$, which we may always take in this paper to be $\mathbb{N}$. We may also think of this as a first-order relational structure. All $\Gamma$ in this paper will have a finite signature.

We always may assume that an instance of $\QCSP(\Gamma)$ is of the prenex form $$\forall x_1\exists y_1 \forall x_2 \exists y_2 \dots \forall x_{n} \exists y_{n} \Phi,$$ since if it is not it may readily be brought into such a form in polynomial time. Then a solution is a sequence of (Skolem) functions 
$f_{1},\ldots, f_{n}$ such that 
$$(x_1,f(x_1),x_2,f_2(x_1,x_2),\ldots,x_{n},f_n(x_1,\ldots,x_n))$$ is a solution of $\Phi$ for all 
$x_{1},\ldots,x_{n}$ (i.e. $y_{i} = f_{i}(x_{1},\ldots,x_{i})$). This belies a (Hintikka) game semantics for the truth of a QCSP instance in which a player called Universal (male) plays the universal variables and a player called Existential (female) plays the existential variables, one after another, from the outside in. The Skolem functions above give a strategy for Existential. In our proofs we may occasionally revert to a game-theoretical parlance.

A formula in conjunctive normal form with $3$-literals per clause is called a 3-CNF. We indicate negation of a propositional variable $z_i$ with the overline notation $\overline{z}_i$.

A key role in classifying CSP and QCSP has been played by Universal Algebra. 
We say that a $k$-ary operation $f$ \emph{preserves} 
an $m$-ary relation $R$, 
whenever $(x^1_1,\ldots,x^m_1),\ldots,(x^1_k,\ldots,x^m_k)$ in $R$, then also $(f(x^1_1,\ldots,x^1_k),$ $\ldots,f(x^m_1,\ldots,x^m_k))$ in $R$.
The relation $R$ is called \emph{an invariant} of $f$, 
and the operation $f$ is called \emph{a polymorphism} of $R$.
An operation $f$ is \emph{a polymorphism} of $\Gamma$ if it preserves every relation 
from $\Gamma$. The \emph{polymorphism clone} $\Pol(\Gamma)$ is the set of all polymorphisms of $\Gamma$. The restriction of this to surjective operations is denoted sPol$(\Gamma)$.

A formula of the form 
$\exists y_{1}\dots\exists y_{n} \Phi$,
where $\Phi$ is a  conjunction of relations from $\Gamma$ is called 
\emph{a positive primitive formula (pp-formula) over $\Gamma$}.
If $R(x_{1},\dots,x_{n}) = \exists y_{1}\dots\exists y_{n} \Phi$, 
then we say that $R$ is \emph{pp-defined} by 
$\exists y_{1}\dots\exists y_{n}$ $\Phi$, 
and $\exists y_{1}\dots\exists y_{n} \Phi$ is called 
\emph{a pp-definition}.
If we augment the definition of pp-formula with universal quantification, then we get the richer class of \emph{quantified conjunctive formulas} for which we can also define quantified conjunctive definitions. If $\Gamma$ is an equality language, it is known that the class of relations pp-definable on $\Gamma$ is precisely those relations that are preserved by the polymorphisms of $\Gamma$ \cite{BodirskyN06}. Furthermore, it is known that the class of relations quantified conjunctive definable on $\Gamma$ is precisely those relations that are preserved by the surjective polymorphisms of $\Gamma$ \cite{BodirskyC10}. This latter property is famously not yet known to hold for $\omega$-categorical structures in general \cite{ChenMuellerLMCS}.


A $k$-ary operation $f$ is \emph{essentially unary} if there exists $i \in [k]$ so that $f(x_1,\ldots,x_k)=g(x_i)$ for some unary function $g$. An operation that is not essentially unary is said to be \emph{essential}. 

\section{QCSP$(\mathbb{N};x=y\rightarrow y=z)$}



We will define 
formulas (constraints) over $I:=x=y \rightarrow y=z$ by diagrams.
An edge labeled with $z$ from a vertex $x$ to a vertex $y$ means
the constraint 
$I(x,z,y)$.
Sometimes we do not label a vertex. This means that the corresponding variable is existentially quantified after all other variables are quantified (innermost).
For instance, in Figure \ref{GraphExample} there is one unlabeled vertex. If we call it $u$ then 
Figure \ref{GraphExample} can be interpreted as the formula
$$\exists u (I(x,y,u)\wedge I(x,z,u)\wedge I(t,z,u)).$$

We define a reduction from 
the problem \emph{Quantified $3$-Satisfiability}. 
Suppose we have a formula $\Phi$, whose quantifier-free part is a propositional formula in 3-CNF, that is a formula of the form
\begin{align}\label{RunningExample}
\exists x_1\forall y_{1}\exists x_2\forall y_2\dots\exists x_n\forall y_n
\;((x_2\vee \overline y_1\vee y_3)
\wedge 
(\overline x_1\vee \overline x_2\vee x_5)\wedge\dots)
\end{align}
We define a formula $\Psi$ by
$$
\begin{array}{ll}
\forall t\forall f 
&
\exists x_{1}^{0}\forall x_{1}^{1}\;\;
\forall y_{1}^0\forall y_{1}^1 \;\;
\exists x_{2}^{0}\forall x_{2}^{1}\;\;
\forall y_{2}^0\forall y_{2}^1 \dots
\exists x_{n}^{0}\forall x_{n}^{1}\;\;
\forall y_{n}^0\forall y_{n}^1\;\; \\
& \exists z \;\;\exists \dots\exists 
(\mathcal C_0\wedge \mathcal C_{1}\wedge\dots\wedge \mathcal C_{n}\wedge \mathcal F),\\
\end{array}
$$
where 
\begin{itemize}
\item $\mathcal C_{0}$ is defined in Figure \ref{fig:C0}.
\item $\mathcal C_{1},\dots,\mathcal C_{n}$ are defined in Figure \ref{fig:Ci}; for the extreme cases 
    $\mathcal C_{n-1}$ and 
    $\mathcal C_{n}$ to be clear, they are also shown in Figures \ref{fig:Cn1} and \ref{fig:Cn}.

\item $\mathcal F$ encodes the 3-CNF and is defined as follows. Associated with each clause $(\ell_1 \vee \ell_2 \vee \ell_3)$ is a path of length $3$ from $t$ to $z$ with edge labels, in sequence, $\lambda(\ell_1),\lambda(\ell_2),\lambda(\ell_3)$, where $\lambda(\ell)=u^0$, iff $\ell$ is a positive variable $u$, and $\lambda(\ell)=u^1$, iff $\ell$ is a negative  variable $\overline{u}$.
In Figure \ref{fig:F} we show 
$\mathcal F$ for the formula (\ref{RunningExample}). 
\item $\exists \dots\exists$ means that all the remaining variables are existentially quantified.
\end{itemize}

Below we explain how to interpret different 
values of the variables:    
\begin{enumerate}
    \item variables $t$ and $f$ encode the value true and the value false;
    \item variables 
    $x_{i}^0$ and $x_{i}^1$ encode the variable $x_{i}$;
    $x_{i}^{0} = t$ means that $x_{i}=0$,
    $x_{i}^{1} = t$ means that $x_{i}=1$;
    \item variables 
    $y_{i}^0$ and $y_{i}^1$ encode the variable $y_{i}$;
    $y_{i}^{0} = t$ means that $y_{i}=0$,
    $y_{i}^{1} = t$ means that $y_{i}=1$.
\end{enumerate}

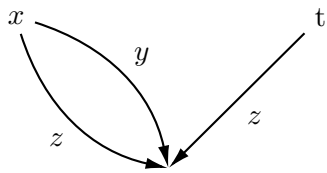
\begin{figure}[H]
\begin{center}
\begin{tikzpicture}[scale=1]

\node at (0,0) (xi) {$x$};
\draw[-{Latex[length=3mm, width=1.5mm]},thick, bend left] (xi) to node[auto]{$y$} (2,-2);
\draw[-{Latex[length=3mm, width=1.5mm]}, thick, bend right] (xi) to node[below left]{$z$} (2,-2);

\node at (4,0) (true) {t};

\draw[-{Latex[length=3mm, width=1.5mm]}, thick] (true) to node[auto]{$z$} (2,-2);

\end{tikzpicture}
\end{center}

\caption{An example of a graph defining constraints.}
\label{GraphExample}
\end{figure}

\begin{figure*}
\begin{center}
\begin{tikzpicture}[scale=1]


\node at (0,0) (false) {f};
\draw[-{Latex[length=3mm, width=1.5mm]},thick, bend left] (false) to node[auto]{$y_{1}^{1}$} (1,-1);
\draw[-{Latex[length=3mm, width=1.5mm]},thick, bend right] (false) to node[below left]{$y_{1}^{0}$} (1,-1);
\draw[-{Latex[length=3mm, width=1.5mm]},thick, bend left] (1,-1) to node[auto]{$y_{2}^{1}$} (2,-2);
\draw[-{Latex[length=3mm, width=1.5mm]}, thick, bend right] (1,-1) to node[below left]{$y_{2}^{0}$} (2,-2);
\draw[-, dashed, thick] (2,-2) to  (3,-3);
\draw[-{Latex[length=3mm, width=1.5mm]},thick, bend left] (3,-3) to node[auto]{$y_{n}^{1}$} (4,-4);
\draw[-{Latex[length=3mm, width=1.5mm]}, thick, bend right] (3,-3) to node[below left]{$y_{n}^{0}$} (4,-4);

\draw[-{Latex[length=3mm, width=1.5mm]}, thick] (4,-4) to node[below left]{$x_{1}^{1}$} (5,-5);
\draw[-{Latex[length=3mm, width=1.5mm]}, thick] (5,-5) to node[below left]{$x_{2}^{1}$} (6,-6);
\draw[-, dashed, thick] (6,-6) to  (7,-7);
\draw[-{Latex[length=3mm, width=1.5mm]}, thick] (7,-7) to node[below left]{$x_{n}^{1}$} (8,-8);

\node at (13,-3) (true) {t};

\draw[-{Latex[length=3mm, width=1.5mm]}, thick] (true) to node[auto]{$x_{1}^{0}$} (12,-4);
\draw[-{Latex[length=3mm, width=1.5mm]}, thick] (12,-4) to node[auto]{$x_{2}^{0}$} (11,-5);
\draw[-, dashed, thick] (11,-5) to  (10,-6);
\draw[-{Latex[length=3mm, width=1.5mm]}, thick] (10,-6) to node[auto]{$x_{n}^{0}$} (9,-7);
\draw[-{Latex[length=3mm, width=1.5mm]}, thick] (9,-7) to node[auto]{$z$} (8,-8);


\end{tikzpicture}
\end{center}
\caption{Set of constraints $\mathcal C_{0}$.}
\label{fig:C0}
\end{figure*}

\begin{figure*}
\begin{center}
\begin{tikzpicture}[scale=1]

\node at (0,0) (xi) {$x_{i}^{0}$};
\draw[-{Latex[length=3mm, width=1.5mm]},thick, bend left] (false) to node[auto]{$y_{i}^{1}$} (1,-1);
\draw[-{Latex[length=3mm, width=1.5mm]},thick, bend right] (false) to node[below left]{$y_{i}^{0}$} (1,-1);
\draw[-{Latex[length=3mm, width=1.5mm]},thick, bend left] (1,-1) to node[auto]{$y_{i+1}^{1}$} (2,-2);
\draw[-{Latex[length=3mm, width=1.5mm]}, thick, bend right] (1,-1) to node[below left]{$y_{i+1}^{0}$} (2,-2);
\draw[-, dashed, thick] (2,-2) to  (3,-3);
\draw[-{Latex[length=3mm, width=1.5mm]},thick, bend left] (3,-3) to node[auto]{$y_{n}^{1}$} (4,-4);
\draw[-{Latex[length=3mm, width=1.5mm]}, thick, bend right] (3,-3) to node[below left]{$y_{n}^{0}$} (4,-4);

\draw[-{Latex[length=3mm, width=1.5mm]}, thick] (4,-4) to node[below left]{$x_{i+1}^{1}$} (5,-5);
\draw[-{Latex[length=3mm, width=1.5mm]}, thick] (5,-5) to node[below left]{$x_{i+2}^{1}$} (6,-6);
\draw[-, dashed, thick] (6,-6) to  (7,-7);
\draw[-{Latex[length=3mm, width=1.5mm]}, thick] (7,-7) to node[below left]{$x_{n}^{1}$} (8,-8);

\node at (13,-3) (true) {t};

\draw[-{Latex[length=3mm, width=1.5mm]}, thick] (true) to node[auto]{$x_{i+1}^{0}$} (12,-4);
\draw[-{Latex[length=3mm, width=1.5mm]}, thick] (12,-4) to node[auto]{$x_{i+2}^{0}$} (11,-5);
\draw[-, dashed, thick] (11,-5) to  (10,-6);
\draw[-{Latex[length=3mm, width=1.5mm]}, thick] (10,-6) to node[auto]{$x_{n}^{0}$} (9,-7);
\draw[-{Latex[length=3mm, width=1.5mm]}, thick] (9,-7) to node[auto]{$z$} (8,-8);


\end{tikzpicture}
\end{center}
\caption{Set of constraints $\mathcal C_{i}$.}
\label{fig:Ci}
\end{figure*}


\begin{figure}[H]
\begin{center}
\begin{tikzpicture}[scale=1]

\node at (0,0) (xi) {$x_{n-1}^{0}$};
\draw[-{Latex[length=3mm, width=1.5mm]},thick, bend left] (xi) to node[auto]{$y_{n-1}^{1}$} (1,-1);
\draw[-{Latex[length=3mm, width=1.5mm]},thick, bend right] (xi) to node[below left]{$y_{n-1}^{0}$} (1,-1);
\draw[-{Latex[length=3mm, width=1.5mm]},thick, bend left] (1,-1) to node[auto]{$y_{n}^{1}$} (2,-2);
\draw[-{Latex[length=3mm, width=1.5mm]}, thick, bend right] (1,-1) to node[below left]{$y_{n}^{0}$} (2,-2);

\draw[-{Latex[length=3mm, width=1.5mm]}, thick] (2,-2) to node[below left]{$x_{n}^{1}$} (3,-3);

\node at (5,-1) (true) {t};

\draw[-{Latex[length=3mm, width=1.5mm]}, thick] (true) to node[auto]{$x_{n}^{0}$} (4,-2);
\draw[-{Latex[length=3mm, width=1.5mm]}, thick] (4,-2) to node[auto]{$z$} (3,-3);


\end{tikzpicture}
\end{center}
\caption{Set of constraints $\mathcal C_{n-1}$.}
\label{fig:Cn1}
\end{figure}

\begin{figure}[H]
\begin{center}
\begin{tikzpicture}[scale=1]

\node at (0,0) (xi) {$x_{n}^{0}$};
\draw[-{Latex[length=3mm, width=1.5mm]},thick, bend left] (xi) to node[auto]{$y_{n}^{1}$} (2,-2);
\draw[-{Latex[length=3mm, width=1.5mm]}, thick, bend right] (xi) to node[below left]{$y_{n}^{0}$} (2,-2);

\node at (4,0) (true) {t};

\draw[-{Latex[length=3mm, width=1.5mm]}, thick] (true) to node[auto]{$z$} (2,-2);

\end{tikzpicture}
\end{center}

\caption{Set of constraints $\mathcal C_{n}$.}
\label{fig:Cn}
\end{figure}

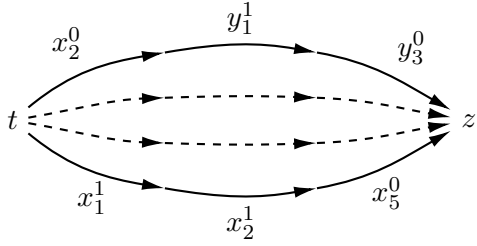
\begin{figure}[H]
\begin{center}
\begin{tikzpicture}[scale=1]

\node at (0,0) (true) {$t$};
\node at (6,0) (z) {$z$};
\draw[-{Latex[length=3mm, width=1.5mm]},thick, out = 40, in = 190] (true) to node[auto]{$x_{2}^{0}$} (2,0.9);
\draw[-{Latex[length=3mm, width=1.5mm]},thick, out = 10, in = 170] (2,0.9) to node[auto]{$y_{1}^{1}$} (4,0.9);
\draw[-{Latex[length=3mm, width=1.5mm]},thick, out = -10, in =150] (4,0.9) to node[auto]{$y_{3}^{0}$} (z);

\draw[-{Latex[length=3mm, width=1.5mm]},thick, out = -40, in = -190] (true) to node[below]{$x_{1}^{1}$} (2,-0.9);
\draw[-{Latex[length=3mm, width=1.5mm]},thick, out = -10, in = -170] (2,-0.9) to node[below]{$x_{2}^{1}$} (4,-0.9);
\draw[-{Latex[length=3mm, width=1.5mm]},thick, out = 10, in =-150] (4,-0.9) to node[below]{$x_{5}^{0}$} (z);

\draw[-{Latex[length=3mm, width=1.5mm]},thick, dashed, out = 10, in = 182] (true) to node[below]{} (2,0.3);
\draw[-{Latex[length=3mm, width=1.5mm]},thick, dashed, out = 2, in = 178] (2,0.3) to node[below]{} (4,0.3);
\draw[-{Latex[length=3mm, width=1.5mm]},thick, dashed, out = -2, in =170] (4,0.3) to node[below]{} (z);

\draw[-{Latex[length=3mm, width=1.5mm]},thick, dashed, out = -10, in = -182] (true) to node[below]{} (2,-0.3);
\draw[-{Latex[length=3mm, width=1.5mm]},thick, dashed, out = -2, in = -178] (2,-0.3) to node[below]{} (4,-0.3);
\draw[-{Latex[length=3mm, width=1.5mm]},thick, dashed, out = 2, in =-170] (4,-0.3) to node[below]{} (z);

\end{tikzpicture}
\end{center}

\caption{Set of constraints $\mathcal F$.}
\label{fig:F}
\end{figure}


\begin{thm}
 Formulas $\Phi$ and $\Psi$ are equivalent.
\end{thm}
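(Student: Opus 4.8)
The plan is to prove both directions by constructing explicit winning strategies in the Hintikka game for $\Psi$, interpreting positions through the intended encoding: write $x_i\mapsto 0$ when $x_i^0=t$ and $x_i\mapsto 1$ otherwise, and similarly for $y_i$, and call the pair $(v^0,v^1)$ of a variable \emph{valid} if exactly one of $v^0,v^1$ equals $t$. First dispose of the degenerate case: if Universal plays $t=f$, then Existential sets every variable she controls equal to $t$, and every atom $I(a,b,c)$ becomes $t=b\Rightarrow b=t$, which holds; so in what follows we assume $t\neq f$, and for the direction $\neg\Phi\Rightarrow\neg\Psi$ Universal opens with $t\neq f$.

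Two structural observations drive everything. (i) For a clause $(\ell_1\vee\ell_2\vee\ell_3)$, chasing the three atoms along its $\mathcal F$-path $t\xrightarrow{\lambda(\ell_1)}w_1\xrightarrow{\lambda(\ell_2)}w_2\xrightarrow{\lambda(\ell_3)}z$ shows that if $\lambda(\ell_1)=\lambda(\ell_2)=\lambda(\ell_3)=t$ then $z=t$ is forced, and otherwise Existential may pick $w_1,w_2$ fresh and distinct from the labels so the path constrains nothing. Since under a valid encoding $\lambda(\ell)=t$ expresses exactly that $\ell$ is false, $\mathcal F$ forces $z=t$ iff the decoded assignment falsifies the $3$-CNF; moreover $\mathcal F$ is always satisfiable with $z=t$ (take all $w$'s equal to $t$), and is satisfiable with $z\neq t$ iff the decoded assignment satisfies the $3$-CNF. (ii) Each gadget $\mathcal C_i$ consists of two chains of atoms ending at one common fresh vertex: a ``$t$-chain'' $t\xrightarrow{x_{i+1}^0}\cdots\xrightarrow{x_n^0}\bullet\xrightarrow{z}\bullet$, and an ``$x$-chain'' starting at $x_i^0$ (at $f$ when $i=0$), running first through the double edges labelled $y_i,\dots,y_n$ and then through the single edges labelled $x_{i+1}^1,\dots,x_n^1$. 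A double edge $a\xrightarrow{v^1,v^0}b$ forces $b=a$ as soon as one of $v^0,v^1$ equals $a$ and is inert otherwise, and a single edge $a\xrightarrow{\ell}b$ forces $b=a$ when $\ell=a$ and is inert otherwise. Let $j^\ast$ be the largest index with $x_{j^\ast}\mapsto 1$, or $j^\ast:=0$ if all $x_k\mapsto 0$. A direct computation shows: if the encoding is valid \emph{and} every ``free'' bit — the non-$t$ member of each $y$-pair and each $x_k^1$ with $x_k\mapsto 0$ — equals the single value $x_{j^\ast}^0$, which is $f$ when $j^\ast=0$, then in $\mathcal C_{j^\ast}$ the $x$-chain forces the common endpoint to equal $x_{j^\ast}^0\neq t$ while the $t$-chain forces it to equal $z$; hence $\mathcal C_{j^\ast}$ forces $z\neq t$. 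Every other gadget is then satisfiable with $z\neq t$ and imposes nothing further.

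Assembling these: for $\Phi\Rightarrow\Psi$, Existential plays each $x_i^0\in\{t,f\}$ according to a winning $\Phi$-strategy evaluated against the decoded $y_1,\dots,y_{i-1}$, \emph{unless} Universal ever answers with a pair that is not valid, whereupon she switches to playing all remaining $x_k^0:=t$ and finally $z:=t$. In the first case the decoded assignment satisfies the $3$-CNF, so $\mathcal F$ and all the $\mathcal C_i$ are satisfiable with $z\neq t$; in the second case the offending pair kills the $x$-chain of $\mathcal C_{j^\ast}$ (after the switch all later $x_k^0$ equal $t$, so that is the only chain that could have forced $z\neq t$), and $z=t$ together with all $w$'s equal to $t$ and a routine completion of the inner variables satisfies everything. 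For $\neg\Phi\Rightarrow\neg\Psi$, Universal additionally maintains a running value $v$, initially $f$ and reset to $x_i^0$ whenever Existential plays $x_i^0\neq t$; he answers $x_i^0=t$ with $x_i^1:=v$ and $x_i^0\neq t$ with $x_i^1:=t$ (always a valid pair), and encodes each $y_i$ by a winning $\Phi$-strategy against the decoded $x_1,\dots,x_i$, using $v$ as the non-$t$ member of the pair. Then the hypothesis of (ii) holds, so $\mathcal C_{j^\ast}$ forces $z\neq t$, while the decoded assignment falsifies the $3$-CNF, so $\mathcal F$ forces $z=t$; the quantifier-free part is unsatisfiable whatever Existential plays.

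The main obstacle, and the reason for the bookkeeping above, is the behaviour of the reduction under invalid encodings and under ``weird'' bit values lying outside $\{t,f\}$. On the $\neg\Phi$ side one must verify that Universal's special-value tracking really makes one chain of $\mathcal C_{j^\ast}$ propagate the value $x_{j^\ast}^0$ all the way to the common endpoint, no matter which values Existential assigned to the $x_i^0$; on the $\Phi$ side one must verify that every pair Existential could be confronted with that is not valid breaks exactly the chain that would otherwise pin $z$, so that her escape to $z=t$ is always available, and that all remaining gadgets and clause-paths can then be satisfied independently, their inner vertices being disjoint. Checking that precisely one gadget $\mathcal C_{j^\ast}$ is ever ``active'' and that every chain behaves as claimed across all the sub-cases is the bulk of the work.
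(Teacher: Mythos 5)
Your overall architecture is the same as the paper's (explicit game strategies, the same running value $v$ --- the paper's $d$ --- for Universal, the same identification of the gadget $\mathcal C_{j^\ast}$ that pins $z\neq t$ against $\mathcal F$ pinning $z=t$), and your treatment of the direction $\neg\Phi\Rightarrow\neg\Psi$ is correct and matches the paper's. The gap is in the direction $\Phi\Rightarrow\Psi$: your escape rule fires too often, and in some of the extra cases the escape loses. Concretely, suppose Existential, following the $\Phi$-strategy, plays $x_i^0=f$ and Universal answers $x_i^1=f$ (or any value other than $t$). Neither member of the pair equals $t$, so by your definition the pair is not valid and Existential switches to $x_{i+1}^0=\dots=x_n^0=t$ and $z=t$. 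But the $x$-chain of $\mathcal C_i$ starts \emph{at the vertex} $x_i^0$, whose value is $f$, and contains no edge labelled $x_i^0$ or $x_i^1$; all of its labels ($y_i^0,y_i^1,\dots,y_n^0,y_n^1,x_{i+1}^1,\dots,x_n^1$) are still to be played by Universal, who sets them all to $f$. That chain then propagates $f$ to the common endpoint, while the $t$-chain (all $x_k^0=t$ for $k>i$, and $z=t$) forces that same endpoint to equal $t$. The quantifier-free part becomes unsatisfiable and Existential loses even though $\Phi$ is true. The same failure occurs for a $y_i$-pair such as $(f,5)$: it is invalid under your definition, yet the corresponding double edge still propagates $f$ because one of its labels is $f$. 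So the verification you defer to --- that ``every pair that is not valid breaks exactly the chain that would otherwise pin $z$'' --- is false as stated.

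The repair is what the paper does: escape only on \emph{doubly-$t$} violations ($y_i^0=y_i^1=t$, or $x_i^1=t$ after $x_i^0=t$); these really do break every active $x$-chain, since such a chain can only ever be carrying the value $f\neq t$ when it meets the offending edge, whose labels are all $t$. In every other case Existential keeps following the $\Phi$-strategy, decoding a pair with no $t$-member as the value $0$, and --- crucially --- finally sets $z:=f$ rather than $z:=t$. With $z=f$ the last edge of every $t$-chain is inert (its source is $t$ or a fresh value, never $f$), so no gadget constrains its common endpoint from that side and all the $\mathcal C_j$ are satisfiable outright; only $\mathcal F$ remains to be checked, and it holds because the decoded assignment satisfies the $3$-CNF.
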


\begin{lem}
$\Phi\rightarrow\Psi$.
\end{lem}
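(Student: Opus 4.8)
The plan is to convert a winning strategy of the existential player in $\Phi$ into one in the evaluation game for $\Psi$. Write $\varphi$ for the quantifier-free part of $\Phi$ and fix Skolem functions $s_1,\dots,s_n$, with $s_i$ depending on $y_1,\dots,y_{i-1}$, such that for every $(y_1,\dots,y_n)\in\{0,1\}^n$ the assignment $x_i:=s_i(y_1,\dots,y_{i-1})$ satisfies $\varphi$. In the game for $\Psi$ the universal player first plays $t$ and $f$. If $t=f$, the existential player answers $t$ to every subsequent variable: each edge of each diagram becomes a constraint $I(u,\ell,v)$ with $u=v=t$, and such a constraint is always satisfied (either $u\neq\ell$, or $u=\ell=v$), so $\mathcal C_0,\dots,\mathcal C_n$ and $\mathcal F$ all hold. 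Hence we may assume $t\neq f$.

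For $t\neq f$, the existential player decodes each pair $y_j^0,y_j^1$ revealed so far into a bit $\widetilde y_j$ ($=0$ if $(y_j^0,y_j^1)=(t,f)$, $=1$ if $(y_j^0,y_j^1)=(f,t)$, and $=0$ otherwise), and when she must play $x_i^0$ she answers $t$ if $s_i(\widetilde y_1,\dots,\widetilde y_{i-1})=0$ and $f$ if it is $1$, so that $x_i^0$ encodes $\widetilde x_i:=s_i(\widetilde y_1,\dots,\widetilde y_{i-1})$. After all the $x^\bullet,y^\bullet$ are fixed she inspects the labels on the paths of $\mathcal F$: if some clause-path carries the label $t$ on all three of its edges she sets $z:=t$ and answers $t$ to every innermost (unlabelled) vertex; otherwise she sets $z:=f$ and fills each innermost vertex by pushing values forward along its chain wherever the constraints force this, and by a fixed fresh value $\star\notin\{t,f\}$ otherwise. (If the universal player ever plays outside the encoding in a way that would force $z=t$, she also switches to answering $t$ to all remaining $x_i^0$; see below.)

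To verify correctness I would split on whether the universal player respects the encoding. If he always does, the run decodes to a genuine play of the $\Phi$-game following $(s_i)$, so $(\widetilde x_i,\widetilde y_i)$ satisfies $\varphi$; since a clause-path of $\mathcal F$ is all-$t$ exactly when the corresponding clause is falsified, no path is all-$t$ and $z=f$. Then $\mathcal F$ holds because each clause-path breaks at its first non-$t$ label and its forced prefix is compatible with $z=f$; and each $\mathcal C_j$ holds because the chain issuing from $t$ never carries the value $f$, so its final edge — labelled $z=f$ — is inert, and the sink $q_j$ (and the meeting vertex $p$ of $\mathcal C_0$) is then governed only by the other chain, whose forced values cannot clash: from a starting value in $\{t,f\}$ that value is pushed unchanged through the $y$-diamonds and then, along the $x^1$-edges, either persists or the chain breaks. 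If the universal player leaves the encoding, the harmless deviations ($x_j^1=x_j^0=f$, or $y_j^0=y_j^1=f$, or a value outside $\{t,f\}$) only put a non-$t$ label on the literals of $x_j$ (resp.\ $y_j$) and so neutralise the affected clause-paths; the only dangerous one is a deviation that makes a clause-path all-$t$, forcing $z=t$, and there I would show that the ``all-$t$'' completion still satisfies every $\mathcal C_j$.

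I expect this last case to be the real obstacle. When a clause-path becomes all-$t$ although $(\widetilde x_i,\widetilde y_i)\models\varphi$, the cause is a deviation that labels a \emph{true} literal with $t$ — necessarily $x_i^1=x_i^0=t$, or $y_i^1=t$ with $\widetilde y_i=0$ — and one must argue that this same deviation is exactly what severs the conflicting $f$-chain inside each relevant $\mathcal C_j$ (through a collapsed $y$-diamond, or an $x^1$-edge whose label no longer matches the value being pushed) before that chain can reach its sink; the subtlety is that some earlier $x_j^0$ may already have been committed to $f$. Carrying out this bookkeeping simultaneously across $\mathcal F$ and all of $\mathcal C_0,\dots,\mathcal C_n$, and choosing the point at which the existential player switches to the constant $t$ so that it is not ``too late'', is where the work lies.
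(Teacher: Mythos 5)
Your overall plan coincides with the paper's: simulate the Skolem functions of $\Phi$, dispose of $t=f$ separately, and handle deviations of the universal player by an ``escape to $t$''. But the proof has a genuine gap exactly where you write ``this is where the work lies'': verifying that the escape is consistent with all of $\mathcal C_0,\dots,\mathcal C_n$ \emph{is} the substance of the lemma, and you defer it. The paper closes this by making the escape trigger local and syntactic: the existential player switches to $x_{i+1}^0=\dots=x_n^0=z=t$ at the first index $i$ at which the universal player plays $y_i^0=y_i^1=t$, or plays $x_i^1=t$ when $x_i^0=t$. This precise trigger is what makes the case analysis close: (i) the violation is detected before $x_{i+1}^0$ is quantified, so no later $x_j^0$ is already committed to $f$ (earlier ones may be $f$, but then the $t$-chain of the corresponding $\mathcal C_j$ is itself cut and forces nothing); (ii) the offending move is exactly what severs any $f$-carrying chain: a chain holding the value $f$ cannot cross a diamond both of whose labels equal $t\neq f$, nor an edge labelled $x_i^1=t$. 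Your trigger (``plays outside the encoding in a way that would force $z=t$'') is not locally checkable --- whether a clause-path of $\mathcal F$ ends up all-$t$ depends on universal moves made \emph{after} the existential variables you would need to have changed --- so as stated your strategy is not a legal Skolem strategy.

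There is also a concrete error in the all-$t$ branch: ``answers $t$ to every innermost (unlabelled) vertex'' violates $\mathcal C_0$ whenever $t\neq f$ and the universal player sets, say, $y_1^0=f$, since the first diamond of $\mathcal C_0$ has source $f$ and then forces its target to equal $f$, not $t$. The correct completion propagates forced values along each chain and assigns fresh values where nothing is forced; the point to prove is that the two chains meeting at each sink never force different values (with $z=f$ the final $z$-labelled edge never fires; under the escape, every chain that reaches a sink carries $t$). Finally, a minor imprecision: a clause-path of $\mathcal F$ is all-$t$ only \emph{if} the corresponding clause is falsified, not ``exactly when''; the converse fails under the convention that a variable both of whose coordinates avoid $t$ decodes to $0$. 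Only the stated direction is needed, but it should not be asserted as an equivalence.
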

\begin{proof}
In the following, we make no assumption that $t\neq f$, though the more interesting cases arise when this is true. Suppose we have a strategy for the existential player (EP) in $\Phi$.
Let us define a strategy for the existential player (EP) in $\Psi$.
First, we want at most one of the two values $x_{i}^{0}$ and $x_{i}^{1}$ to be equal to $t$,
and at most one of $y_{i}^{0}$ and $y_{i}^{1}$ to be equal to $t$.
Assume that the above rule was violated for the first time by the universal player (UP) in $\Psi$
when he plays $y_{i}^{0} = t$ and $y_{i}^{1} = t$, 
or when he plays $x_{i}^{1} = t$ after $x_{i}^{0} = t$.
In both cases the winning strategy for the EP is 
to choose $x_{i+1}^{0} = \dots = x_{n}^{0} = z = t$.
Thus, we may assume that the above rule always holds.

Then we interpret $x_{i}^{0}=t$ as $x_{i}=0$,
$x_{i}^{1}=t$ as $x_{i}=1$,
$y_{i}^{0}=t$ as $y_{i}=0$,
$y_{i}^{1}=t$ as $y_{i}=1$.
It will be convenient for us to assume that 
$x_{i}=0$ if both 
$x_{i}^{0}$ and $x_{i}^{1}$ are different from $t$.
Similarly, we assume that 
$y_{i}=0$ if both 
$y_{i}^{0}$ and $y_{i}^{1}$ are different from $t$.

Then the EP in $\Psi$ should play according to the strategy of the EP in $\Phi$,
that is, if the EP in $\Phi$ chooses $x_{i} =0$, then 
the EP in $\Psi$ should choose $x_{i}^{0}=t$, 
otherwise 
the EP in $\Psi$ should choose $x_{i}^{0}=f$.
Additionally, the EP in $\Psi$ chooses $z=f$, and the remaining values (for the variables that are not labeled in figures) can be chosen uniquely.

It is not hard to see that the constraints in $\mathcal F$ are equivalent to the 3-CNF part of $\Phi$. This is because at least one of the edge labels in the path must not be evaluated to $t$, which corresponds to one of the literals being true.
\end{proof}

\begin{lem}
$\Psi\rightarrow\Phi$.
\end{lem}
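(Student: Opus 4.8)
The plan is to prove the contrapositive direction via a strategy transfer in the other direction: assume the existential player (EP) has a winning strategy in $\Psi$, and construct from it a winning strategy for EP in $\Phi$. Since $\Phi$ is $\exists x_1 \forall y_1 \cdots \exists x_n \forall y_n \,\phi$, I need to show EP can answer each $x_i$ (given $y_1,\dots,y_{i-1}$) so that the 3-CNF part is satisfied for all choices of $y_1,\dots,y_n$.

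First I would fix the universal moves $t$ and $f$ in $\Psi$ to be distinct, say $t = 0$ and $f = 1$ (the key point being that a winning EP strategy must win even against this choice). The main idea is a translation dual to the one in the previous lemma: given a partial play $y_1,\dots,y_{i-1}$ of the universal player in $\Phi$, I let the universal player in $\Psi$ play, for each already-assigned $y_j$, the values $y_j^0 = t, y_j^1 = f$ if $y_j = 0$ and $y_j^0 = f, y_j^1 = t$ if $y_j = 1$; this is consistent with the "at most one equals $t$" convention. I then read off EP's response in $\Psi$: EP plays some value for $x_i^0$, and I decode $x_i = 0$ if $x_i^0 = t$ and $x_i = 1$ otherwise. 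The claim is that this yields a winning strategy for EP in $\Phi$. The core verification is that the constraint gadgets $\mathcal{C}_0, \mathcal{C}_1, \dots, \mathcal{C}_n$ force consistency: since $z$ must be played by EP and (as I will argue) the winning play forces $z = f$, the chains of $I$-constraints in the $\mathcal{C}_i$ propagate so that exactly one of $x_i^0, x_i^1$ equals $t$ and exactly one of $y_i^0, y_i^1$ equals $t$ — i.e. the universal player's "illegal" double-$t$ moves on the $y$'s are neutralized because EP has a forced win there, and symmetrically the chain of constraints ending at $z$ prevents EP from cheating on the $x_i^0$ values. Once this legality is established, the analysis of $\mathcal{F}$ from the previous lemma applies verbatim: each clause-path from $t$ to $z$ must have some edge label not evaluating to $t$, and that corresponds to a true literal under the decoding, so $\phi$ is satisfied.

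The precise bookkeeping I expect to be the main obstacle is showing that EP's winning strategy in $\Psi$ genuinely forces $z = f$ and the "exactly one equals $t$" normal form on every variable pair, rather than EP winning by some degenerate assignment (e.g. collapsing everything to $t$) that does not correspond to a legitimate play of $\Phi$. Here I would argue as follows: the constraints $\mathcal{C}_0$ through $\mathcal{C}_n$ together with $\mathcal{F}$ contain directed $I$-paths from $t$ and $f$ into $z$ and through all the $x_i^j, y_i^j$; tracing these, if the universal player plays "legally" (never two $t$'s in a pair, and plays each $y_i$ pair as one $t$ one $f$), then the $I$-constraint $x=y \rightarrow y=z$ along each chain forces the downstream variables, and in particular forces $z$ to take the value $f$ whenever EP wishes to survive. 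Conversely if the universal player plays illegally on some $y_i$, then EP's designated response (setting all subsequent $x_j^0$ and $z$ to $t$) wins trivially, so such plays are irrelevant to the simulation. I would organize this as a short sequence of claims: (1) legal universal play forces $z = f$ and one-hot encodings; (2) under (1), the decoding $x_i \mapsto \{0,1\}$ is well-defined and EP's $\Psi$-strategy induces a well-defined $\Phi$-strategy; (3) under (1), satisfaction of all $\mathcal{F}$-constraints is equivalent to satisfaction of the 3-CNF, completing the argument.
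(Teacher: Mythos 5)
Your high-level plan is the same as the paper's: fix $t\neq f$, drive the universal player's moves in $\Psi$ from the universal player's moves in $\Phi$, read off $x_i$ from the value Existential gives to $x_i^0$ (namely $x_i=0$ iff $x_i^0=t$), and then use the gadgets $\mathcal C_i$ to get $z\neq t$ and $\mathcal F$ to get the clauses. However, there is a genuine gap in how you handle the universally quantified variables $x_i^1$, $y_i^0$, $y_i^1$. You assert that the $I$-constraints ``force'' the one-hot encodings and force $z=f$; this is backwards. Each constraint $I(x,z,y)$ is an implication and is satisfied whenever its antecedent fails, so the constraints never force the value of a universally quantified variable -- it is \emph{you}, designing Universal's strategy in $\Psi$, who must choose $x_i^1$ and the pair $y_i^0,y_i^1$ adversarially so that the implications fire along the relevant chain. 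Your proposal never specifies Universal's play for $x_i^1$ at all, and for the $y$-pairs you fix the complementary value to be $f$. That choice fails: Existential is free to play $x_i^0$ on a fresh element outside $\{t,f\}$. Let $i$ be maximal with $x_i^0\neq t$ and set $d=x_i^0$. The left branch of $\mathcal C_i$ starts at $x_i^0=d$, and its first double edge fires only if one of $y_i^0,y_i^1$ equals $d$; with your assignment both lie in $\{t,f\}$, so if $d\notin\{t,f\}$ nothing propagates, the sink of $\mathcal C_i$ is unconstrained, Existential may set it and $z$ equal to $t$, and every path in $\mathcal F$ is then trivially satisfiable regardless of the 3-CNF. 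Your simulation therefore extracts no winning strategy for $\Phi$. The paper avoids this by making the complementary value adaptive: Universal plays $x_i^1=t$ if $x_i^0\neq t$ and $x_i^1=d$ otherwise, and plays the non-$t$ member of each $y_j$-pair equal to $d$, where $d$ is the last non-$t$ value in the sequence $f,x_1^0,\dots,x_j^0$; this guarantees the left branch of $\mathcal C_i$ carries $d$ to the sink while the $t$-path carries $t$, whence $z\neq t$ (note: only $z\neq t$, not $z=f$, is obtainable, and that is all that is needed).

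Two smaller points: the discussion of ``illegal'' double-$t$ moves by Universal and Existential's escape response belongs to the converse direction ($\Phi\rightarrow\Psi$); in the present direction Universal's moves are under your control, so you simply never make them. And the claim that a winning play ``forces $z=f$'' is false even under the correct Universal strategy, since Existential may legitimately play $z$ on any element other than $t$.
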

\begin{proof}
Suppose we have a winning strategy for the EP in $\Psi$. 
To define a winning strategy for the EP in $\Phi$,
we will show how the UP in $\Psi$ should play 
and how moves of the EP in $\Psi$ should be interpreted 
by the EP in $\Phi$.
First, the UP in $\Psi$ assigns two different values to $t$ and $f$.
If the EP plays $x_{i}^{0}=t$ then the EP in $\Phi$ plays $x_{i}=0$,
if the EP plays $x_{i}^{0}\neq t$ then the EP in $\Phi$ plays $x_{i}=1$.
Let us show how the UP in $\Psi$ should play.
Let $d$ be the last value in the sequence $f,x_{1}^{0},x_{2}^{0},\dots,x_{i}^{0}$ different from $t$.
He plays $x_{i}^{1} = d$ if $x_{i}^{0}=t$, and $x_{i}^{1} = t$ if $x_{i}^{0}\neq t$.
He plays $y_{i}^0=t$ and $y_{i}^1=d$ if the UP in $\Phi$ plays $y_{i}= 0$.
He plays $y_{i}^0=d$ and $y_{i}^1=t$ if the UP in $\Phi$ plays $y_{i}= 1$.

Let us show that if the EP in $\Phi$ follows this strategy then all the constraints are satisfied.
Let $i$ be the maximal number such that $x_{i}=1$, which means that $x_{i}^{0}\neq t$. Put $d = x_{i}^{0}$.
If $x_j = 0$ for every $j$, then we put $i=0$ and $d= f$.
Since $x_{i+1}=\dots=x_{n}=0$, 
we have 
$x_{i+1}^{0}=\dots=x_{n}^{0}=t$ and 
$x_{i+1}^{1}=\dots=x_{n}^{1}=d$. 
Notice that 
for every $i\in\{i,i+1,\dots, n\}$
exactly one of the two variables $y_{i}^{0}$ and $y_{i}^{1}$
is equal to $d$.
Hence, the constraints 
from $\mathcal C_{i}$ imply $z\neq t$.
Then the constraints from $\mathcal F$ imply that the 3-CNF part holds, since, for each path from $z$ to $t$, at least one of the edge labels must not be evaluated as $t$.
For instance, the path $x_{2}^0 - y_{1}^{1} - y_{3}^{0}$ 
in Figure \ref{fig:F} implies that at least one of the variables 
$x_{2}^0, y_{1}^{1}, y_{3}^{0}$ does not equal $t$, which means that 
$x_{2}=1$, $y_{1}=0$, or $y_{3}=1$ and 
the constraint $(x_2\vee \overline y_1\vee y_3)$ holds.
\end{proof}
 
\begin{cor}
QCSP$(\mathbb{N};x=y\rightarrow y=z)$ is PSpace-complete.
\label{cor:main}
\end{cor}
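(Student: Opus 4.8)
The plan is to split the proof into the two obvious halves: membership of $\QCSP(\mathbb{N};I)$ in PSpace, and PSpace-hardness via a reduction from Quantified $3$-Satisfiability, using the construction $\Phi\mapsto\Psi$ above.

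For membership I would simply invoke the general fact, recalled in the introduction, that the QCSP over any equality language lies in PSpace \cite{BodirskyK08}, since $I$ has a first-order (indeed quantifier-free) definition over $(\mathbb{N};=)$. If one prefers a self-contained argument, one can describe a direct alternating-polynomial-time procedure: because $(\mathbb{N};=)$ is homogeneous, when playing the Hintikka game on a prenex instance with $n$ quantifiers it suffices, at the $i$-th move, for the relevant player to decide which of the at most $i$ values already in play to reuse (if any) or else to introduce a single fresh value; the quantifier-free part only ever tests equalities and disequalities, so the outcome depends solely on this finite amount of data. This gives a game tree of polynomial depth and polynomial branching, hence an $\mathrm{APTIME}=\mathrm{PSpace}$ decision procedure.

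For hardness, recall that Quantified $3$-Satisfiability is PSpace-complete. Given an instance $\Phi$ in the prenex $3$-CNF form (\ref{RunningExample}), the map described above produces the sentence $\Psi$; reading off the diagrams $\mathcal C_0,\dots,\mathcal C_n$ and $\mathcal F$, its quantifier-free part is a conjunction of atoms of the form $I(\cdot,\cdot,\cdot)$ (the unlabelled vertices being interpreted as innermost existentially quantified variables), so $\Psi$ is a genuine instance of $\QCSP(\mathbb{N};I)$ in prenex form with positive conjunctive quantifier-free part. I would then check that the construction is computable in logspace (a fortiori in polynomial time): each $\mathcal C_i$ contributes $O(n)$ vertices and edges, $\mathcal F$ contributes three edges per clause, and the quantifier prefix of $\Psi$ has length linear in that of $\Phi$. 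Correctness is precisely the assertion that $\Phi$ and $\Psi$ are equivalent, i.e. the preceding theorem, established through the two lemmas $\Phi\to\Psi$ and $\Psi\to\Phi$. Hence $\Phi$ is true if and only if $\Psi\in\QCSP(\mathbb{N};I)$, which together with membership yields PSpace-completeness.

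The genuinely hard part — designing the gadgets $\mathcal C_i$ and $\mathcal F$ so that the doubled variables $x_i^0,x_i^1,y_i^0,y_i^1$ faithfully simulate the quantifier alternation of $\Phi$, and that the length-$3$ paths in $\mathcal F$ encode its clauses — is already carried out in the two lemmas above. So the only remaining obstacle is routine bookkeeping: verifying that $\Psi$ is syntactically an instance of $\QCSP(\mathbb{N};I)$ (every constraint a copy of $I$, with its own auxiliary innermost-quantified variables where the diagram leaves a vertex unlabelled), that the reduction is resource-bounded, and that the $\Phi\to\Psi$ direction indeed made no use of the hypothesis $t\neq f$, exactly as its proof takes care to note.
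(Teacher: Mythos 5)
Your proposal is correct and follows essentially the same route as the paper: membership comes from the general PSpace upper bound for equality languages, and hardness is exactly the polynomial-time reduction $\Phi\mapsto\Psi$ from Quantified $3$-Satisfiability whose correctness is the preceding equivalence theorem (the paper leaves these routine bookkeeping steps implicit in stating the corollary). Your observation that the $\Phi\to\Psi$ direction must not assume $t\neq f$ (since $t,f$ are universally quantified in $\Psi$) matches the care the paper's lemma explicitly takes on this point.
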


Let us remark that an alternative construction of $\Psi$ has $t$ and $f$ quantified existentially (still outermost) but adds the new constraint $t\neq f$ (note that disequality $x\neq y$ can be defined over $I$ by 
 $\forall z\; I(x,y,z)$).

\section{Bounded alternation}

We begin with the most substantial contribution of this section which is the Co-NP membership result.

\subsection{Bounded alternation QCSP$(\mathbb{N};x=y\rightarrow u=v)$ is in Co-NP}

Recall that $R$ is defined by a $\Sigma_{k}$-formula over $\Gamma$ if 
\[
\begin{array}{ll}
R(x_1,\dots,x_n) = & \exists\dots\exists \forall\dots\forall\exists\dots\exists\forall\dots\forall \\  
& R_{1}(\dots)\wedge \dots\wedge R_{s}(\dots),
\end{array}
\] 
where the formula has $(k-1)$ alternations and 
$R_1,\dots,R_s$ are from $\Gamma$.
In our case $\Gamma := (\mathbb{N};x=y\rightarrow u=v, x=y)$ where we emphasise that atoms $x=y$ can also appear.

First, we define inductively what a $k$-proof means.
Suppose
$R$ is defined by a $\Sigma_1$-formula $\Phi$ over $\Gamma$, 
that is 
$$R(x_1,\dots,x_n)=\exists x_{n+1}\dots\exists 
x_{n+m} \;\; C_1\wedge C_2\wedge \dots \wedge C_{s},$$
where $C_{1},C_2,\dots,C_{s}$ are constraints.

Let $P=(e_1,e_2,\dots,e_r)$ be a sequence of equalities on $\{x_1,\dots,x_{n+m}\}$, where $e_r$ is 
an equality on $\{x_1,\dots,x_n\}$.
We say that $P$ is a \emph{0-proof in $\Phi$ of
the equality $e_r$  from some set of equalities $E$ on $\{x_1,\dots,x_{n}\}$}, 
if for every $i\in[r]$ the equality $e_{i}$ satisfies one of the following conditions:
\begin{enumerate}
    \item $e_{i}$ is from $E$;
    \item $e_{i}$ can be derived by transitivity from 
    $e_{i_1}$ and $e_{i_2}$, where $i_{1},i_{2}<i$;
    \item $\Phi$ contains the constraint $e_{i}$;
    \item $\Phi$ contains the constraint $(e_{j}\rightarrow e_{i})$
    for $j<i$.
\end{enumerate}
Note that we respect the commutativity of equality, in the sense that a derivation of $x=y$ a fortiori gives also $y=x$.

Suppose
$R$ is defined by a $\Sigma_{2k+1}$-formula $\Phi$ over $\Gamma$, 
that is 
\[
\begin{array}{ll}
R(x_1,\dots,x_n)= & \exists x_{n+1}\dots\exists 
x_{n+m} \forall u_1\dots \forall u_t \\  & R'(x_1,\dots,x_{n+m},u_1,\dots,u_t),
\end{array}
\]
where $R'$ is defined by $\Sigma_{2k-1}$-formula $\Phi'$ over $\Gamma$.

A sequence $P=(o_1,\dots,o_r)$ is called \emph{
a k-proof in $\Phi$ of
an equality $e_{r}$ on $\{x_1,\dots,x_{n}\}$ from some set of equalities $E$ on $\{x_1,\dots,x_{n}\}$} if for every $i\in[r]$  the following conditions hold:
\begin{itemize}
    \item $o_i$ is a triple 
    $(e_{i}, E_{i}^{u}, P_{i}^0)$;
    \item $e_{i}$ is an equality on $\{x_1,\dots,x_{n+m}\}$;
    \item $E_{i}^{u}$ is a set of equalities on $\{x_1,\dots,x_{n+m},u_1,\dots,u_t\}$, 
    such that each equality is $u_{i}=z$, where 
    \\ $z\in \{x_1,\dots,x_{n+m},u_1,\dots,u_{i-1}\}$,
    and each variable $u_i$ appears at most once as the left variable of the equality;
    \item $P_i^{0}$ is 
    a $(k-1)$-proof in $\Phi'$ of the equality 
    $e_{i}$ from the set of equalities 
    $E\cup E_{i}^{u}\cup \{e_1,\dots,e_{i-1}\}$. 
\end{itemize}

Similarly, we define a $k$-proof of a contradiction.
A sequence $P=(o_1,\dots,o_r)$ is called \emph{
a k-proof in $\Phi$ of a contradiction from some set of equalities $E$ on $\{x_1,\dots,x_{n}\}$} if for every $i\in[r-1]$  the following conditions hold:
\begin{itemize}
    \item $o_i$ is a triple 
    $(e_{i}, E_{i}^{u}, P_{i}^0)$;
    \item $e_{i}$ is an equality on $\{x_1,\dots,x_{n+m}\}$;
    \item $E_{i}^{u}$ is a set of equalities on $\{x_1,\dots,x_{n+m},u_1,\dots,u_t\}$, 
    such that each equality is $u_{i}=z$, where \\
    $z\in\{x_1,\dots,x_{n+m},u_1,\dots,u_{i-1}\}$, 
    and each variable $u_i$ appears at most once as the left variable of the equality;
    \item $P_i^{0}$ is 
    a $(k-1)$-proof in $\Phi'$ of the equality 
    $e_{i}$ from the set of equalities 
    $E\cup E_{i}^{u}\cup \{e_1,\dots,e_{i-1}\}$. 
\end{itemize}
and $o_{r}$ satisfies one of the following conditions:
\begin{enumerate}
    \item $o_r=(\varnothing, E_{r}^{u}, P_{r}^0)$, 
    where $E_{r}^{u}$ is defined as before, 
    and $P_{r}^{0}$ is a $(k-1)$-proof in $\Phi'$ of a contradiction
    from the set of equalities
    $E\cup E_{r}^{u}\cup \{e_1,\dots,e_{i-1}\}$; 
    \item $o_r=(e_{r}, E_{r}^{u}, P_{r}^0)$, where 
    $e_{r}, E_{r}^{u}, P_{r}^0$ are defined as above but 
    $e_{r}$ is an equality 
    $u_i=z$, where \\ $z\in\{x_1,\dots,x_{n+m},u_1,\dots,u_{i-1}\}$
    and $E_{r}^{u}$ does not contain the equality 
    $u_{i}=z'$ for any $z'$.
\end{enumerate}
 
 Note that there is no such a notion as $0$-proof of a contradiction.
We need to prove the following lemmas about the soundness of our proofs.
\begin{lem}
Suppose 
$R(x_1,\dots,x_n)$ is defined by a $\Sigma_{2k+1}$-formula $\Phi$ over $\Gamma$, 
$P$ is 
a $k$-proof in $\Phi$ of
an equality $e$ on $\{x_1,\dots,x_{n}\}$ from some set of equalities $E$ on $\{x_1,\dots,x_{n}\}$.
Then any tuple from $R$ that satisfies all the equalities from $E$
also satisfies $e$.
\end{lem}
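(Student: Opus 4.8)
The plan is to prove the lemma by induction on $k$, mirroring the inductive structure of the definition of $k$-proof. The base case $k=0$ concerns a $0$-proof in a $\Sigma_1$-formula $\Phi$, and here I would argue by induction on the position $i$ in the sequence $P = (e_1,\dots,e_r)$. Fix a tuple $\vect{a} = (a_1,\dots,a_n) \in R$ satisfying all equalities in $E$; by the defining $\Sigma_1$-formula there exist witnesses $a_{n+1},\dots,a_{n+m}$ for the existential variables $x_{n+1},\dots,x_{n+m}$ so that all constraints $C_1,\dots,C_s$ hold on $(a_1,\dots,a_{n+m})$. I claim each $e_i$ holds on this extended tuple, checked by the four cases: if $e_i \in E$ it holds by assumption; if $e_i$ is derived by transitivity from earlier $e_{i_1}, e_{i_2}$, it holds since equality is transitive and those hold by the inner induction hypothesis; if $\Phi$ contains the constraint $e_i$, it holds because the extended tuple satisfies all constraints; if $\Phi$ contains $(e_j \rightarrow e_i)$ with $j<i$, then since $e_j$ holds (inner IH) and the implication-constraint holds, $e_i$ holds. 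In particular $e_r$ holds, and since $e_r$ is on $\{x_1,\dots,x_n\}$ this says exactly that $\vect{a}$ satisfies $e$.

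For the inductive step, suppose the result holds for $(k-1)$-proofs and let $P = (o_1,\dots,o_r)$ be a $k$-proof in the $\Sigma_{2k+1}$-formula $\Phi$ of $e$ from $E$, where $\Phi = \exists x_{n+1}\dots\exists x_{n+m}\,\forall u_1\dots\forall u_t\; R'(\dots)$ and $R'$ is defined by the $\Sigma_{2k-1}$-formula $\Phi'$. Fix $\vect{a}\in R$ satisfying $E$; pick witnesses $a_{n+1},\dots,a_{n+m}$ so that $R'(a_1,\dots,a_{n+m},u_1,\dots,u_t)$ holds for \emph{all} choices of $u_1,\dots,u_t$. I would show by an inner induction on $i$ that the equality $e_i$ holds on $(a_1,\dots,a_{n+m})$. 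Given that $e_1,\dots,e_{i-1}$ all hold (inner IH), the key move is to choose a specific assignment to the universal variables $u_1,\dots,u_t$: namely, read off $E_i^u$, which for each $j$ contains at most one equality $u_j = z$ with $z$ among $\{x_1,\dots,x_{n+m},u_1,\dots,u_{j-1}\}$, and use it to define values $b_1,\dots,b_t$ for $u_1,\dots,u_t$ recursively (if $u_j=z$ is listed, set $b_j$ to the already-determined value of $z$; otherwise set $b_j$ to some fresh value, or any value, it does not matter). With this assignment, $(a_1,\dots,a_{n+m},b_1,\dots,b_t) \in R'$ and it satisfies all of $E \cup E_i^u \cup \{e_1,\dots,e_{i-1}\}$ — the equalities of $E$ by assumption on $\vect{a}$, those of $E_i^u$ by construction of the $b_j$, and $e_1,\dots,e_{i-1}$ by the inner IH. Since $P_i^0$ is a $(k-1)$-proof in $\Phi'$ of $e_i$ from exactly that set of equalities, the outer induction hypothesis applied to $R'$ and $\Phi'$ yields that $e_i$ holds on the extended tuple, hence on $(a_1,\dots,a_{n+m})$ (as $e_i$ involves only the $x$-variables). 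Taking $i=r$ gives that $e = e_r$ holds on $\vect{a}$.

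The main obstacle I anticipate is getting the handling of the universal variables $u_1,\dots,u_t$ exactly right: one must verify that the constraints on $E_i^u$ in the definition (each $u_j$ appears at most once as a left-hand variable, and only refers back to earlier variables) genuinely permit a \emph{consistent} assignment $b_1,\dots,b_t$ realizing all the listed equalities simultaneously, and that this can be done while $(a_1,\dots,a_{n+m},b_1,\dots,b_t)$ still lies in $R'$ — which is immediate once we recall the witnesses were chosen so $R'$ holds for \emph{every} value of the $u_j$'s. The ``at most once as a left variable'' condition is precisely what rules out conflicting demands like $u_j = z$ and $u_j = z'$, so the recursive definition of the $b_j$ is well-posed; I would spell this out carefully since it is the crux of why the proof system is sound. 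Everything else is bookkeeping: tracking which variable sets each equality lives on, and noting that an equality on $\{x_1,\dots,x_n\}$ holding on the long tuple is the same as it holding on $\vect{a}$.
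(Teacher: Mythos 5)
Your proof is correct and follows essentially the same route as the paper's: induction on $k$, with the soundness of each step $(e_i,E_i^u,P_i^0)$ resting on the fact that the equalities in $E_i^u$ can be simultaneously realised by a suitable evaluation of $u_1,\dots,u_t$ (which is exactly what the ``at most once as a left variable, referring only backwards'' condition guarantees). The paper's proof is a two-line sketch of this same argument; yours just spells out the inner induction on the position $i$ and the recursive construction of the $b_j$ explicitly.
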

\begin{proof}
We proceed by induction on $k$. At $k=0$, the property holds from the soundness of the steps in the definition of $0$-proof. Suppose it holds at $k=m$, then it holds at $k=m+1$ by the soundness of the steps in the definition of $(m+1)$-proof. Note that each step $(e_i,E^u_i,P^0_i)$ is sound because all of the equalities in $E^u_i$ can be simultaneously realised by the appropriate evaluation of $u_1,\ldots,u_t$.
\end{proof}

\begin{lem}
Suppose 
$R(x_1,\dots,x_n)$ is defined by a $\Sigma_{2k+1}$-formula $\Phi$ over $\Gamma$, 
$P$ is 
a $k$-proof in $\Phi$ of
a contradiction from some set of equalities $E$ on $\{x_1,\dots,x_{n}\}$.
Then $R$ does not contain tuples satisfying all the equalities from $E$.
\end{lem}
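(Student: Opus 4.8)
The plan is to mirror the inductive structure of the preceding lemma, arguing by induction on $k$ and invoking that lemma (soundness of $k$-proofs of equalities) freely at all levels $\le k$. Fix, for contradiction, a tuple $\vect a=(a_1,\dots,a_n)\in R$ satisfying every equality of $E$. Since $R(x_1,\dots,x_n)=\exists x_{n+1}\dots\exists x_{n+m}\,\forall u_1\dots\forall u_t\, R'(x_1,\dots,x_{n+m},u_1,\dots,u_t)$, there are witnesses $a_{n+1},\dots,a_{n+m}$ such that $(a_1,\dots,a_{n+m},b_1,\dots,b_t)\in R'$ for \emph{every} choice of $b_1,\dots,b_t$. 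The basic observation I would record first is that any set $E^u$ of equalities of the allowed shape (each member $u_j=z$ with $z\in\{x_1,\dots,x_{n+m},u_1,\dots,u_{j-1}\}$, and each $u_j$ occurring at most once as a left variable) can be realized over the fixed witnesses: processing $u_1,\dots,u_t$ in order, set each constrained $u_j$ to the already determined value of its $z$ and leave the rest free; this yields $b_1,\dots,b_t$ so that $(a_1,\dots,a_{n+m},b_1,\dots,b_t)$ satisfies $E^u$, and it automatically lies in $R'$.

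Next I would show, by a secondary induction on $i$, that for every $i\in[r-1]$ the witnesses satisfy $e_i$. Here $o_i=(e_i,E_i^u,P_i^0)$ with $P_i^0$ a $(k-1)$-proof in $\Phi'$ of $e_i$ from $E\cup E_i^u\cup\{e_1,\dots,e_{i-1}\}$. Realizing $E_i^u$ by universals $b$ as above, the tuple $(a_1,\dots,a_{n+m},b_1,\dots,b_t)\in R'$ satisfies $E$ (as $\vect a$ does), $E_i^u$ (by construction), and $e_1,\dots,e_{i-1}$ (by the secondary induction hypothesis — these involve only $x_1,\dots,x_{n+m}$, so are insensitive to $b$); applying the soundness lemma for $(k-1)$-proofs of equalities to $P_i^0$ gives $e_i$ for this tuple, and since $e_i$ mentions no $u_j$ it holds of the witnesses themselves.

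It remains to treat the last step $o_r$ by cases. In Case~1, $o_r=(\varnothing,E_r^u,P_r^0)$ with $P_r^0$ a $(k-1)$-proof in $\Phi'$ of a contradiction from $E\cup E_r^u\cup\{e_1,\dots,e_{r-1}\}$; realizing $E_r^u$ produces a tuple of $R'$ satisfying $E\cup E_r^u\cup\{e_1,\dots,e_{r-1}\}$, contradicting the induction hypothesis of the present lemma at level $k-1$. (This case cannot occur when $k=1$, as there is no $0$-proof of a contradiction, so the base case of the induction uses only Case~2.) In Case~2, $e_r$ is an equality $u_i=z$ with $z\in\{x_1,\dots,x_{n+m},u_1,\dots,u_{i-1}\}$ and $E_r^u$ places no constraint on $u_i$; I realize $E_r^u$ as before but additionally assign $u_i$ a value different from the (already fixed) value of $z$, which is possible because the domain is infinite. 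The resulting tuple of $R'$ satisfies $E\cup E_r^u\cup\{e_1,\dots,e_{r-1}\}$ yet falsifies $e_r$, contradicting the soundness lemma for $(k-1)$-proofs of equalities applied to $P_r^0$.

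The one genuinely delicate point is the bookkeeping around the auxiliary universal assignments: one must verify that each $E_i^u$ is realizable (using its acyclic "$u_j$ depends only on earlier variables" shape, together with the infinitude of $\mathbb N$ for the free $u_j$'s in Case~2) and that equalities among $x_1,\dots,x_{n+m}$ do not depend on the choice of those universals, so that conclusions obtained from different auxiliary assignments may be combined safely. Everything else is a direct unwinding of the definitions of $k$-proof together with the preceding soundness lemma.
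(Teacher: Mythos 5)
Your argument is correct and is essentially the paper's own proof written out in full: the paper's version is a two-sentence sketch appealing to "soundness of the steps," and its one explicit detail — that in the second terminal case one assigns $u_i$ a value different from that of $z$ to falsify $e_r$ — is exactly your Case~2. Your careful handling of realizing each $E_i^u$ over the fixed existential witnesses and combining conclusions across different universal assignments is the right (and only) bookkeeping needed.
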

\begin{proof}
We proceed by induction on $k$. At $k=1$, the property holds from the soundness of the steps in the definition of $1$-contradiction. In particular, if $e_r$ is of the form $u_i=z$, where $E^u_r$ holds no information on $u_i$, then assigning $u_i$ to be anything other than (the value of) $z$ demonstrates that the formula must be false (a contradiction).

Suppose it holds at $k=m$, then it holds at $k=m+1$ by the soundness of the steps in the definition of $(m+1)$-contradiction (as described in the previous paragraph).
\end{proof}

We now need a technical lemma showing that we have transitivity not only in 
$0$-proofs but also in $k$-proofs.
\begin{lem}\label{TransitivityLemma}
Suppose 
$R(x_1,\dots,x_n)$ is defined by a $\Sigma_{2k+1}$-formula $\Phi$ over $\Gamma$, 
then there exists a proof of the equality 
$x_{i_1} = x_{i_3}$ from the set of equalities 
$\{x_{i_1} = x_{i_2},x_{i_2} = x_{i_3}\}$.
\end{lem}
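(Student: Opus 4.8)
The plan is to prove the lemma by induction on $k$, following the same recursive structure used to define $k$-proofs. The statement should be read as: for any three variables $x_{i_1},x_{i_2},x_{i_3}$ among the free variables $x_1,\dots,x_n$ of $R$, there is a $k$-proof in $\Phi$ of $x_{i_1}=x_{i_3}$ from $E=\{x_{i_1}=x_{i_2},\,x_{i_2}=x_{i_3}\}$.

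For the base case $k=0$, the relation $R$ is defined by a $\Sigma_1$-formula and a $0$-proof is merely a sequence of equalities closed under the four listed derivation rules. Here I would simply exhibit the three-term sequence
$$P=\bigl(x_{i_1}=x_{i_2},\ x_{i_2}=x_{i_3},\ x_{i_1}=x_{i_3}\bigr):$$
the first two entries are justified by membership in $E$ (rule~1 of the definition of $0$-proof), and the last by transitivity from the first two (rule~2); since $x_{i_1}=x_{i_3}$ is an equality on $\{x_1,\dots,x_n\}$, this is a valid $0$-proof from $E$.

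For the inductive step, suppose the statement holds at level $k-1$ and let $R(x_1,\dots,x_n)$ be defined by the $\Sigma_{2k+1}$-formula $\Phi=\exists x_{n+1}\cdots\exists x_{n+m}\,\forall u_1\cdots\forall u_t\; R'(x_1,\dots,x_{n+m},u_1,\dots,u_t)$, where $R'$ is defined by a $\Sigma_{2k-1}$-formula $\Phi'$. The key observation is that $x_{i_1},x_{i_2},x_{i_3}$, being among $x_1,\dots,x_n$, are in particular among the free variables $x_1,\dots,x_{n+m},u_1,\dots,u_t$ of $R'$, so the induction hypothesis applies to $R'$ and $\Phi'$ and yields a $(k-1)$-proof $P^0$ in $\Phi'$ of $x_{i_1}=x_{i_3}$ from $\{x_{i_1}=x_{i_2},x_{i_2}=x_{i_3}\}$. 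I would then take the one-step $k$-proof $P=(o_1)$ with $o_1=(e_1,E_1^u,P_1^0)$, where $e_1$ is the equality $x_{i_1}=x_{i_3}$, $E_1^u=\varnothing$, and $P_1^0=P^0$. Verifying the clauses in the definition of $k$-proof is then routine: $e_1$ is an equality on $\{x_1,\dots,x_{n+m}\}$ (indeed on $\{x_1,\dots,x_n\}$, which is what is required of the last entry), the empty $E_1^u$ vacuously satisfies its side conditions, and $P_1^0$ is by construction a $(k-1)$-proof in $\Phi'$ of $e_1$ from $E\cup E_1^u\cup\varnothing=\{x_{i_1}=x_{i_2},x_{i_2}=x_{i_3}\}$.

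I do not expect a genuine obstacle here: the lemma is essentially the bookkeeping fact that transitivity is a derived rule at every level, so that later constructions may freely insert ``transitivity steps'' into $k$-proofs. The only point needing a little care is to phrase and apply the induction hypothesis with enough generality — for arbitrary triples of free variables of the relation under consideration — so that it remains applicable after descending from $\Phi$ to $\Phi'$; this is automatic, since the set of free variables only grows when one strips off the outer quantifier blocks.
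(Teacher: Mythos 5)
Your proof is correct and follows exactly the paper's argument: induction on $k$, with the base case read off from rules 1 and 2 of the definition of a $0$-proof, and the inductive step given by the one-step $k$-proof $(x_{i_1}=x_{i_3},\varnothing,P_1^0)$ where $P_1^0$ is the $(k-1)$-proof of transitivity in $\Phi'$ supplied by the induction hypothesis. The extra care you take in noting that the induction hypothesis must be stated for arbitrary triples of free variables, and that $x_{i_1},x_{i_2},x_{i_3}$ remain free variables of $R'$, is a detail the paper leaves implicit but adds nothing beyond its proof.
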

\begin{proof}
We prove by induction on $k$.
For $k=0$ it follows from the definition. 
for larger $k$ take a one step proof 
$(x_{i_1}=x_{i_3},\varnothing, P_{1}^{0})$, where 
$P_{1}^{0}$ is a $(k-1)$-proof of transitivity, 
which exists by the inductive assumption.
\end{proof}

We now give our principal completeness theorem.
\begin{thm}\label{ExistenceOfAProof}
Suppose $R(x_1,\dots,x_n)$ is defined by a $\Sigma_{2k+1}$-formula $\Phi$ over $\Gamma:=(\mathbb{N};x=y\rightarrow u=v, x=y)$, 
$R(a_1,\dots,a_n)$ does not hold, 
and $E$ is the set of all equalities on 
$\{x_1,\dots,x_n\}$ satisfied by the tuple 
$(a_1,\dots,a_n)$.
Then one of the following conditions holds:
\begin{itemize}
    \item there exists 
a $k$-proof in $\Phi$
of some equality $x_{i}=x_{j}$ from the set of equalities $E$ that is not satisfied by 
$(a_1,\dots,a_n)$, i.e. $a_{i}\neq a_{j}$;
\item there exists 
a $k$-proof in $\Phi$
of a contradiction from the set of equalities $E$.
\end{itemize}
\end{thm}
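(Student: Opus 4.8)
The plan is to prove Theorem~\ref{ExistenceOfAProof} by induction on $k$, mirroring the nested structure of the $\Sigma_{2k+1}$-formula and the nested definition of $k$-proofs. The base case $k=0$ is the familiar fact about equality constraint satisfaction: if $R$ is pp-defined by a conjunction of atoms $x=y$ and implications $e\rightarrow e'$, then failure of $R(a_1,\dots,a_n)$ must be witnessed by a concrete chain of transitivity steps and constraint applications that forces an equality not holding in $(a_1,\dots,a_n)$, or forces two existentially quantified variables to be simultaneously equal and unequal to something (a $0$-proof can only produce equalities, so the "contradiction" alternative is vacuous at $k=0$, which matches the remark that there is no $0$-proof of a contradiction). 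Concretely, in the base case I would run the natural congruence-closure/saturation procedure on $\Phi$ starting from $E$: repeatedly apply transitivity and fire any implication $e_j\rightarrow e_i$ whose premise has been derived; since $R(a_1,\dots,a_n)$ fails, the minimal equivalence relation on $\{x_1,\dots,x_{n+m}\}$ extending $E$ and closed under the constraints of $\Phi$ must identify two variables $x_i,x_j$ among the free ones with $a_i\neq a_j$, and the saturation trace is the desired $0$-proof.

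For the inductive step, suppose the statement holds for $k-1$ and $R$ is defined by $\exists \vect{x}'\,\forall \vect{u}\, R'$ where $R'$ is defined by a $\Sigma_{2k-1}$-formula $\Phi'$. The key move is to think about \emph{why} $R(a_1,\dots,a_n)$ fails: for every choice of values for the block $x_{n+1},\dots,x_{n+m}$, there is some adversarial choice of $u_1,\dots,u_t$ making $R'$ fail. I would build the $k$-proof greedily, step by step, maintaining a growing set of derived equalities $\{e_1,\dots,e_{i-1}\}$ on $\{x_1,\dots,x_{n+m}\}$; at each step I consider the "current" assignment to the existential block consistent with $E\cup\{e_1,\dots,e_{i-1}\}$ but otherwise putting everything in fresh, distinct values. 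Since $R'$ fails on this assignment for some $\vect{u}$, the inductive hypothesis applied to $\Phi'$ (with the free variables now being $x_1,\dots,x_{n+m}$ together with whatever $u$-equalities the adversary's assignment satisfies, recorded as $E_i^u$) yields either a $(k-1)$-proof of a new equality $e_i$ on $\{x_1,\dots,x_{n+m}\}$ not currently forced, or a $(k-1)$-proof of a contradiction, or a $(k-1)$-proof of an equality $u_i=z$ that the adversary can refuse. The first case lets me append the triple $(e_i,E_i^u,P_i^0)$ and continue; the latter two cases let me close off with $o_r$ of type (1) or (2) in the definition of a $k$-proof of a contradiction. Termination is guaranteed because each new $e_i$ strictly refines the equivalence relation on the finite set $\{x_1,\dots,x_{n+m}\}$, so after finitely many steps either we reach a contradiction or the equivalence relation stabilises while still forcing some free-variable equality $x_i=x_j$ with $a_i\neq a_j$ — and by Lemma~\ref{TransitivityLemma} together with the accumulated steps this is exactly a $k$-proof of that equality from $E$.

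The main obstacle, and the place where the argument needs genuine care, is the bookkeeping of the alternating quantifiers: in the $k$-proof the sets $E_i^u$ must each be a consistent partial function on the $u$-variables (each $u_i$ appearing at most once on the left), yet across different steps $i$ the adversary may want incompatible behaviours. I need to argue that the "adversary strategy" the inductive hypothesis hands me at step $i$ can always be packaged into a single $E_i^u$ of the required shape, and that the $(k-1)$-proof $P_i^0$ genuinely proceeds from $E\cup E_i^u\cup\{e_1,\dots,e_{i-1}\}$ — i.e. that the inner failure is witnessed \emph{after} committing to the equalities already derived. This is where one has to be precise that fixing the existential block to "all-distinct except as forced" is the right canonical choice (any other distinctness would only help the existential player), so that a $(k-1)$-failure there transfers to a $k$-failure; I would phrase this via a short lemma that $R(a_1,\dots,a_n)$ fails iff the game where Existential plays the $\vect{x}'$-block and Universal responds with $\vect{u}$ can be won by Universal, and then induct inside that game. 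A secondary subtlety is the third closing case (type (2)), where the derived $u_i=z$ is simply something the universal player declines to satisfy; one must check this is always available when the inner proof produces an equality involving a $u$-variable not already pinned down in $E_i^u$, which is precisely the analogue of the base-case observation recorded in the proof of the second soundness lemma above.
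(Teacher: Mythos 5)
Your proposal is correct and follows essentially the same route as the paper: induction on $k$, the base case via congruence closure and the canonical ``fresh except as forced'' assignment, and the inductive step applying the hypothesis to $\Phi'$ for the adversarial choice of the $u$-block with $E_i^u$ recording first-appearance equalities, so that your three outcomes (new $x$-equality, contradiction, unpinned $u_i=z$) match the paper's Cases 2A, 1 and 2B. The only difference is organisational -- you derive equalities greedily one at a time where the paper saturates first and then argues -- and the subtleties you flag (the shape of $E_i^u$, deriving the remaining satisfied equalities by transitivity, choosing earliest representatives for the closing case) are exactly the ones the paper also has to handle.
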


\begin{proof}
We proceed by induction on $k$.

\textbf{Base}. Suppose $k=0$, then 
$$R(x_1,\dots,x_n)=\exists x_{n+1}\dots\exists 
x_{n+m} \;\; C_1\wedge C_2\wedge \dots \wedge C_{s},$$
where $C_{1},C_2,\dots,C_{s}$ are constraints.
We put in the sequence $P$ all the equalities we can derive. 
If we can derive an equality on $\{x_1,\dots,x_{n}\}$ that is not from $E$, 
then this equality is not satisfied by 
$(a_1,\dots,a_n)$, which completes the proof. 
Assume that we did not derive such an equality.
We need to prove that 
$R$ holds on $(a_1,\dots,a_n)$ to get a contradiction. 
Since we can use transitivity, 
all the variables in $\{x_1,\dots,x_{n+m}\}$ are divided into equivalence classes such that each equality in $P$ is inside one class. 
Note that an equivalence class cannot contain two variables 
$x_{i}$ and $x_{j}$ such that $a_{i}\neq a_{j}$.
Then we send the equivalence class to the value $a_{i}$ if it contains 
a variable $x_{i}$ for $i\in[n]$, 
and we choose a new value for the equivalence class if 
all the variables in it are from $\{x_{n+1},\dots,x_{n+m}\}$.
We claim that this assignment satisfies all the constraints of $\Phi$. 
All the constraints $x_{i}=x_{j}$ are obviously satisfied.
A constraint $x=y\rightarrow u=v$ is satisfied because either 
$x$ and $y$ are from different equivalence classes, or the equality 
$u=v$ should have been added to $P$, and $u,v$ are from one equivalence class.

\textbf{Inductive step}.
Suppose
$R$ is defined by
\[
\begin{array}{ll}
R(x_1,\dots,x_n)= & \exists x_{n+1}\dots\exists 
x_{n+m} \forall u_1\dots \forall u_t \\ & R'(x_1,\dots,x_{n+m},u_1,\dots,u_t),
\end{array}
\]
where $R'$ is defined by $\Sigma_{2k-1}$-formula $\Phi'$ over $\Gamma$.
Again, we add to the sequence $P$ 
of all the equalities on 
$\{x_{1},\dots,x_{n+m}\}$ we can derive using $(k-1)$-proofs.
By Lemma \ref{TransitivityLemma}, we still have transitivity, 
which means that all the variables 
in $\{x_{1},\dots,x_{n+m}\}$ are divided into equivalence classes. 
Again if an equivalence class contains two variables 
$x_{i},x_{j}$ such that $a_{i}\neq a_{j}$, then the first condition of the theorem holds, which finishes the proof. 
Then we send all the variables in the equivalence class to $a_{i}$ if 
it contains $x_{i}$ for $i\in[n]$, 
otherwise we send
all the variables of this class to a new value that never appeared. 
Thus we get an assignment $(x_1,\dots,x_{n+m}) = (b_{1},\dots,b_{n+m}).$
We claim that this assignment satisfies 
\[\forall u_1\dots \forall u_t \;\;\; R'(x_1,\dots,x_{n+m},u_1,\dots,u_t).\]
Assume that it does not, then for some evaluations of 
$(u_1,\dots,u_{t})=(c_1,\dots,c_{t})$ the tuple 
$(b_1,\dots,b_{n+m},c_1,\dots,c_{t})$ is not in  
the relation $R'$.
By $E^{u}$ we denote the set of equalities 
of the form $u_{i} = z$ that are chosen as follows. 
If $c_{i}$ appears in the tuple 
$(b_1,\dots,b_{n+m},c_1,\dots,c_{i-1})$ then we add the equality 
$u_{i} = z$, where $z$ is the variable corresponding to 
the first appearance of $c_{i}$ in the tuple.
By the inductive assumption applied to $R'$ we have one of the following cases (here we cheated a bit because we didn't add all the equalities 
satisfied by the tuple $(b_1,\dots,b_{n+m},c_1,\dots,c_{i-1})$ to $E^{u}$ but they can be derived by transitivity). 

Case 1. There is  
a $(k-1)$-proof of a contradiction, which gives us 
a $k$-proof of a contradiction.

Case 2. There is 
a $(k-1)$-proof of some equality $z=z'$ that is violated by 
our assignment. 
Since we have transitivity, we may choose variables $z$ and $z'$ so that 
variables that appear earlier in the list 
$x_1,\dots,x_{n+m},u_1,\dots,u_t$ do not have the same assignment as $z$ and
$z'$ respectively. 
Consider two cases. 

Case 2A. The equality $z=z'$ is on the set 
$\{x_1,\dots,x_{n+m}\}$. 
This means that this equality connects two different equivalence classes of variables and should have been added to $P$. Contradiction.

Case 2B. The equality is of the form
$u_i = z'$, where 
$z'\in \{x_1,\dots,x_{n+m},u_1,$ $\dots,u_{i-1}\}$ (we can always switch $z$ and $z'$). This gives us a $k$-proof of a contradiction (see item 2 of the definition).
\end{proof}

It remains to evaluate the size of proof.
We assume that we never repeat the equalities $e_{i}$ because it doesn't make any sense. So, we forbid such proofs.


\begin{lem}\label{SizeOfProof}
Suppose $R$ is defined by a $\Sigma_{2k+1}$-formula $\Phi$ over $\Gamma:=(\mathbb{N};x=y\rightarrow u=v, x=y)$, 
$\Phi$ contains $\ell$ different variables, and $P$ is a $k$-proof of
an equality or a contradiction, then 
the length of $P$ is smaller than $10^{k+1}\cdot \ell^{2k+3}$.
\end{lem}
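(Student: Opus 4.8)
\textbf{Approach.} The plan is to prove the bound by induction on $k$, tracking separately (a) how many distinct equalities can ever appear in a proof at level $k$, and (b) how long each nested sub-proof can be, so that the total length is a product of a ``width'' factor times the ``depth'' factor coming from the recursion. The key observation is that a $k$-proof is a sequence of triples $(e_i, E^u_i, P^0_i)$ in which the $e_i$ are never repeated, so the number of steps $r$ is bounded by the number of distinct equalities $e_i$ can be; and each $P^0_i$ is itself a $(k-1)$-proof in $\Phi'$, where $\Phi'$ has at most $\ell$ variables as well (the universally quantified $u_1,\dots,u_t$ are among the $\ell$ variables of $\Phi$). However, a subtlety is that $P^0_i$ is a $(k-1)$-proof \emph{from an enlarged set of equalities}, namely $E \cup E^u_i \cup \{e_1,\dots,e_{i-1}\}$; this enlargement does not increase the variable count, so the inductive bound still applies to each $P^0_i$.

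\textbf{Key steps.} First I would establish the base case $k=0$: a $0$-proof is a non-repeating sequence of equalities on $\{x_1,\dots,x_\ell\}$, of which there are at most $\binom{\ell}{2} < \ell^2 \le \ell^3 < 10 \cdot \ell^3$, matching $10^{0+1}\cdot \ell^{2\cdot 0+3}$. Second, for the inductive step, I would argue that in a $k$-proof $P = (o_1,\dots,o_r)$ the equalities $e_1,\dots,e_r$ are distinct equalities on $\{x_1,\dots,x_{n+m}\}$, hence $r \le \binom{\ell}{2} < \ell^2$ (for a proof of a contradiction there is one extra terminal step $o_r$, so $r \le \ell^2 + 1$, still $< \ell^2 + 1 \le 2\ell^2$ say). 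Third, each $o_i$ consists of the single equality $e_i$, the set $E^u_i$ which has size at most $\ell$ (each $u_j$ appears at most once on the left), and the nested proof $P^0_i$, which by the induction hypothesis has length $< 10^{k}\cdot \ell^{2k+1}$. Fourth, I would bound the total length of $P$ — counting the nested proofs as part of its size — by
\[
r \cdot \bigl(1 + \ell + 10^{k}\cdot \ell^{2k+1}\bigr) < 2\ell^2 \cdot \bigl(2 \cdot 10^{k}\cdot \ell^{2k+1}\bigr) = 4 \cdot 10^{k} \cdot \ell^{2k+3} < 10^{k+1}\cdot \ell^{2k+3},
\]
using $\ell \ge 2$ and $k \ge 0$ to absorb the constants (e.g. $1 + \ell \le \ell^{2k+1}$ when $\ell \ge 2$, $k \ge 1$, and the $k=1$ vs.\ $k=0$ transition is checked directly). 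This closes the induction.

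\textbf{Main obstacle.} The delicate point is making the constants actually work out to the clean bound $10^{k+1}\cdot\ell^{2k+3}$ rather than something like $C^k \ell^{2k+3}$ for an unspecified $C$: one has to be careful whether ``length of $P$'' means just $r$ (the number of top-level steps) or the total number of symbols including all recursively nested sub-proofs. If it is the latter (which is what is needed for the Co-NP membership argument, since the whole proof must be polynomial-size for fixed $k$), then the recursion multiplies a per-step cost of roughly $\ell + |P^0_i|$ across $r < \ell^2$ steps, and one must verify that the factor $10$ is large enough to swallow both the $r \le \ell^2$ branching and the additive $\ell$ from $E^u_i$ and the terminal step at every level. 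A secondary point requiring care is the contradiction case: the extra step $o_r$ (and in sub-case 1, its nested $(k-1)$-proof of a contradiction) must be accounted for, but since $|P^0_r|$ is governed by the same inductive bound, it only changes $r$ by $1$ and is absorbed into the slack between $4\cdot 10^k$ and $10^{k+1}$.
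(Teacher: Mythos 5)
Your proof is correct and follows essentially the same route as the paper's: induction on $k$, bounding the number of top-level steps by roughly $\ell^2$ (using that the equalities $e_i$ are never repeated), charging each step for its $E^u_i$ (at most $\ell$ equalities) plus its nested $(k-1)$-proof, and closing the recurrence $L_\ell(k)\le \ell^2(O(\ell^2\log\ell)+L_\ell(k-1))$ with enough slack in the factor $10$. The only cosmetic difference is that the paper measures length in symbols (hence the $2\log_2\ell+3$ factors) whereas you count items, a distinction you correctly flag and which does not affect the polynomial bound needed for Theorem~\ref{thm:zhuk-co-NP}.
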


\begin{proof}
We denote the upper bound on the length of a $k$-proof 
for a formula with $\ell$ variables by 
$L_{\ell}(k)$.
We prove by induction on $k$. 
For $k=0$ we just list equalities, we can have at most $\ell^2$ equalities, 
thus to write the proof we need 
$\ell^2\cdot (2\log_{2}\ell+3)\le 10 \ell^{3}$ symbols.

For larger $k$ we again have at most $\ell^2$ equalities, 
but here we also define $E_{i}^{u}$ and a $(k-1)$-proof for each $i$. 
Thus we have 
\begin{align*}
L_{\ell}(k) & \le \ell^2 ((2\log_{2}\ell+3)
+ \ell^2\cdot (2\log_{2}\ell+3) +
L_{\ell}(k-1))\\
& \leq \ell^2 ((2\log_{2}\ell+3)
+ \ell^2\cdot (2\log_{2}\ell+3) +
10^{k}\ell^{2k+1}) \\
& \le
10^{k+1}\ell^{2k+3}
\end{align*}
\end{proof}


\begin{thm}
$\Sigma_{2k+1}$-$\QCSP(\mathbb{N};x=y\rightarrow u=v)$
is in Co-NP.
\label{thm:zhuk-co-NP}
\end{thm}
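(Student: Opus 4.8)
The plan is to exhibit a polynomial-size, polynomial-time-checkable certificate of falsity for any \emph{no}-instance, which establishes Co-NP membership. Given a $\Sigma_{2k+1}$-sentence $\Psi = \forall w_1\dots\forall w_p\, R(w_1,\dots,w_p)$ of $\QCSP(\mathbb{N};x=y\rightarrow u=v)$, where $R$ is defined by a $\Sigma_{2k+1}$-formula $\Phi$ over $\Gamma:=(\mathbb{N};x=y\rightarrow u=v, x=y)$, observe that $\Psi$ is \emph{false} precisely when Universal has a winning move on the outermost block, i.e. when there is an assignment $(a_1,\dots,a_p)$ to $w_1,\dots,w_p$ such that $R(a_1,\dots,a_p)$ does not hold. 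Since the only structure that matters about such an assignment is which equalities among the $w_i$ it satisfies (equality constraints cannot distinguish tuples with the same equality type), we may take the certificate to be: (i) an equivalence relation $E$ on $\{w_1,\dots,w_p\}$ describing the equality type of Universal's move — realised concretely, say, by assigning distinct natural numbers to distinct classes — together with (ii) a $k$-proof in $\Phi$, of either an equality on $\{w_1,\dots,w_p\}$ not implied by $E$, or of a contradiction, from the set of equalities $E$.

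The key steps are then as follows. First, by Theorem~\ref{ExistenceOfAProof} (completeness), whenever $R(a_1,\dots,a_n)$ fails and $E$ is the full set of equalities satisfied by $(a_1,\dots,a_n)$, one of the two kinds of $k$-proof exists; so every no-instance admits a certificate of the stated form. Conversely, by the soundness Lemmas (the equality-soundness lemma and the contradiction-soundness lemma), the existence of a $k$-proof of an $E$-unimplied equality, or of a contradiction, from $E$ guarantees that no tuple of the declared equality type lies in $R$; hence Universal's move $(a_1,\dots,a_p)$ genuinely refutes $\Psi$, so every instance with a valid certificate is a no-instance. Second, by Lemma~\ref{SizeOfProof}, any $k$-proof for a formula with $\ell$ variables has length below $10^{k+1}\cdot\ell^{2k+3}$; since $k$ is a fixed constant and $\ell$ is linear in the input size, the certificate has polynomial size. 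Third, the verification that a purported sequence $P$ is indeed a valid $k$-proof is carried out by unwinding the inductive definition: at each level one checks, for every item $(e_i,E_i^u,P_i^0)$, that $E_i^u$ has the required form (each $u_j$ appears at most once as a left variable, each right variable precedes it), and recursively that $P_i^0$ is a valid $(k-1)$-proof of $e_i$ from $E\cup E_i^u\cup\{e_1,\dots,e_{i-1}\}$; at the base level one checks each of the four local conditions for a $0$-proof (membership in $E$, transitivity from earlier items, an atom of $\Phi$, or an implication $e_j\to e_i$ of $\Phi$), and for contradictions the extra final-item conditions. Each such check is local and runs in polynomial time, and the total work is bounded by the (polynomial) size of $P$ times a polynomial factor for locating witnessing indices; for contradiction-certificates one additionally checks that the concrete assignment realising $E$ violates the derived equality $x_i=x_j$, i.e. $a_i\neq a_j$, which is immediate.

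Finally, putting these pieces together: $\QCSP(\mathbb{N};x=y\rightarrow u=v)$ restricted to $\Sigma_{2k+1}$ inputs is in Co-NP because its complement is in NP via the certificate above — guess the equality type $E$ on the outermost universal block together with a $k$-proof, then verify soundness of the proof in polynomial time. The main obstacle is not any single step but making the bookkeeping in Theorem~\ref{ExistenceOfAProof} and Lemma~\ref{SizeOfProof} airtight: one must be careful that the inductively constructed proofs only ever derive equalities on the correct variable sets, that the "cheating" over omitted transitively-derivable equalities noted in the proof of Theorem~\ref{ExistenceOfAProof} is genuinely harmless (this is exactly what Lemma~\ref{TransitivityLemma} is for), and that the recursion depth $k$ being a \emph{fixed} constant is what keeps the $10^{k+1}\cdot\ell^{2k+3}$ bound polynomial — the construction is emphatically not polynomial in $k$. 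Once the soundness and completeness of $k$-proofs are in hand, the Co-NP membership is essentially immediate; modulo the arithmetic, one can also remark that the same argument gives $\Sigma_{2k+1}$-$\QCSP(\mathbb{N};x=y\rightarrow u=v)\in\Pi_1^{\mathrm P}$ uniformly for each fixed $k$, consistent with the bounded-alternation classification in Theorem~\ref{thm:bounded-alternation}.
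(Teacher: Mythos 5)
Your proposal is correct and follows essentially the same route as the paper: certify falsity by a $k$-proof of a contradiction whose existence is guaranteed by Theorem~\ref{ExistenceOfAProof}, whose correctness follows from the soundness lemmas, and whose polynomial size and polynomial-time verifiability follow from Lemma~\ref{SizeOfProof}. The only cosmetic difference is your outer universal block $\forall w_1\dots\forall w_p$, which is vacuous for a genuine $\Sigma_{2k+1}$ sentence (the paper simply treats the sentence as a null-ary relation, so $E=\varnothing$ and only the contradiction case of Theorem~\ref{ExistenceOfAProof} can apply).
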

\begin{proof}
The sentence defines a null-ary relation $R$, which is either true or false. Since the sentence doesn't have free variables, if it is false then 
by Theorem \ref{ExistenceOfAProof}
there should be a $k$-proof of a contradiction. 
By Lemma \ref{SizeOfProof}, the size of this proof is of polynomial size in the size of the sentence. 
It is clear that the proof can be verified in polynomial time. 
Hence, whenever the sentence doesn't hold we have a polynomial-size proof of a contradiction which puts the problem in Co-NP.
\end{proof}

\subsection{Transformation of a quantified conjunctive formula to an equivalent $\Pi_2$-formula.}

Let us commence with a trivial observation about equality languages.
\begin{lem}
If $\Gamma$ is an equality language and $\phi$ is an input for $\QCSP(\Gamma)$ on $n$ variables, then $\phi$ is a yes-instance iff it is a yes-instance when all variable quantification is relativised to the finite subset $[n]$.
\label{lem:trivial-equality}
\end{lem}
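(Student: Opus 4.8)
The plan is to extract the content of the statement from the single structural fact about equality languages: every relation $R$ of $\Gamma$ is first-order definable over $(\mathbb{N};=)$, and hence is a union of equality types of tuples; equivalently, $R$ is preserved by every injection $\mathbb{N}\to\mathbb{N}$, since injections preserve the equality type of a tuple exactly. Consequently, whether a constraint of $\phi$ holds under an assignment depends only on which pairs of its variables receive equal values, not on the particular domain elements involved. Both directions of the equivalence then follow by transferring a winning existential strategy from one game to the other while maintaining a partial bijection between the domain elements actually used.

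For ``yes over $\mathbb{N}$ $\Rightarrow$ yes over $[n]$'', fix a winning strategy $\sigma$ for the existential player (EP) in the game over $\mathbb{N}$. I would describe an EP strategy in the relativised game (played on $[n]$) that simulates $\sigma$ alongside a dynamically maintained partial bijection $\pi$ between the set of values already assigned in the $[n]$-play and the set of values already assigned in the simulated $\mathbb{N}$-play, keeping the invariant that a variable receives value $a$ in the $[n]$-play iff it receives $\pi(a)$ in the $\mathbb{N}$-play. When the universal player plays $a\in[n]$: if $a\in\operatorname{dom}(\pi)$ feed $\pi(a)$ to $\sigma$; otherwise pick a fresh $b\in\mathbb{N}\setminus\operatorname{range}(\pi)$ (possible, $\mathbb{N}$ being infinite), set $\pi(a):=b$, and feed $b$ to $\sigma$. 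When $\sigma$ answers $b'\in\mathbb{N}$: if $b'\in\operatorname{range}(\pi)$ reply $\pi^{-1}(b')$; otherwise pick a fresh $a'\in[n]\setminus\operatorname{dom}(\pi)$, set $\pi(a'):=b'$, and reply $a'$. The one point requiring a count is the last step: at the moment the EP must move, at most $n-1$ variables have been assigned, so $|\operatorname{dom}(\pi)|\le n-1 < n = |[n]|$ and a fresh element of $[n]$ is available. At the end $\pi$ is a bijection between the value sets of the two completed plays, hence an equality-type isomorphism; since the $\mathbb{N}$-play satisfies every constraint ($\sigma$ wins), so does the $[n]$-play, and the relativised instance is a yes-instance.

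The converse ``yes over $[n]$ $\Rightarrow$ yes over $\mathbb{N}$'' is entirely symmetric: start from a winning EP strategy in the relativised game, maintain a partial bijection $\pi$ from used $\mathbb{N}$-values to used $[n]$-values, translate each universal move $b\in\mathbb{N}$ into $[n]$ (reuse $\pi(b)$, or take a fresh element of $[n]\setminus\operatorname{range}(\pi)$, available by the same $|\operatorname{range}(\pi)|\le n-1<n$ bound), and translate the strategy's answers back to $\mathbb{N}$ (always possible since $\mathbb{N}$ is infinite). There is no real obstacle here: the only genuine ingredients are the equality-type invariance of $\Gamma$'s relations and the elementary count ensuring that $[n]$ never runs out of fresh values; everything else is bookkeeping, which is why this can fairly be called a trivial observation.
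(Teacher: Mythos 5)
Your proposal is correct and follows essentially the same route as the paper: a game simulation that maintains a bijection between the elements used in the unrestricted play and those used in the play relativised to $[n]$, justified by the fact that relations of an equality language depend only on the equality type of the assignment. The paper phrases the bookkeeping as repeatedly swapping an out-of-range element with an unplayed element of $[n]$ rather than as a partial bijection with a counting argument, but the two formulations are interchangeable.
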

\begin{proof}
The simplest way to see this is through the game between Universal and Existential. As the game progresses through the quantifier order, we update some bijection on the domain of $\Gamma$, to assume, without loss of generality, that all variables are evaluated within $[n]$. The bijection begins as the identity and is updated whenever an element is played outside of $[n]$ by swapping that element with an unplayed element within $[n]$.
\end{proof}

It is well-known on finite domains that any property that is definable in quantified conjunctive logic is already definable by a $\Pi_2$ formula of quantified conjunctive logic (this was first noted explicitly in \cite{HubieSIGACT}). Zhuk gave another proof of this in \cite{abs-2110-09504}, which we reproduce here mildly modified so as to work also for the case equality languages.

Suppose 
$\Psi = \exists y_1\forall x_1\exists y_2\forall x_2 \dots
\exists y_n\forall x_{n} \Phi$.
We define a transformation 
$\zeta$ that transforms $\Psi$ 
into an equivalent $\Pi_{2}$-formula. 
Note that 
$\zeta(\Psi)$ is of exponential size in the size of $\Psi$. 

For each tuple 
$(a_1,\dots,a_n)\in [2n]^{n}$ 
the formula
$\Phi_{a_1,\dots,a_n}$ is obtained from $\Phi$
by  
\begin{enumerate}
    \item replacement of each variable 
    $y_{i}$ by $y_{i}^{a_1,\dots,a_{i-1}}$ (at $i=1$ keep variable as $y_i$)
    \item replacement of each variable 
    $x_{i}$ by $x_{i}^{a_1,\dots,a_{i}}$
\end{enumerate}

By $\zeta(\Psi)$ we denote the sentence 
such that 
\begin{enumerate}
    \item its quantifier-free part is
$\bigwedge\limits_{(a_{1},\dots,a_{n})\in [2n]^{n}}
\Phi_{a_1,\dots,a_n}$
\item first we universally quantify the $x$-variables, 
then we existentially quantify the $y$-variables.
\end{enumerate}

Let us show that 
$\Psi$ and $\zeta(\Psi)$ are equivalent.

\begin{lem}\label{PsiImpliesZetaPsi}
$\Psi \rightarrow \zeta(\Psi)$.
\end{lem}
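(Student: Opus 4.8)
The plan is to prove $\Psi \rightarrow \zeta(\Psi)$ directly via the game semantics, by showing how a winning strategy for Existential in $\Psi$ yields a winning strategy for Existential in $\zeta(\Psi)$. The key point is that $\zeta(\Psi)$ is just a ``flattened'' version of $\Psi$ in which Universal has been forced to commit to all of his moves up front (the $x$-variables), and for every possible tuple $(a_1,\dots,a_n)$ of such commitments there is a private copy $\Phi_{a_1,\dots,a_n}$ of the matrix with appropriately superscripted variables. Since $\Psi$ is a yes-instance, by Lemma~\ref{lem:trivial-equality} we may assume all play happens inside $[2n]$ (in fact $[n]$ suffices, but $[2n]$ is the index set used in the definition of $\zeta$, and any sufficiently large finite set works), so it is enough to beat every tuple from $[2n]^n$.

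\textbf{Key steps.} First I would fix a winning strategy for Existential in the game on $\Psi$; since by Lemma~\ref{lem:trivial-equality} we may relativise to $[2n]$, this is a finite object: for each $i$ a function $g_i$ taking the history $(b_1,a_1,\dots,b_{i-1},a_{i-1})$ of previously played existential values $b_j$ and universal values $a_j$ to the value $b_i$ that Existential plays for $y_i$. Next, in the game on $\zeta(\Psi)$, Universal plays all the $x$-variables first; read off from this play, for each tuple $\vec a=(a_1,\dots,a_n)\in[2n]^n$, the value $a_i^{\vec a}:=x_i^{a_1,\dots,a_i}$ that Universal assigned to the variable indexed by the length-$i$ prefix of $\vec a$ (note different tuples sharing a prefix agree on these values, which is exactly why the superscripting is by prefixes). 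Now Existential must play the $y$-variables; for the variable $y_i^{a_1,\dots,a_{i-1}}$ she plays $g_i$ applied to the history she would have obtained in $\Psi$ had Universal played $a_1^{(\cdot)},a_2^{(\cdot)},\dots,a_{i-1}^{(\cdot)}$ in turn --- this is well-defined because the $y$-superscript already records the prefix $a_1,\dots,a_{i-1}$ and the recursion bottoms out consistently. Finally I would check that each conjunct $\Phi_{a_1,\dots,a_n}$ is satisfied: by construction the assignment to the superscripted variables in $\Phi_{\vec a}$ coincides, variable by variable, with a full play of the game on $\Psi$ in which Existential followed her winning strategy against the universal moves $a_i^{(\cdot)}$; since $\Phi$ is a conjunction of relations from $\Gamma$ (a primitive operation preserved under this identification) and that play is winning, every constraint of $\Phi_{\vec a}$ holds. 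As $\vec a$ was arbitrary in $[2n]^n$, the whole quantifier-free part $\bigwedge_{\vec a}\Phi_{\vec a}$ is satisfied, so Existential wins in $\zeta(\Psi)$.

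\textbf{Main obstacle.} The one genuinely delicate point is the bookkeeping that makes Existential's strategy in $\zeta(\Psi)$ well-defined and consistent across the many copies $\Phi_{\vec a}$: one must verify that the value she assigns to a shared variable $y_i^{a_1,\dots,a_{i-1}}$ does not depend on which extension $\vec a$ of the prefix $(a_1,\dots,a_{i-1})$ one has in mind, and that the ``simulated history'' in $\Psi$ used to invoke $g_i$ is built from values that have genuinely already been played in $\zeta(\Psi)$ (the $x$-variables, played first) and from her own earlier $y$-moves. This is precisely arranged by the prefix-indexing in the definition of $\zeta$: the $y_i$ superscript sees exactly $a_1,\dots,a_{i-1}$, and the $x_i$ superscript sees $a_1,\dots,a_i$, so the dependency structure of the strategy mirrors the quantifier nesting of $\Psi$. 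Once this is spelled out, the verification that each $\Phi_{\vec a}$ holds is routine, since it is literally a winning play transcribed under a renaming of variables, and renaming preserves satisfaction of any conjunction of relations.
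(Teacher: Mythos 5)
Your proposal is correct and takes essentially the same route as the paper: transport Existential's winning strategy (Skolem functions) for $\Psi$ to $\zeta(\Psi)$ by setting $y_i^{a_1,\dots,a_{i-1}}$ to the strategy's value on the played $x_1^{a_1},\dots,x_{i-1}^{a_1,\dots,a_{i-1}}$, and observe that each $\Phi_{a_1,\dots,a_n}$ is just $\Phi$ under a satisfaction-preserving renaming. Your invocation of Lemma~\ref{lem:trivial-equality} is unnecessary for this direction (the Skolem functions are defined on all of $\mathbb{N}$), but it does no harm.
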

\begin{proof}
Suppose 
$(f_{1}(), x_{1},
f_{2}(x_1), x_{2},
f_{3}(x_1,x_2), x_{3},...,)$ is a solution of $\Psi$.
Let us show that a solution of $\zeta(\Psi)$ can be defined by 
$$y_{i}^{a_{1},\dots,a_{i-1}} = 
f_{i}(x_{1}^{a_1},x_{2}^{a_1,a_2},x_{3}^{a_1,a_2,a_{3}},
\dots,x_{i-1}^{a_{1},\dots,a_{i-1}}).$$
In fact, each 
conjunctive formula $\Phi_{a_1,\dots,a_n}$
holds as it is just $\Phi$ with all the variables renamed.
\end{proof}

\begin{lem}\label{ZetaPsiImpliesPsi}
$\zeta(\Psi)\rightarrow \Psi $.
\end{lem}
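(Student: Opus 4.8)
\textbf{Plan for the proof of Lemma~\ref{ZetaPsiImpliesPsi}.}
The plan is to extract a winning strategy for the existential player in $\Psi$ from a solution of $\zeta(\Psi)$, using Lemma~\ref{lem:trivial-equality} to restrict attention to plays inside a finite set. Since $\Psi$ has $2n$ variables in total, Lemma~\ref{lem:trivial-equality} tells us that $\Psi$ is a yes-instance iff the existential player can win when the universal moves $x_1,\dots,x_n$ are all drawn from $[2n]$; so it suffices to beat all such plays. Fix a solution of $\zeta(\Psi)$, i.e.\ values $y_i^{a_1,\dots,a_{i-1}}$ (for all $(a_1,\dots,a_n)\in[2n]^n$, with the first-level variable simply called $y_1$) making every conjunct $\Phi_{a_1,\dots,a_n}$ true for every choice of the universally quantified $x$-variables $x_i^{a_1,\dots,a_i}$.

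The key step is to define the Skolem strategy in $\Psi$ by ``playing along the diagonal.'' When the universal player plays $x_1=a_1,\dots,x_{i}=a_{i}$ in $\Psi$ (each $a_j\in[2n]$), the existential player responds, at the $(i{+}1)$-st existential move, with $y_{i+1}:=y_{i+1}^{a_1,\dots,a_i}$ taken from the fixed solution of $\zeta(\Psi)$; at the first move she plays $y_1$. I would then argue that for any complete play $a_1,\dots,a_n$ of the universal player, the resulting assignment to the variables of $\Phi$ coincides exactly with the assignment witnessing the truth of the single conjunct $\Phi_{a_1,\dots,a_n}$ of $\zeta(\Psi)$, under the substitution $x_i^{a_1,\dots,a_i}\mapsto a_i$ and $y_i^{a_1,\dots,a_{i-1}}\mapsto$ (the value chosen by the strategy). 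Since $\Phi_{a_1,\dots,a_n}$ is literally $\Phi$ with variables renamed injectively, and since that renamed formula holds in the solution of $\zeta(\Psi)$ for the relevant $x$-values, the original $\Phi$ holds under the strategy's play. Hence the strategy is winning and $\Psi$ is true.

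The one subtlety — and the step I expect to need the most care — is the interaction between the two reductions, i.e.\ checking that the strategy is \emph{well defined and consistent}: the response $y_{i+1}^{a_1,\dots,a_i}$ must depend only on the universal moves seen so far, which it does by construction, and the tuple $(a_1,\dots,a_i)$ must lie in the index set $[2n]^i$ over which $\zeta(\Psi)$ quantifies its $y$-variables, which is exactly what Lemma~\ref{lem:trivial-equality} secures. One should also note that $\zeta(\Psi)$'s universal $x$-variables are quantified outermost and independently for each index tuple, so instantiating $x_i^{a_1,\dots,a_i}$ to $a_i$ (and the ``off-diagonal'' copies to anything, say to the same values) is a legitimate specialisation of the universally quantified sentence; the conjunct $\Phi_{a_1,\dots,a_n}$ therefore still holds under this specialisation. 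Putting these observations together yields the implication $\zeta(\Psi)\rightarrow\Psi$.
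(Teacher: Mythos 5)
Your proposal is correct and follows essentially the same route as the paper: relativise $\Psi$ to $[2n]$ via Lemma~\ref{lem:trivial-equality}, specialise the universal variables of $\zeta(\Psi)$ by $x_i^{a_1,\dots,a_i}=a_i$, read off the Skolem functions $f_i(a_1,\dots,a_{i-1})=y_i^{a_1,\dots,a_{i-1}}$, and verify $\Phi$ on the play $(a_1,\dots,a_n)$ by inspecting the conjunct $\Phi_{a_1,\dots,a_n}$. The well-definedness point you flag (that the response depends only on the moves seen so far, by the indexing of the $y$-variables) is exactly the implicit content of the paper's definition of $f_i$.
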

\begin{proof}
Owing to Lemma~\ref{lem:trivial-equality}, we may assume that all variable quantification of $\Psi$ is relativised to the set $[2n]$.
Set 
$x_{i}^{a_{1},\dots, a_{i}} = 
a_{i}$ for every $i$ and 
$a_{1},\dots, a_{i}\in [2n]$.
Since 
$\zeta(\Psi)$ holds, 
for each assignment of 
the $x$-variables there exists 
a proper assignment of the $y$-variables. 
We define 
a solution to $\Psi$ by 
$(f_{1}(), x_{1},
f_{2}(x_1), x_{2},
f_{3}(x_1,x_2), x_{3},...,)$, 
where 
$f_{i}(a_{1},\dots, a_{i-1}) = 
y_{i}^{a_{1},\dots, a_{i-1}}$.
To check that the conjunctive formula $\Phi$ holds
for $(x_1,\dots,x_{n})=(a_1,\dots,a_n)$ 
we consider $\Phi_{a_1,\dots,a_{n}}$.
\end{proof}
Thus, we have proved the following.
\begin{cor}
Let $\Gamma$ be an equality language. Any property that is definable on $\Gamma$ in quantified conjunctive logic is already definable by a $\Pi_2$ formula of quantified conjunctive logic.
\label{cor:pi2}
\end{cor}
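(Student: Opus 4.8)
The plan is to observe that Lemmas~\ref{PsiImpliesZetaPsi} and~\ref{ZetaPsiImpliesPsi} already do the substantive work, so that Corollary~\ref{cor:pi2} follows by a short packaging argument. First I would take an arbitrary relation $R$ that is definable on $\Gamma$ by a quantified conjunctive formula. Since the quantifier-free part is a conjunction of atoms, the formula can be brought to prenex form by the usual pulling-out of quantifiers with renaming to avoid capture. If the resulting quantifier prefix is not of the strict alternating shape $\exists y_1\forall x_1\exists y_2\forall x_2\dots\exists y_n\forall x_n$, I would prepend a dummy outermost $\exists$ (in case the prefix starts with $\forall$) and insert vacuous quantifiers over fresh variables until it has exactly this shape; adding quantifiers over variables that do not occur does not change the defined relation. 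So without loss of generality the defining formula is $\Psi=\exists y_1\forall x_1\exists y_2\forall x_2\dots\exists y_n\forall x_n\,\Phi$ with $\Phi$ a conjunction of relations from $\Gamma$.

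Next I would apply the transformation $\zeta$ defined above to $\Psi$. By construction $\zeta(\Psi)$ quantifies all of its $x$-variables universally and only afterwards all of its $y$-variables existentially, and its quantifier-free part $\bigwedge_{(a_1,\dots,a_n)\in[2n]^n}\Phi_{a_1,\dots,a_n}$ is again a conjunction of relations from $\Gamma$; hence $\zeta(\Psi)$ is syntactically a $\Pi_2$ formula of quantified conjunctive logic over $\Gamma$. The fact that $\zeta(\Psi)$ has size exponential in $|\Psi|$ is irrelevant here, since we are only asserting definability and not a polynomial-time reduction.

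Finally, Lemma~\ref{PsiImpliesZetaPsi} gives $\Psi\rightarrow\zeta(\Psi)$ and Lemma~\ref{ZetaPsiImpliesPsi} gives $\zeta(\Psi)\rightarrow\Psi$, so $\Psi$ and $\zeta(\Psi)$ define the same relation $R$. Therefore $R$ is defined by a $\Pi_2$ formula of quantified conjunctive logic over $\Gamma$, which is what the corollary claims.

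I expect the only genuinely delicate point to be the one already absorbed into Lemma~\ref{ZetaPsiImpliesPsi}, namely that over the infinite domain of $\Gamma$ the universal player has infinitely many available moves, so a finite conjunction indexed by $[2n]^n$ cannot a priori record every universal play. This is exactly what Lemma~\ref{lem:trivial-equality} repairs: relativising the game for $\Psi$ to the finite set $[2n]$ is without loss of generality, and once this is done the finitely many instances $\Phi_{a_1,\dots,a_n}$ cover all relevant universal strategies, so the Skolem functions witnessing $\zeta(\Psi)$ can be read off directly as Skolem functions witnessing $\Psi$. Everything else in the argument is routine renaming of variables.
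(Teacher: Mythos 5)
Your proposal is correct and matches the paper's own argument: the paper derives Corollary~\ref{cor:pi2} exactly by combining the construction of $\zeta$ with Lemmas~\ref{PsiImpliesZetaPsi} and~\ref{ZetaPsiImpliesPsi}, with Lemma~\ref{lem:trivial-equality} supplying the relativisation to $[2n]$ that makes the finite conjunction sufficient over an infinite domain. The normalization to the alternating prenex shape via dummy quantifiers is the same routine step the paper leaves implicit.
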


\subsection{The remaining cases}

Let us examine the proof of Theorem 9.1 in \cite{BodirskyC10} which proves that the $\Sigma_3$-QCSP$(\mathbb{N};x=y\rightarrow y=z)$ is Co-NP-hard. The reduction is from the complement of \emph{monotone 3-SAT} and the constructed instances of $\Sigma_3$-QCSP$(\mathbb{N};x=y\rightarrow y=z)$ have quantifier prefix $\exists^2\forall^*\exists^*$. Indeed, the first two existential variables are $b_0$ and $b_1$ and are constrained to be distinct. Suppose they are instead universally quantified, then the proof still works as we shall now see.
\begin{thm}[\cite{BodirskyC10}]
$\Pi_2$-QCSP$(\mathbb{N};x=y\rightarrow y=z)$ is Co-NP-hard.
\label{thm:co-np-hard}
\end{thm}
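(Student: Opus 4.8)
The plan is to revisit the reduction in the proof of Theorem~9.1 of \cite{BodirskyC10} and verify that it survives when the two outermost existential variables $b_0,b_1$ (together with the constraint $b_0\neq b_1$) are replaced by universally quantified variables, so that the prefix becomes $\forall^2\forall^*\exists^*$, which after merging the two blocks of universal quantifiers is a genuine $\Pi_2$ prefix. Recall that the relation $b_0\neq b_1$ is quantified-conjunctive definable over $I$ via $\forall z\, I(b_0,b_1,z)$, so the instance really does lie in $\Pi_2$-QCSP$(\mathbb{N};x=y\rightarrow y=z)$. First I would recall the original reduction from the complement of monotone $3$-SAT: a monotone $3$-CNF formula is satisfiable iff the constructed $\Sigma_3$ sentence is false; the first moves assign distinct "truth values" $b_0,b_1$, the $\forall^*$ block corresponds to the universally chosen assignment to the Boolean variables, and the innermost $\exists^*$ block plus the $I$-constraints check that some clause fails.

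The key observation making the change of quantification harmless is the same phenomenon exploited in Lemma~3 ($\Phi\to\Psi$) of the previous section: if Universal ever plays $b_0=b_1$, the Existential player can trivially satisfy every $I$-constraint by collapsing all remaining variables to that common value, so the "interesting" plays are exactly those in which $b_0\neq b_1$, and on those plays the behaviour is identical to the original reduction where $b_0\neq b_1$ was enforced by fiat. So the steps would be: (i) state the modified instance explicitly, with $b_0,b_1$ universal and the prefix collapsed to $\Pi_2$; (ii) argue that if the monotone $3$-SAT instance is unsatisfiable then the modified sentence is false, by exhibiting, for the Universal choice $b_0\neq b_1$ together with the assignment witnessing unsatisfiability of some clause, a Universal strategy that defeats every Existential response — this is literally the original argument; (iii) conversely, argue that if the sentence is false then there is a Universal winning strategy, which cannot be the degenerate $b_0=b_1$ one, hence must set $b_0\neq b_1$ and then, by the original correctness proof, yields a satisfying assignment for the monotone $3$-SAT instance — again the original argument. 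Finally (iv) note the reduction is clearly polynomial-time computable, so Co-NP-hardness transfers.

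The main obstacle — really the only thing needing care — is checking that no Existential strategy can exploit the newly universal variables $b_0,b_1$ in a play where $b_0\neq b_1$ to escape the clause-checking gadget; that is, one must confirm that the $I$-constraints in \cite{BodirskyC10} only ever compare against $b_0$ and $b_1$ as opaque distinct elements and never rely on them being, say, $0$ and $1$ specifically, so that the argument is invariant under the bijection trick of Lemma~\ref{lem:trivial-equality}. Since the gadgets of \cite{BodirskyC10} are built purely from the relation $I$ (which is preserved by all permutations of $\mathbb{N}$), this invariance is automatic, and the verification reduces to transcribing the original case analysis. I would therefore present the proof as: recall the reduction, relocate the two quantifiers, invoke the degenerate-collapse observation to dispose of the $b_0=b_1$ plays, and cite the correctness proof of \cite[Theorem~9.1]{BodirskyC10} verbatim for the $b_0\neq b_1$ plays.
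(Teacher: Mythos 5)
Your proposal matches the paper's proof: the paper likewise takes the Theorem~9.1 reduction from the complement of monotone 3-SAT, requantifies $b_0,b_1$ universally (dropping the disequality constraint entirely, so the $\forall z\,I(b_0,b_1,z)$ gadget is not even needed), observes that Existential wins trivially when $b_0=b_1$ by collapsing all existential variables onto that common value, and reruns the original correctness argument for the plays with $b_0\neq b_1$. One caveat: in your step (ii) the direction is mis-stated --- it is \emph{satisfiability} of $\phi$ that makes the sentence false, with Universal playing the satisfying assignment to force every $N_h=b_0$ and $P_h=b_1$ and hence violate $N'_l=P'_m$ --- but since this is consistent with your own opening statement of the equivalence and with step (iii), it is a slip of wording rather than a gap.
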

\begin{proof}
We reproduce the construction from \cite{BodirskyC10} with a small amendment as to how we treat the special variables $b_0$ and $b_1$. We reduce from the complement of the problem \emph{Monotone 3-Satisfiability} (MON-3-SAT). In this version of $3$-satisfiability all clauses are either purely negative or purely positive. Let $\phi$ be an instance of MON-3-SAT with variables $v_1,\ldots,v_n$, $l$ negative clauses and $m$ positive clauses. We build an instance $\psi$ of $\Pi_2$-QCSP$(\mathbb{N};x=y\rightarrow y=z)$ with variables $v_1,\ldots,v_n$, $N_1,\ldots,N_l,N'_2,\ldots,N'_l$ and $P_1,\ldots,P_m,P'_2,\ldots,P'_m$, and two additional variables $b_0$ and $b_1$. The instance has quantifier prefix
\[ \forall b_0 \forall b_1 \forall v_1,\ldots,v_n \exists N_1,\ldots,N_l,N'_2,\ldots,N'_l,P_1,\ldots,P_m,P'_2,\ldots,N'_m.\]
The quantifier-free part of $\psi$ is the conjunction of the following:
\begin{enumerate}
    \item $(v_i=b_0 \rightarrow b_0=N_h) \wedge (v_j=b_0 \rightarrow b_0=N_h) \wedge (v_k=b_0 \rightarrow b_0=N_h)$ for each negative clause $N_h=(\neg v_i \vee \neg v_j \vee \neg v_k)$.
    \item $(v_i=b_1 \rightarrow b_1=P_h) \wedge (v_j=b_1 \rightarrow b_1=P_h) \wedge (v_k=b_1 \rightarrow b_1=P_h)$ for each positive clause $P_h=(v_i \vee v_j \vee v_k)$.
    \item $(N_1=N_2\rightarrow N_2=N'_2) \wedge \bigwedge_{h=3,\ldots,l}(N'_{h-1}=N_h\rightarrow N_h=N'_h)$.
    \item $(P_1=P_2\rightarrow P_2=P'_2) \wedge \bigwedge_{h=3,\ldots,m}(P'_{h-1}=P_h\rightarrow P_h=P'_h)$.
    \item $N'_l=P'_m$.
\end{enumerate}
(We assume that the formula $\phi$ contains at least two positive clauses and two negative clauses.) 

($\phi$ is a yes-instance of MON-3-SAT implies $\psi$ is a no-instance of our QCSP.)
Let us assume that $b_0$ and $b_1$ are evaluated to different elements. Assume that the original formula is satisfiable with some assignment $f:\{v_1,\ldots,v_n\}\rightarrow \{0,1\}$. Consider the induced assignment to the remaining universally quantified variables when $v_i$ is set to $b_{f(v_i)}$. Since $f$ satisfies all of the clauses, it must holds that $N_h=b_0$ for all $h \in [l]$ and $P_h=b_1$ for all $h \in [m]$. It must further hold that $N'_h=b_0$ for all $h\in \{2,\ldots,l\}$ and that $P'_h=b_1$ for all $h\in \{2,\ldots,m\}$. But this implies that $N'_l\neq P'_m$ and we have that the formula $\psi$ is false.

($\phi$ is a no-instance of MON-3-SAT implies $\psi$ is a yes-instance of our QCSP.) Now consider that $\phi$ is false. If $b_0$ and $b_1$ are evaluated to the same element, then $\psi$ will be true. Let us now assume that they are evaluated to different elements. Consider some assignment $g:\{v_1,\ldots,v_n\}\rightarrow \{0,1\}$. Let us say that a negative clause $N_h$ is not satisfied by $g$ if none of its variables is set by $g$ to be $b_0$; similarly, let us say that a positive clause $P_h$ is not satisfied by $g$ is none of its variables is set by $g$ to be $b_1$. By assumption there exists a negative clause not satisfied by $g$ or a positive clause not satisfied by $g$. Let us discuss the negative case as the positive case is dual. If it is a negative clause $N_h$, then all constraints in (1) above can be satisfied without setting $N_h$ to $b_0$. It follows that the constraints in (3) can be satisfied with $N'_l$ set to $b_1$. Now setting all of the variables $P_h$ and $P'_h$ to $b_1$ results in all the constraints being satisfied. It follows that the formula $\psi$ is true.
\end{proof}
\begin{cor}
Suppose that $x=y\rightarrow y=z$ is definable in an equality language $\Gamma$ in a $\Pi_2$ formula of quantified conjunctive logic. Then, $\Pi_2$-QCSP$(\Gamma)$ is Co-NP-hard.
\label{cor:co-np-hard-2}
\end{cor}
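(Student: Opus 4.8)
The plan is to give a polynomial-time reduction from the complement of MON-3-SAT to $\Pi_2$-QCSP$(\Gamma)$, building on the construction in the proof of Theorem~\ref{thm:co-np-hard}. Given an instance $\phi$, let $\psi$ be the $\Pi_2$-QCSP$(\mathbb{N};I)$ instance produced from $\phi$ there, so that $\phi$ is a no-instance of MON-3-SAT iff $\psi$ is a yes-instance. Fix the hypothesised $\Pi_2$ quantified conjunctive definition $I(x,y,z)\equiv\forall u_1\cdots\forall u_p\,\exists w_1\cdots\exists w_q\;\theta(x,y,z,\bar u,\bar w)$ of $I$ over $\Gamma$, with $\theta$ a conjunction of relations of $\Gamma$; note that the one equality constraint $N'_l=P'_m$ appearing in $\psi$ likewise has the definition $\forall z\,I(z,N'_l,P'_m)$ over $I$, and hence a $\Pi_2$ definition over $\Gamma$ after substituting $\theta$. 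We obtain a $\Pi_2$-QCSP$(\Gamma)$ instance $\psi^{\dagger}$ by replacing every atom of $\psi$ by a fresh copy $\forall\bar u^{(j)}\exists\bar w^{(j)}\,\theta_j$ of its definition and re-prenexing: all blocks $\forall\bar u^{(j)}$ are collected into the single outer universal block and all blocks $\exists\bar w^{(j)}$ into the outer existential block. This is clearly polynomial-size, and it remains to show that $\psi$ and $\psi^{\dagger}$ are equivalent.

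The direction $\psi\to\psi^{\dagger}$ is routine. A winning strategy for Existential in $\psi$ is replayed in $\psi^{\dagger}$: she assigns the $\bar y$-variables exactly as before, ignoring the new universal variables $\bar u$; then for each $j$ the triple $t_j$ she has produced lies in $I$, so $\forall\bar u^{(j)}\exists\bar w^{(j)}\,\theta_j$ holds and she answers Universal's move $\bar u^{(j)}$ with suitable witnesses $\bar w^{(j)}$. The converse is the substantial part. The one point to worry about is that in $\psi^{\dagger}$ Universal must commit to $\bar u$ \emph{before} Existential answers with $\bar y$, whereas in $\psi$ he may react to $\bar y$; the claim is that this costs him nothing. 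By Lemma~\ref{lem:trivial-equality} all quantification may be relativised to a finite set $[M]$, and by homogeneity of $(\mathbb{N};=)$ whether $\theta_j$ is satisfiable on given values of $t_j$ and $\bar u^{(j)}$ depends only on their equality type, so only finitely many cases arise; in particular there is a fixed ``worst-case'' equality pattern for $\bar u^{(j)}$ witnessing that $\theta_j$ fails on a triple of the unique bad type $x=y\neq z$.

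It then suffices to check that Universal can realise a witness pattern that works uniformly, without having seen $\bar y$. The natural choice is for Universal to play every copy $\bar u^{(j)}$ on completely fresh values — distinct from one another and from everything played so far — so that, by homogeneity, all the $\bar u^{(j)}$ realise a ``generic'' equality type relative to the remaining variables. One then checks, exactly as in the proof of Theorem~\ref{thm:co-np-hard}, that this forces each atom $I(a,b,c)$ for which $a$ and $b$ have been made equal to have $c$ equal to them as well (or, if the definition is only ``loose'', into a restricted set from which any other value is eventually refuted by the chain constraints (3)--(4) together with the final equality constraint (5)), so that the propagation argument of Theorem~\ref{thm:co-np-hard} goes through verbatim and Existential loses $\psi^{\dagger}$ whenever she loses $\psi$. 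This gives $\psi\to\psi^{\dagger}$, hence $\psi\leftrightarrow\psi^{\dagger}$, and therefore $\phi$ is a no-instance of MON-3-SAT iff $\psi^{\dagger}$ is a yes-instance, so $\Pi_2$-QCSP$(\Gamma)$ is Co-NP-hard. The step I expect to be the main obstacle is this last verification — showing that a generic choice of the definition's universal witnesses is genuinely a worst case — which may require first replacing the given $\Pi_2$ definition of $I$ by a normal form having this property, since a priori committing to the definition's universal witnesses before the existential move can only help Existential.
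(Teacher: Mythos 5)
Your construction coincides with the paper's: substitute the $\Pi_2$ definition of $I$ into each atom of the instance $\psi$ from Theorem~\ref{thm:co-np-hard} and re-prenex, pulling the definition's universal variables $\bar u^{(j)}$ out past the block $\exists \bar N,\bar P$. You also correctly isolate the only delicate point: this exchange of $\exists\bar N\bar P$ and $\forall\bar u$ can only help Existential, so one must check that falsity is preserved when $\phi$ is satisfiable. The paper handles this in one line by asserting that the values of $N_h,N'_h,P_h,P'_h$ are forced by the outermost universal quantifiers; you instead propose that Universal play every $\bar u^{(j)}$ on fresh, pairwise distinct values and that genericity makes this a worst case.

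That genericity claim is false, and the obstacle you flag in your last sentence is a genuine gap, not a formality. Consider the $\Pi_2$ definition $I(x,y,z)\equiv\forall u\,\exists w\,\bigl(I(u,z,w)\wedge I(x,y,w)\bigr)$ over $\Gamma=\{I\}$: if $x\neq y$ the second conjunct is vacuous and $w:=z$ works; if $x=y$ the second conjunct forces $w=y$ and the first is then violated exactly when $u=z\neq y$; so this does define $I$, and for a bad triple $x=y\neq z$ the \emph{only} falsifying value of $u$ is $z$ itself. In $\psi$ the third argument of every $I$-atom is one of the existentially quantified variables $N_h,N'_h,P_h,P'_h$, so in the prenexed $\Pi_2$ instance Universal must commit to each $u^{(j)}$ before that argument is played. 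Existential then evaluates all of $N_h,N'_h,P_h,P'_h$ to a single fresh common value avoiding every $u^{(j)}$: each conjunct $I(u^{(j)},\cdot,w^{(j)})$ becomes vacuous, each conjunct $I(\cdot,\cdot,w^{(j)})$ has a witness $w^{(j)}$, and $N'_l=P'_m$ holds, so the $\Pi_2$ instance is true even when $\phi$ is satisfiable. Hence a generic choice of $\bar u^{(j)}$ is not a worst case, and for this particular definition \emph{no} Universal strategy recovers the reduction. The missing content is exactly what you defer to: one must either normalise or judiciously choose the $\Pi_2$ definition so that a falsifying instantiation of $\bar u$ exists whose equality type does not reference the atom's existentially quantified argument, or else make rigorous the paper's forcing argument (which, as the same example shows, also cannot be applied to an arbitrary $\Pi_2$ definition without further work). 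As written, your proof is incomplete at its central step.
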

\begin{proof}
Let us follow the proof of Theorem~\ref{thm:co-np-hard}. It is clear to see that we can plug in $\Pi_2$ definitions of $x=y\rightarrow y=z$ innermost and then commute the universal quantifiers across the conjunction to obtain an instance that is at worst $\Pi_4$. Let us now argue that we can draw all the existential quantifiers, over the $N_h,N'_h,P_h,P'_h$, innermost to result in an instance that is $\Pi_2$. We need to check that in the case that $\phi$ is a yes-instance of MON-3-SAT, that this still implies $\psi$ is a no-instance of our QCSP, even when the existential quantifiers are further in (which could advantage Existential). But indeed this is true because they have been forced by the outermost universal quantifiers.
\end{proof}

A cursory examination of the proof of Corollary~\ref{cor:main}, where one starts with a universal sentence of \emph{Quantified-3-Satisfiability}, shows that it produces a $\Pi_2$ instance of QCSP$(\mathbb{N};x=y\rightarrow y=z)$. However, the universal fragment of Quantified-3-Satisfiability is in P (since the quantifier free part is in 3-CNF).

We now develop some lemmas before addressing the case of maximal complexity, that is, rising in the polynomial hierarchy.
\begin{lem}
$(P_1(x_1,x'_1) \vee \ldots \vee P_m(x_m,x'_m))$, where each $P_i$ asserts either equality or disequality, is pp-definable in $(\mathbb{N};\neq,x=y \vee u=v)$
\label{lem:build-disjunction}
\end{lem}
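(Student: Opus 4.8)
The goal is to express the $m$-fold disjunction $\bigvee_{i=1}^m P_i(x_i,x_i')$, where each $P_i$ is either $=$ or $\neq$, as a pp-formula over $(\mathbb{N};\neq,\,x=y\vee u=v)$. The plan is to handle the binary disjunction first and then bootstrap to arbitrary $m$. The key observation is that the primitive $x=y\vee u=v$ already gives us a disjunction of two equalities; by pp-constructions we want to (a) replace an equality disjunct by a disequality disjunct, and (b) nest the binary case to obtain longer disjunctions.

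First I would dispose of the pure-equality case: $x_1=y_1 \vee \dots \vee x_m=y_m$ is pp-definable by introducing a chain of fresh variables $z_1,\dots,z_{m-1}$ and writing $(x_1=y_1 \vee z_1=z_1')\wedge\dots$, but actually one cannot nest $x=y\vee u=v$ directly because both disjuncts are ``present.'' The correct trick is the standard one: to build $A_1\vee A_2\vee A_3$ from binary disjunctions of equalities, introduce fresh variables and use that $u=v$ can be forced to fail elsewhere. Concretely, for a single disequality disjunct, note $x\neq y$ is a relation in our language, and $(x_1\neq y_1)\vee(x_2=y_2)$ is pp-definable: take the clause $(x_1=y_1\vee w=x_2)\wedge(w=y_2 \vee \text{?})$ — here the main manoeuvre is that $x\neq y$ is available as an atom so we may simply conjoin auxiliary constraints that, together with the existential quantifiers, realise the implication $x_1=y_1 \rightarrow x_2=y_2$, which is exactly a disjunction with one disequality disjunct; and iterating gives all-disequality disjuncts. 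For disjuncts of mixed type one combines these.

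The cleanest route, which I would adopt, is: (1) show $(x_1\neq y_1)\vee(x_2\neq y_2)\vee\dots\vee(x_m\neq y_m)$ is pp-definable — in fact this is just $\neg(x_1=y_1\wedge\dots\wedge x_m=y_m)$, but more usefully, a single clause of all negative literals; since $x\neq y$ itself is in the language and a disjunction of disequalities can be obtained by pp-defining it from $\neq$ and $x=y\vee u=v$ via the contrapositive of a conjunction of equalities; (2) show a clause with exactly one positive literal, $(x_1=y_1)\vee(x_2\neq y_2)\vee\dots$, i.e. an implication-type constraint, is pp-definable by combining $x=y\vee u=v$ with the forcing of auxiliary variables; (3) observe that a general clause $\bigvee P_i$ with several positive literals reduces to the previous cases by iterating the binary building block $(A\vee B)$ where we successively collapse a pair of positive literals $x_i=y_i \vee x_j = y_j$ using fresh variables $z,z'$ and the primitive relation, replacing the pair by a single equality disjunct $z=z'$ together with pp-constraints enforcing the equivalence. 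Carrying out step (3) requires the identity that $(\alpha\vee\beta)$ equals $\exists z\,\exists z'\,\big((z=z')\text{-disjunction}\big)\wedge(\text{gadget})$ — this gadget construction is where the real work lies.

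The main obstacle I anticipate is step (3): encoding ``$x_i=y_i$ holds'' as ``a fresh equality $z=z'$ holds'' in a way that is pp (only conjunction and existential quantification, no negation outside the allowed atoms), so that $k$ positive literals can be merged down to one. The danger is that naively one needs an ``iff,'' whereas pp-definability only gives us ``one-directional'' gadgets; the resolution is to exploit that we are inside a larger disjunction, so only the ``if the fresh equality fails then the original must fail'' direction is needed, and the other direction is harmless (it only shrinks the defined relation in a way that the remaining disjuncts compensate for, using the available $\neq$ to pin down auxiliary variables). Once the gadget is set up correctly for merging two equality disjuncts into one, induction on $m$ closes the lemma.
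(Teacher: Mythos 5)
This is a plan rather than a proof, and the two gadgets that constitute the paper's entire argument are missing from it. The paper's proof is two lines long: first, $(x=y\vee u\neq v)$ is pp-defined by $\exists v'\,\bigl((x=y\vee u=v')\wedge v\neq v'\bigr)$, which is the device that lets a disjunct be a disequality; second, the $m$-fold disjunction is obtained by a chain of fresh variables,
$\exists y_1,\ldots,y_{m+1}\;\bigwedge_{i=1}^{m}\bigl(P_i(x_i,x'_i)\vee y_i=y_{i+1}\bigr)\wedge y_1\neq y_{m+1}$,
where the final disequality forces at least one link of the chain to be satisfied by its $P_i$ rather than by its fresh equality. You gesture at both ingredients (``use that $u=v$ can be forced to fail elsewhere'', ``a chain of fresh variables'') but never write either formula down --- your one attempt at a concrete clause ends in a literal ``?'' --- and you then explicitly set the chain aside as not directly workable, which is the opposite of the truth.

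The route you commit to instead in step (3), merging two positive literals $x_i=y_i\vee x_j=y_j$ into a single fresh equality $z=z'$ via a gadget ``enforcing the equivalence'', does not go through as described. For soundness of the merge you need the direction $z=z'\rightarrow(x_i=y_i\vee x_j=y_j)$, since otherwise Existential can satisfy the merged disjunct by simply setting $z=z'$ while both original disjuncts fail; but as a constraint on $(x_i,y_i,x_j,y_j,z,z')$ this direction is $z\neq z'\vee x_i=y_i\vee x_j=y_j$, a three-fold disjunction of exactly the kind the lemma is trying to construct, so the reduction is circular. The direction you claim suffices --- ``if the fresh equality fails then the original must fail'', i.e.\ $(x_i=y_i\rightarrow z=z')\wedge(x_j=y_j\rightarrow z=z')$ --- is the direction that is \emph{not} needed and does not exclude the degenerate assignment above. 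The chain construction sidesteps all of this because the fresh equalities $y_i=y_{i+1}$ never have to be equivalent to anything; they only have to be jointly unsatisfiable, which the single atom $y_1\neq y_{m+1}$ achieves.
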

\begin{proof}
$\exists v' (x=y \vee u=v') \wedge v\neq v'$ pp-defines $(x=y \vee u \neq v)$. Then the desired pp-definition goes as follows.
\[
\begin{array}{cl}
\exists y_1,\ldots,y_{m+1} & (P_1(x_1,x'_1) \vee y_1=y_2)) \wedge \\
& (P_2(x_2,x'_2) \vee y_2=y_3)) \ \wedge \\
& \vdots \\
& (P_m(x_m,x'_m) \vee y_m=y_{m+1})) \ \wedge \\
& y_1 \neq y_{m+1} \\
\end{array}
\]
\end{proof}

\begin{lem}
The $\Pi_k$-$\QCSP(\mathbb{N};\neq,x=y \vee u=v)$ is $\Pi^{\mathrm{P}}_k$-hard.
\label{lem:hardest}
\end{lem}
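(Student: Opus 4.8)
The plan is to reduce from the canonical $\Pi^{\mathrm{P}}_k$-complete problem, quantified 3-SAT with a $\Pi_k$ prefix (quantifier-free part in 3-CNF, since $k\geq 2$ the matrix being in 3-CNF rather than 3-DNF is the natural choice for $\Pi_k$). Given such a sentence $\Phi = \forall v_1 \dots \forall v_{n_1} \exists v_{n_1+1}\dots \exists v_{n_2}\dots \phi$, where $\phi$ is a 3-CNF, I would build an instance $\Psi$ of $\Pi_k$-$\QCSP(\mathbb{N};\neq,x=y\vee u=v)$. As in the construction for Corollary~\ref{cor:main}, I introduce two outermost universally quantified ``truth value'' elements $t$ and $f$ together with the constraint $t\neq f$ (available since $\neq$ is in the language), and for each propositional variable $v_i$ a pair of elements $v_i^0,v_i^1$ encoding $v_i$, quantified in the same block as $v_i$ was in $\Phi$ (so universal $v_i$'s give universal $v_i^0,v_i^1$, existential $v_i$'s give existential ones). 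I then add, innermost and existentially, the ``domain'' gadgets forcing exactly one of $v_i^0,v_i^1$ to equal $t$ (when $t\neq f$) — this is where $\neq$ and the binary disjunction relation combine, and Lemma~\ref{lem:build-disjunction} is exactly the tool: a constraint like $(v_i^0 = t) \vee (v_i^1 = t)$ is pp-definable in $(\mathbb{N};\neq,x=y\vee u=v)$, and dually one can force not-both via a disequality-flavoured disjunction. Finally, each 3-clause $(\ell_1\vee\ell_2\vee\ell_3)$ of $\phi$ is translated into the single constraint $(\lambda(\ell_1) \neq t \vee \lambda(\ell_2) \neq t \vee \lambda(\ell_3) \neq t)$ where $\lambda$ sends a positive literal $v_j$ to $v_j^1$ and a negative literal $\neg v_j$ to $v_j^0$; again by Lemma~\ref{lem:build-disjunction} this 3-fold disjunction of disequalities is pp-definable in the language.

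The next step is to verify the two implications of the reduction, exactly parallel to the two lemmas preceding Corollary~\ref{cor:main}. For the forward direction: a winning Existential strategy in $\Phi$ is turned into one in $\Psi$ by playing $v_i^0 = t$ iff $v_i$ is set to false, $v_i^1 = t$ iff $v_i$ is set to true, and by Lemma~\ref{lem:trivial-equality} Universal's choices for $t,f$ and the universal $v_i^0,v_i^1$ may be assumed to lie in a bounded set; if Universal ever plays $t=f$, Existential wins trivially (all clause-disjunctions of disequalities are then satisfied for free, since at most... actually more carefully: if $t=f$, then ``$\lambda(\ell)\neq t$'' can be made to fail only by the literal-encoding equalling $t=f$, so I should make the $t=f$ case a global escape by the domain gadget, mirroring the $C_n$ trick in the main construction). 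Concretely I route it so that $t=f$ lets Existential satisfy everything; otherwise the encoding is faithful and the clause constraints say precisely that each clause has a true literal. For the backward direction: from a winning Existential strategy in $\Psi$, Universal in $\Phi$ plays $t,f$ distinct, plays the universal $v_i^0,v_i^1$ to encode whatever Universal wants $v_i$ to be (one of the pair equal to $t$, the other to $f$), and reads off Existential's $v_i^0,v_i^1$: the domain gadget forces exactly one to be $t$, determining $v_i\in\{0,1\}$, and then the clause constraints force $\phi$ to hold. The quantifier alternation count is preserved: the outer $\forall t\forall f$ merges into the first universal block, the innermost existential block absorbs the gadget variables from Lemma~\ref{lem:build-disjunction}, so $\Psi$ is $\Pi_k$ whenever $\Phi$ is $\Pi_k$.

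The step I expect to be the main obstacle is pinning down the behaviour when $t=f$ and, relatedly, ensuring that the existentially quantified auxiliary variables introduced by the pp-definitions of Lemma~\ref{lem:build-disjunction} — which land in the innermost existential block and hence are chosen \emph{after} all universal choices — cannot be exploited by Existential to escape in the ``no-instance'' case. In the main construction of Corollary~\ref{cor:main} this is handled by the careful layering of the $\mathcal C_i$ gadgets so that every ``cheat'' by Universal is punished; here the analogue is that the auxiliary $y_j$ chain in Lemma~\ref{lem:build-disjunction} is determined, up to the overall disjunct that is satisfied, by the forced values, so Existential gains nothing. I would argue this exactly as in Corollary~\ref{cor:co-np-hard-2}: the innermost existential quantifiers are harmless in the negative direction because the relevant equalities are already forced by the outer universal variables. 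Once that is in place, combining with the fact that $\Pi_k$-quantified 3-SAT is $\Pi^{\mathrm{P}}_k$-hard (indeed $\Pi^{\mathrm{P}}_k$-complete) for $k\geq 2$ gives $\Pi^{\mathrm{P}}_k$-hardness of $\Pi_k$-$\QCSP(\mathbb{N};\neq,x=y\vee u=v)$, as claimed.
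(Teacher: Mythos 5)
Your reduction breaks on the universally quantified propositional variables, and the root cause is the choice of encoding. First, you quantify $t$ and $f$ universally \emph{and} add the constraint $t\neq f$ to the quantifier-free part: since Universal may play $t=f$, this one constraint already makes $\Psi$ false on every input, so the forward direction fails immediately (the paper's remark about adding $t\neq f$ concerns \emph{existentially} quantified $t,f$; it also contradicts your later claim that ``if Universal ever plays $t=f$, Existential wins trivially''). Second, and more fundamentally, your ``domain gadget'' forcing exactly one of $v_i^0,v_i^1$ to equal $t$ is a hard constraint on that pair; when $v_i$ is universally quantified, Universal simply plays both $v_i^0$ and $v_i^1$ off $t$ (falsifying the gadget --- its pp-definition via Lemma~\ref{lem:build-disjunction} has no satisfying assignment of the auxiliary $y_j$ once all disjuncts fail) or both equal to $t$ (making every clause containing a literal of $v_i$ strictly harder), so $\Psi$ can be false when $\Phi$ is true; omitting the gadget for universal pairs instead lets Universal play both off $t$ and trivialise every clause containing $v_i$, breaking the other direction. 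In the construction behind Corollary~\ref{cor:main} this is precisely what the $\mathcal C_i$ gadgets are for: they never constrain Universal's variables directly but give Existential an escape ($z=t$) when Universal cheats, and that trick exploits the implication structure of $I$, which you do not reconstruct over $(\mathbb{N};\neq,x=y\vee u=v)$. You identify the $t=f$ case and the innermost auxiliaries as the main obstacle, but the real obstacle is that in a $\Pi_k$ sentence one cannot impose non-tautological constraints whose essential scope is entirely universal.

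The paper's proof sidesteps all of this by reducing from $\Pi_k$-quantified \emph{Not-All-Equal}-3-SAT and encoding each propositional variable $v$ by a pair $v,v'$, with $v=v'$ meaning true and $v\neq v'$ meaning false. That encoding is total: every play of Universal on a universal pair encodes a legitimate truth value, so no domain gadget and no truth constants are needed, and no constraint is ever placed that Universal could violate spuriously. The NAE predicate on three such pairs is a conjunction of disjunctions of equalities and disequalities between the paired variables, which is exactly what Lemma~\ref{lem:build-disjunction} supplies, and quantifier alternation is preserved verbatim. To salvage your approach you would have to replace the hard domain constraints by escape gadgets that punish a cheating Universal, which your layout does not provide; switching to the pair/NAE encoding is the clean fix.
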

\begin{proof}
We will give a reduction from $\Pi_k$-\emph{Quantified-Not-All-Equal-3-SAT} ($\Pi_k$-QNAE-3-SAT).
 Let $I$ be an instance of $\Pi_k$-QNAE-3-SAT. Each variable $v$ in $I$ becomes a pair of variables $v,v'$ in $\Phi$, an instance of $\Pi_k$-$\QCSP(\mathbb{N};\neq,x=y \vee u=v)$, in which $v=v'$ encodes that the variable is true and $v\neq v'$ encodes that the variable is false.
 
Now we can define the not-all-equal-3-sat predicate in disjunctive normal form (DNF) as:
\[
\begin{array}{l}
(x=x' \wedge y=y' \wedge z\neq z') \ \vee \\
(x=x' \wedge y \neq y' \wedge z = z') \ \vee \\
(x=x' \wedge y\neq y' \wedge z\neq z') \ \vee \\
(x \neq x' \wedge y=y' \wedge z=z') \ \vee \\
(x \neq x' \wedge y=y' \wedge z\neq z') \ \vee \\
(x \neq x' \wedge y\neq y' \wedge z=z') \\
\end{array}
\]
This expands in CNF to a set of clauses each with six literals that would be susceptible to Lemma~\ref{lem:build-disjunction}. It is this conjunction that defines our clause gadget. It is the conjunction of all the clause gadgets that define the quantifier-free part of $\Phi$. The quantification from $I$ translates simply, $\forall v$ becomes $\forall v, v'$ and $\exists v$ becomes $\exists v, v'$.

($I$ in $\Pi_k$-QNAE-3-SAT implies $\Phi$ in $\Pi_k$-$\QCSP(\mathbb{N};\neq,x=y \vee u=v)$.) The winning strategy for Existential on $\Phi$ is to choose some $x,x'$ identical in $\mathbb{N}$, if $x$ had been chosen true in $I$; and some $x,x'$ distinct in $\mathbb{N}$, if $x$ had been chosen false in $I$. The universal variables similarly translate back from $\Phi$ into $I$.

($\Phi$ in $\Pi_k$-$\QCSP(\mathbb{N};\neq,x=y \vee u=v)$ implies $I$ in $\Pi_k$-QNAE-3-SAT.) The winning strategy for Existential on $I$ is to choose $x$ true, if $x,x'$ had been chosen identical in $I$; and some $x$ false, if $x,x'$ had been chosen identical in $I$. The universal variables similarly translate back from $I$ into $\Phi$. 
\end{proof}

\begin{lem}
$\CSP(\mathbb{N};0,1,x=y \vee y=z)$ is NP-hard.
\label{lem:NP-hard-bis}
\end{lem}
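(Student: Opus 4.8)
The plan is to reduce from a known NP-hard problem, and the natural candidate is a positive-clause version of NOT-ALL-EQUAL SAT or simply a graph-colouring style problem, exploiting the fact that the relation $x=y\vee y=z$ with constants $0,1$ behaves like a ``betweenness/majority forbidding'' constraint. First I would observe that, over $\mathbb{N}$ with the two distinguished constants $0$ and $1$, the ternary relation $R(x,y,z):=(x=y\vee y=z)$ forbids exactly those triples in which $y$ differs from both $x$ and $z$; in particular on the two-element subset $\{0,1\}$ it is satisfied by a triple $(a,b,c)$ iff $b=a$ or $b=c$, i.e.\ it forbids only $(0,1,0)$ and $(1,0,1)$. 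So restricted to Boolean values it is precisely the clause ``$y$ is equal to at least one of its neighbours,'' which is the complement of the ``$2$-in-$3$ forces the middle to disagree'' pattern. I would make this Boolean restriction rigorous by the standard trick: add constraints forcing the relevant variables into $\{0,1\}$, noting that $x=0\vee x=0$ (i.e.\ using the constant $0$ twice, via $R(0,x,0)$ or similar) together with the analogous gadget for $1$ can be used to pin variables, or more carefully, that any instance has a solution iff it has one in which all variables take values among the finitely many constants and fresh witnesses — a point already available to us in the spirit of Lemma~\ref{lem:trivial-equality}.

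Next I would choose the source problem. A clean choice is ONE-IN-THREE SAT or MONOTONE NAE-3-SAT, but the cleanest fit is the NP-hard problem of, given a $3$-uniform hypergraph, deciding whether it is $2$-colourable (equivalently, positive NAE-SAT), or — perhaps even more directly — reducing from a CSP whose constraint is exactly ``not all equal on a triple'' and then noting that $x=y\vee y=z$ plus constants can simulate it. Concretely, for variables $u,v,w$ in a NAE constraint, I would introduce the gadget that forces, for each ordered pair, the impossibility of the middle disagreeing while the ends agree: using the single relation $R$ one encodes ``$b\in\{a,c\}$'' and, by combining three rotations of this together with the constants $0,1$ acting as the only allowed values, one carves out precisely the NAE pattern on $\{0,1\}$. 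The constants are essential here: without them $R$ is preserved by too many polymorphisms, but $0$ and $1$ let us both restrict to a two-element domain and break the symmetry so that the surviving relation on $\{0,1\}$ is genuinely the NAE-type relation, which is NP-hard.

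The key steps, in order: (1) analyse $R$ on $\{0,1\}$ and record exactly which Boolean triples it permits; (2) build a pp-definition, using $R$, the constants $0$ and $1$, and existentially quantified auxiliary variables, of a relation on two ``encoding'' variables per logical variable (or directly on variables constrained to $\{0,1\}$) that realises the complement of some NP-hard Boolean constraint — I expect NAE-$3$ or a suitable variant to work; (3) argue the domain-restriction direction, namely that an instance of the resulting $\CSP$ is satisfiable over $\mathbb{N}$ iff the corresponding Boolean instance is satisfiable, which follows because the constants $0,1$ force the pattern and any stray non-Boolean value can only make life harder for satisfaction of $R$-constraints that have a $0$ or $1$ hard-wired in; (4) invoke NP-hardness of the chosen Boolean problem.

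The main obstacle I anticipate is step (2)–(3): getting a pp-definition over $(\mathbb{N};0,1,R)$ that \emph{exactly} captures an NP-hard Boolean relation, rather than something satisfiable by a trivial constant assignment, and simultaneously ensuring that allowing values outside $\{0,1\}$ does not accidentally create spurious solutions. The resolution is to wire every auxiliary variable into a constraint involving $0$ or $1$ so that, in any satisfying assignment, either it is forced into $\{0,1\}$ or the constraint degenerates in a controlled way; one then checks that the ``escape to a fresh value'' option is always available on the Boolean-yes side and never helps on the Boolean-no side. Once the gadget is verified on $\{0,1\}$, the extension to $\mathbb{N}$ is routine monotonicity reasoning about $R$ with a pinned coordinate. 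I would finish by remarking that this NP-hardness, combined with the obvious NP membership for equality languages, pins the CSP precisely and feeds into the Horn/Co-NP-complete row of Theorem~\ref{thm:bounded-alternation}.
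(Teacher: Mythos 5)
Your overall strategy --- use the constants to restrict variables to $\{0,1\}$ and then import NP-hardness of a Boolean problem --- is the same as the paper's, and your analysis of $R(x,y,z):=(x=y\vee y=z)$ on $\{0,1\}$ (it forbids exactly $(0,1,0)$ and $(1,0,1)$) is correct. The domain restriction is also unproblematic once stated properly: the single constraint $R(0,v,1)$, i.e.\ $0=v\vee v=1$, forces $v\in\{0,1\}$ outright, so your worries in steps (3)--(4) about stray non-Boolean values, ``escape to fresh values,'' and relativisation in the spirit of Lemma~\ref{lem:trivial-equality} all evaporate. (Note that the gadget $R(0,x,0)$ you mention pins $x=0$ rather than restricting it to $\{0,1\}$; the one you want is $R(0,x,1)$.)

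The genuine gap is your step (2): you never exhibit the gadget realising an NP-hard Boolean constraint, nor do you justify that one exists --- ``I expect NAE-$3$ or a suitable variant to work'' is the entire content of the hardness step, and that step is where all the work lies. The paper closes it by invoking Schaefer's theorem: on $\{0,1\}$ the relation $x=y\vee y=z$ is preserved by neither majority nor minority, and $\neq$ rules out the semilattice and constant cases, so $\CSP(\{0,1\};\neq,\,x=y\vee y=z)$ is NP-hard; the simulation then only requires pp-defining Boolean $\neq$, which is done by expanding $(x=0\wedge y=1)\vee(x=1\wedge y=0)$ into four two-literal clauses, each an instance of $R$ with a constant substituted in. If you wish to avoid Schaefer, you must actually write down and verify a NAE (or disequality) gadget over $(\mathbb{N};0,1,R)$ and check both directions; as it stands, the proposal asserts rather than proves the key pp-definability claim on which the reduction rests.
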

\begin{proof}
Consider the 2-element (Boolean) language $\Delta$ on domain $\{0,1\}$ with relations $\neq$ and $x=y\vee y=z$. Since $\neq$ is not preserved by a semilattice ($\wedge$ or $\vee$) or constant operation, and $x=y\vee y=z$ is not preserved by majority or minority, it follows from Schaefer Theorem (see the modern presentation of Theorem~3.20 in \cite{HubieSIGACT})  that $\CSP(\Delta)$ is NP-hard. We will reduce from $\CSP(\Delta)$ to $\CSP(\mathbb{N};0,1,x=y \vee y=z)$ to prove our result. We can define in $(\mathbb{N};0,1,x=y \vee y=z)$ the subdomain $\{0,1\}$ with $y=0 \vee y=1$. We can define on $\{0,1\}$ the binary disequality relation by expanding $(x=0 \wedge y=1)\vee(x=1 \wedge y=0)$ to
\[ (x=0 \vee x=1) \wedge (x=0 \vee y=0) \wedge  (y=1 \vee x=1) \wedge (y=1 \vee y=0).\]
Finally, we can define $x=y \vee y=z$ by itself. Thus, we reduce an instance $\phi$ of $\CSP(\Delta)$ to an instance $\psi$ of $\CSP(\mathbb{N};0,1,x=y \vee y=z)$ in the following fashion. The set of variables stays the same and we replace each relation in $\phi$ by its definition in $(\mathbb{N};0,1,x=y \vee y=z)$. Finally, for each variable $v$, we add the constraint $v=0 \vee v=1$. That $\phi$ is a yes-instance iff $\psi$ is a yes-instance follows immediately by construction.
\end{proof}

\begin{cor}
$\Pi_2$-$\QCSP(\mathbb{N};x=y \vee y=z)$ is NP-hard.
\label{cor:NP-hard-bis}
\end{cor}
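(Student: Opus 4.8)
The plan is to reduce $\CSP(\mathbb{N};0,1,x=y \vee y=z)$ to $\Pi_2$-$\QCSP(\mathbb{N};x=y \vee y=z)$, and then invoke Lemma~\ref{lem:NP-hard-bis}. The main task is therefore to get rid of the constants $0$ and $1$ at the cost of introducing a harmless block of universal quantifiers at the front, which is exactly what a $\Pi_2$-prefix affords us.

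First I would replace the two constants by two fresh universally quantified variables $b_0,b_1$, quantified outermost, and replace every atom $v=0$ by $v=b_0$ and every atom $v=1$ by $v=b_1$ throughout the (existentially quantified, conjunctive) body; the remaining original variables of the $\CSP$ instance become existentially quantified after $b_0,b_1$. This yields a $\Pi_2$ instance $\psi$ of $\QCSP(\mathbb{N};x=y\vee y=z)$. The correctness argument splits as usual. If Universal plays $b_0=b_1$ (in particular if Universal plays them equal to each other, or to anything making the instance trivially satisfiable), Existential can win easily: every disjunctive constraint $x=y\vee y=z$ is satisfied by collapsing all variables to the single value $b_0=b_1$, so $\psi$ is true in that branch regardless. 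Hence the only branch that matters is the one where Universal plays $b_0\neq b_1$; and in that branch the relabelling $0\mapsto b_0$, $1\mapsto b_1$ is a bijection between $\{0,1\}$ and $\{b_0,b_1\}$, so a solution to the original $\CSP$ instance transports verbatim to a solution of the body of $\psi$, and conversely any solution of $\psi$ in this branch restricts (via the inverse relabelling) to a solution of the $\CSP$ instance. Thus $\psi$ is a yes-instance of $\Pi_2$-$\QCSP(\mathbb{N};x=y\vee y=z)$ iff the original instance is a yes-instance of $\CSP(\mathbb{N};0,1,x=y\vee y=z)$. Combined with Lemma~\ref{lem:NP-hard-bis}, this gives NP-hardness.

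The step I expect to require the most care is the degenerate branch $b_0=b_1$: one has to confirm that in this case $\psi$ really is \emph{always} true, i.e. that the all-equal assignment satisfies every conjunct. This is immediate here because the only relation in the body (after substitution) is $x=y\vee y=z$, which holds trivially when all three arguments coincide — but it is the one place the argument could break for a richer gadget, so it is worth stating explicitly. Everything else (the bijection transport, the mechanics of commuting the two leading universals past the block of existentials to stay within $\Pi_2$) is routine.

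Note this also reproves, via a different route, that $\Pi_2$-$\QCSP(\mathbb{N};x=y\vee y=z)$ sits at least at NP-hardness, complementing the Co-NP-hardness of Corollary~\ref{cor:co-np-hard-2}; in particular the Horn-but-not-positive case $x=y\vee y=z$ is both NP-hard and Co-NP-hard, consistently with the Co-NP-complete classification in Theorem~\ref{thm:bounded-alternation}.
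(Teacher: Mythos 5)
Your proposal is correct and is essentially identical to the paper's proof: the paper also reduces from $\CSP(\mathbb{N};0,1,x=y\vee y=z)$ by universally quantifying two fresh outermost variables $b_0,b_1$ in place of the constants, and notes that the branch $b_0=b_1$ is trivially true. Your explicit treatment of the degenerate branch and of the bijective relabelling in the $b_0\neq b_1$ branch merely fills in details the paper leaves implicit.
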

\begin{proof}
We reduce from $\CSP(\mathbb{N};0,1,x=y \vee y=z)$ by adding two new variables $b_0$ and $b_1$ universally quantified outermost and then substituting all instances of $0$ with $b_0$ and all instances of $1$ with $b_1$. Note that, when $b_0$ and $b_1$ are evaluated as equal, the remainder of the sentence will always be true. 
\end{proof}
\begin{lem}
Let $\Gamma$ be an equality language that is positive and not negative. There exists a finite set of constants $[m]$ so that there is a pp-definition in $(\Gamma;1,\ldots,m)$ of $x=y \vee y=z$ over $\mathbb{N}\setminus [m]$.
\label{lem:NP-hard-bis-general}
\end{lem}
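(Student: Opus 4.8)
The plan is to exploit the known structure theory of positive equality languages together with the fact that ``positive but not negative'' forces the presence of an essential polymorphism-free obstruction, namely that some relation in $\Gamma$ genuinely requires a disjunction of two equalities. More precisely, positive languages are exactly those preserved by all operations in the ``dual'' sense, but the non-negative hypothesis should give us, after quantifier elimination, a relation $R \in \Gamma$ whose CNF definition contains a clause that is a genuine disjunction $x = y \vee u = v$ of two equality atoms that is not implied by the rest. I would first record the standard fact (from \cite{BodirskyC10}, the proof of Theorem~7.1) that a positive non-negative $\Gamma$ pp-defines, possibly after identifying variables, a relation equivalent to $x = y \vee u = v$ on some pair of pairs of coordinates; the subtlety over an infinite domain is that pp-definitions may introduce existential variables ranging over all of $\mathbb{N}$, which is exactly why we are allowed to add the constants $1,\ldots,m$.

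Next I would reduce the target relation $x = y \vee y = z$ (a ``linked'' disjunction, sharing the variable $y$) to the ``unlinked'' disjunction $x = y \vee u = v$. Over a pure equality language this linking is not pp-definable from the unlinked version without constants — the unlinked relation is preserved by too many operations — so here is where the constants earn their keep. With two distinct constants available, say $0$ and $1$ (rename two of the $[m]$ constants), one can force the shared variable: intuitively, $x = y \vee y = z$ holds over $\mathbb{N}\setminus[m]$ iff there is a choice making $(x=y)$ or $(y=z)$ true, and one can encode ``$y$ is pinned to the side that is satisfied'' by a conjunction of unlinked disjunctions in the spirit of Lemma~\ref{lem:build-disjunction} and Lemma~\ref{lem:NP-hard-bis}: e.g. combine $x = y \vee u = v$-type clauses with the constant clauses $w = 0 \vee w = 1$ and disequality clauses $w \neq 0$, $w \neq 1$ (disequality being pp-definable over the constants and the unlinked disjunction, again in the manner of Lemma~\ref{lem:build-disjunction}) to build chains of implications that collapse the two equality options into a single one sharing a variable. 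The displayed CNF expansions in Lemma~\ref{lem:NP-hard-bis} are the template: there we defined $x = y \vee y = z$ from $0,1$ and the three-variable disjunction; here we do the analogous thing starting one level lower, from $0,1$ and the two-pair disjunction.

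The remaining point is that everything must take place over $\mathbb{N}\setminus[m]$ rather than all of $\mathbb{N}$: the pp-definition of the unlinked disjunction inside $(\Gamma;1,\ldots,m)$ may only be faithful when its free variables avoid the constants, and the constant clauses $w = i$ obviously force some auxiliary variables into $[m]$. So I would state the conclusion carefully as a relativised pp-definition — a pp-formula $\Phi(x,y,z)$ over $(\Gamma;1,\ldots,m)$ such that for all $x,y,z \in \mathbb{N}\setminus[m]$, $\Phi(x,y,z)$ holds iff $x = y \vee y = z$ — and check relativisation is preserved under the conjunctions and existential quantifications used, the existential witnesses being allowed to roam over all of $\mathbb{N}$. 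The main obstacle I anticipate is the first step: extracting from ``positive and not negative'' a clean pp-definition (with constants) of the \emph{unlinked} two-pair disjunction, since the CNF normal form for $\Gamma$'s relations may hide the needed clause behind redundant literals or behind clauses of higher arity, and one must argue that identifying variables and projecting isolates precisely $x = y \vee u = v$ and nothing stronger; this is essentially the infinite-domain refinement of the Boolean Schaefer-style argument, and getting the constants to do exactly the work that a two-element domain would otherwise do is the delicate part.
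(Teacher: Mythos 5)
Your proposal diverges from the paper's argument and contains a genuine gap, together with a misconception that sends you on an unnecessary detour. The paper's proof is short: by Lemma~\ref{lem:small}, $x=y\vee u=v$ has a \emph{quantified conjunctive} (not pp) definition in $\Gamma$; identifying free variables ($u:=y$, $v:=z$) gives a quantified conjunctive definition of $x=y\vee y=z$; by Corollary~\ref{cor:pi2} this definition can be put in $\Pi_2$ form $\forall u_1\dots\forall u_m\exists\dots$; and finally one substitutes the $m$ distinct constants $1,\dots,m$ for the $m$ universal variables. Because $\Gamma$ is positive, Universal's optimal strategy is always to play fresh pairwise-distinct elements, so this substitution is faithful provided the free variables range over $\mathbb{N}\setminus[m]$ --- that is exactly where the constants come from and why the definition is only relativised. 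Your proposal never identifies this mechanism: you treat the constants as surrogates for a two-element domain and never mention the universal quantifiers whose elimination is the entire point.

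Two concrete problems. First, your opening step --- that a positive, non-negative $\Gamma$ pp-defines $x=y\vee u=v$ --- is not available: what Lemma~\ref{lem:small} yields is quantified conjunctive definability (via all surjective polymorphisms being essentially unary and Lemma~4.1 of \cite{BodirskyC10}), and the paper's Conclusions explicitly warn that such lemmas cannot in general be upgraded to pp-definability; you gesture at the constants as the fix but never say how they produce a pp-definition, which is precisely the content to be proved. Second, your reduction from the ``linked'' to the ``unlinked'' disjunction is both backwards and unsound: backwards, because $x=y\vee y=z$ is obtained from $x=y\vee u=v$ by the trivial substitution of free variables (it is the converse direction that fails); unsound, because the auxiliary relations you invoke, namely $\neq$ and the unary $w\neq 0$, are not pp-definable from a positive language together with constants --- any unary operation fixing the constants pointwise but non-injective elsewhere is a polymorphism of $(\Gamma;1,\ldots,m)$ yet violates these relations, and pp-definability preserves polymorphisms. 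Your final paragraph correctly anticipates that the conclusion must be relativised to $\mathbb{N}\setminus[m]$, but the reason is the constant-for-universal substitution, not interference from constant clauses of the form $w=i$.
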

\begin{proof}
By Lemma \ref{lem:small} and Corollary \ref{cor:pi2}, there is a $\Pi_2$ quantified conjunctive formula $\theta$ that defines $x=y \vee y=z$ over $\Gamma$. Suppose it has $m$ universal variables. Obtain $\theta'$ from $\theta$ by substituting, in turn, the $m$ universal variables with the constants $1,\ldots,m$. Since $\Gamma$ is positive, $\theta'$ defines $x=y \vee y=z$ on the set $\mathbb{N}\setminus [m]$ (Universal's optimal strategy on $\theta$ will always involve playing new elements).
\end{proof}
\begin{cor}
Let $\Gamma$ be a equality language that is positive and not negative. $\Pi_2$-$\QCSP(\Gamma)$ is NP-hard.
\label{cor:NP-hard-bis-general}
\end{cor}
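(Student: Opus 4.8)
The plan is to reduce from $\CSP(\mathbb{N};0,1,x=y \vee y=z)$, which is NP-hard by Lemma~\ref{lem:NP-hard-bis}, exactly as in the proof of Corollary~\ref{cor:NP-hard-bis}, but with the two Boolean constants now simulated by the constants $1,2$ supplied by Lemma~\ref{lem:NP-hard-bis-general}. More precisely, let $\phi$ be an instance of $\CSP(\mathbb{N};0,1,x=y \vee y=z)$ on variables $V$. First I would introduce two fresh variables $b_0,b_1$, quantified universally and outermost, and replace every occurrence of the constant $0$ by $b_0$ and every occurrence of $1$ by $b_1$ throughout $\phi$, obtaining a sentence whose remaining variables are existentially quantified. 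As in Corollary~\ref{cor:NP-hard-bis}, when $b_0$ and $b_1$ happen to be evaluated equally the sentence becomes trivially true, so the interesting play is when Universal assigns them distinct values; in that case the game is equivalent to asking whether $\phi$ has a solution.

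The second step is to replace each relational atom $x=y \vee y=z$ by its $\Pi_2$ quantified conjunctive definition over $\Gamma$ with the $m$ universal variables instantiated by the constants $1,\ldots,m$; this is the relativised definition of $x=y\vee y=z$ over $\mathbb{N}\setminus[m]$ guaranteed by Lemma~\ref{lem:NP-hard-bis-general}. After plugging in these definitions innermost, the instance still contains the constants $1,\ldots,m$, which I eliminate in the standard way by introducing $m$ further universal variables $c_1,\ldots,c_m$ (again outermost) together with the extra atoms forcing them pairwise distinct and distinct from $b_0,b_1$; disequality is available since $\Gamma$ is positive and not negative, but even without it one can argue as in Lemma~\ref{lem:NP-hard-bis-general} that Universal's best strategy plays fresh, pairwise-distinct elements. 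Commuting these outer universal quantifiers across the conjunction and drawing all the existential quantifiers innermost leaves a $\Pi_2$ sentence, i.e.\ an instance of $\Pi_2$-$\QCSP(\Gamma)$.

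It then remains to verify correctness of the reduction. If $\phi$ is a yes-instance, Existential wins the constructed $\Pi_2$-game: whenever the outer universal block assigns pairwise distinct values to $b_0,b_1,c_1,\ldots,c_m$, she uses a solution of $\phi$ (reading $b_0,b_1$ as $0,1$) together with the strategies witnessing each relativised $\Pi_2$-definition of $x=y\vee y=z$; and whenever $b_0=b_1$ or any $c_i$ collides, the quantifier-free part is satisfied trivially as noted. Conversely, if $\phi$ is a no-instance, Universal wins by playing $b_0,b_1,c_1,\ldots,c_m$ pairwise distinct and then, inside each clause gadget, playing according to the strategy that makes the $\Pi_2$-definition actually evaluate to $x=y\vee y=z$ on $\mathbb{N}\setminus[m]$; since $\phi$ has no solution some atom $x=y\vee y=z$ must fail, and hence the corresponding gadget fails.

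The main obstacle I anticipate is bookkeeping the quantifier structure rather than anything conceptually deep: one must be sure that after substituting the $\Pi_2$-definitions innermost, pushing the newly created universal variables (the $c_i$'s and $b_0,b_1$) out to the front, and collecting all existential variables at the end, the result genuinely lies in $\Pi_2$ and not $\Pi_3$ or worse. This works because the relativised definitions from Lemma~\ref{lem:NP-hard-bis-general} are themselves $\Pi_2$ with their universal variables already replaced by constants, so the only universal variables left are the global ones, which are harmless to commute outward past the conjunction; and the $\CSP$ part of $\phi$ contributes only existential variables. A secondary point is to double-check that the collision cases ($b_0=b_1$, or $c_i$ not fresh) cannot accidentally help Existential when $\phi$ is a no-instance — but this is exactly the argument already used in Corollaries~\ref{cor:NP-hard-bis} and~\ref{cor:co-np-hard-2}, since Universal simply declines to collide.
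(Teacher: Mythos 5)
Your proposal is correct and follows essentially the same route as the paper: reduce from $\CSP(\mathbb{N};0,1,x=y\vee y=z)$ (Lemma~\ref{lem:NP-hard-bis}) as in Corollary~\ref{cor:NP-hard-bis}, but with $m+2$ outermost universally quantified variables playing the roles of $0$, $1$ and of the constants $[m]$ from Lemma~\ref{lem:NP-hard-bis-general}, and argue that since $\Gamma$ is positive, Universal's optimal play assigns these pairwise-distinct values, which makes the simulation faithful. One small correction: your aside that disequality atoms are ``available since $\Gamma$ is positive and not negative'' is false --- a positive language cannot define $\neq$ (every positive relation is preserved by non-injective surjective essentially unary operations, which violate $\neq$, and Lemma~\ref{lem:tiny} applies only to non-positive languages) --- but this is harmless, since the fallback you immediately give (Universal simply declines to collide) is exactly the argument the paper relies on.
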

\begin{proof}
We use the reduction from Lemma~\ref{lem:NP-hard-bis} rather as we did in Corollary~\ref{cor:NP-hard-bis} but now we have $m+2$ outermost universally quantified variables. Two of them will play the role of $0$ and $1$ from Lemma~\ref{lem:NP-hard-bis} and then $m$ further of them will play the role of the constants $[m]$ in Lemma~\ref{lem:NP-hard-bis-general}. Note that because $\Gamma$ is a positive language, an optimal strategy for Universal will always be to play these variables on distinct elements of $\mathbb{N}$, whereupon the correctness of the simulation of Lemma~\ref{lem:NP-hard-bis} is guaranteed by Lemma~\ref{lem:NP-hard-bis-general}. 
\end{proof}

\subsection{The classification}


\begin{lem}
If $\Gamma$ is an equality language containing some relation that does not have a Horn definition, then the relation $x=y\vee u=v$ is definable in $\Gamma$ by a quantified conjunctive formula.
\label{lem:hauptlemma}
\end{lem}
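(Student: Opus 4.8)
The plan is to go through the Galois connection for equality languages. By the theorem of Bodirsky and Chen recalled in the preliminaries, a relation is definable by a quantified conjunctive formula over $\Gamma$ precisely when it is preserved by every surjective polymorphism of $\Gamma$; so it suffices to show that every surjective polymorphism $f$ of $\Gamma$ (say of arity $k$) preserves $E:=\{(x,y,u,v)\mid x=y\ \text{or}\ u=v\}$. First I would record an elementary reformulation: call a set $S$ of coordinates of $f$ \emph{active} if there are inputs differing only on $S$-coordinates with different $f$-values. Then a short check shows that $f$ preserves $E$ if and only if there is no partition $[k]=A\sqcup B$ with both $A$ and $B$ active. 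So assume for contradiction that some surjective polymorphism $f$ admits such a partition, witnessed by $\mathbf x,\mathbf y$ agreeing on $A$ with $f(\mathbf x)\neq f(\mathbf y)$ and by $\mathbf u,\mathbf v$ agreeing on $B$ with $f(\mathbf u)\neq f(\mathbf v)$; it will be enough to contradict the hypothesis that $\Gamma$ has a non-Horn relation by showing $f$ cannot preserve one.

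Next I would pin down ``non-Horn'' concretely. Identifying an equality relation $R$ of arity $n$ with the set $\mathcal P_R$ of kernel-partitions of its tuples, I claim $R$ is Horn iff $\mathcal P_R$ is closed under meet of equivalence relations, iff $R$ is preserved by some (equivalently, every) pairing function $\mathbb N^2\to\mathbb N$. The ``only if'' is immediate from the shape of Horn clauses (implications ``if these pairs are merged then that pair is merged'' and pure-negative clauses are all stable under intersecting kernels), and the pairing-function equivalence is immediate since a pairing function sends a pair of tuples to a tuple of kernel $\ker(a)\wedge\ker(b)$. For ``if'' one uses meet-closure to produce, for each $\pi\notin\mathcal P_R$, a single Horn clause valid on $R$ but violated by $\pi$ (taking the smallest member of $\mathcal P_R$ above $\pi$ when one exists, and a pure-negative clause otherwise), so these clauses define $R$. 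Since $\Gamma$ contains a relation with no Horn definition, fix such an $R$ and, by the above, fix $a,b\in R$ such that every tuple whose kernel equals $\ker(a)\wedge\ker(b)$ lies outside $R$.

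For the crux I would derive the contradiction by exhibiting $s_1,\dots,s_k\in R$ with $f(s_1,\dots,s_k)\notin R$. The idea is to let $s_i$ range over tuples of kernel $\ker(a)$ for $i\in A$ and over tuples of kernel $\ker(b)$ for $i\in B$; because $R$ is an equality relation, the actual values appearing in these tuples may be chosen freely subject only to the prescribed kernel. For any such choice, the column of $f(s_1,\dots,s_k)$ at a coordinate depends only on that coordinate's $\ker(a)$-block and $\ker(b)$-block, hence the output kernel is automatically at least as coarse as $\ker(a)\wedge\ker(b)$; what must be ensured is that it is not strictly coarser, i.e.\ that the induced map from pairs (block of $\ker(a)$, block of $\ker(b)$) to output values is injective on the set of such pairs that actually co-occur among the coordinates. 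Granting this, $f(s_1,\dots,s_k)$ has kernel exactly $\ker(a)\wedge\ker(b)$ and so lies outside $R$, contradicting $f\in\Pol(\Gamma)$.

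I expect the injectivity requirement to be the main obstacle. It calls for two things: choosing the bad partition $[k]=A\sqcup B$ among all of $f$'s bad partitions so that the activity of $A$ and of $B$ genuinely manifests on ``block-constant'' inputs (a minimal-active-set argument, so that $f$ restricted to these inputs is not essentially a single unary map of one block-label), and then choosing the finitely many values generically---possible since only finitely many undesired equalities among finitely many chosen elements of $\mathbb N$ must be avoided. A viable alternative for this last step is to first replace $R$, via quantified conjunctive definitions, by a normalised non-Horn relation of small arity on which the column computation is transparent, and run the argument there.
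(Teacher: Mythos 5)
Your plan diverges from the paper's proof, which does not argue coordinate-by-coordinate at all: it simply invokes Propositions 37 and 43 of Bodirsky--Chen--Pinsker to conclude that a non-Horn $\Gamma$ admits no essential polymorphism with infinite image, hence every \emph{surjective} polymorphism is essentially unary, hence preserves $x=y\vee u=v$, and then applies the surjective-polymorphism Galois connection. Your first two steps are sound: the reduction to showing that every surjective polymorphism preserves $E$, the reformulation of ``$f$ violates $E$'' as the existence of a partition $[k]=A\sqcup B$ with both parts active, and the characterisation of non-Horn via failure of meet-closure of kernels are all correct. The problem is the crux, and it is a genuine gap rather than a routine verification.

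The difficulty is that the activity of $A$ and of $B$ is witnessed only by \emph{specific} tuples, whereas your construction requires $f$ to act ``essentially'' on the particular block-structured columns determined by $\ker(a)$ and $\ker(b)$, and to do so simultaneously for every pair of coordinates $p,q$ of $R$ not identified by $\ker(a)\wedge\ker(b)$. An arbitrary surjective polymorphism has no uniformity you can lean on. For instance, let $f(x,y)$ be a pairing function when $x,y\le 100$ and the projection to $x$ otherwise: this $f$ is surjective and violates $E$, yet for any ``generic'' (i.e.\ large, pairwise fresh) choice of values the output of your construction has kernel $\ker(a)$, not $\ker(a)\wedge\ker(b)$. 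So genericity fails outright, and one must instead hunt for values on which $f$ is genuinely essential in a way compatible with all the prescribed coincidences among the $s_i$ at once. Your proposed fixes (a minimal-active-set choice of the partition, or normalising $R$ first) are not developed, and making them work is essentially the content of the results the paper cites: those proofs compose $f$ with unary injections and use local closure of the clone to manufacture a binary injective polymorphism, which is a multi-page argument, not a finishing touch. As written, the proof does not go through; either the cited propositions must be invoked (as the paper does), or their proof must be reproduced at this step.
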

\begin{proof}
 If $\Gamma$ is preserved by an essential operation with infinite image $f$, then by Proposition 37 in \cite{BodirskyCP10}
 $f$ preserves $\neq$ or $\Gamma$ is preserved by all operations. In the latter case every relation in $\Gamma$ may be defined by a conjunction of equalities. Hence it is in particular Horn. By Proposition 43 in \cite{BodirskyCP10}, $\Gamma$ is Horn also in the case where $f$ preserves $\neq$. It contradicts the assumption and implies that $\Gamma$ is preserved by essentially unary operations only.
In particular, the surjective polymorphisms of $\Gamma$ preserve the relation $x=y \vee u=v$. But then Lemma 4.1 in  \cite{BodirskyC10} gives a quantified conjunctive definition.
\end{proof}

\begin{lem}
If $\Gamma$ is an equality language that is positive but not negative, then the relation $x=y\vee u=v$ is definable in $\Gamma$ by a quantified conjunctive formula.
\label{lem:small}
\end{lem}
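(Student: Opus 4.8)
The plan is to deduce this from Lemma~\ref{lem:hauptlemma}. Since $\Gamma$ is not negative there is a relation $R\in\Gamma$ that is not negative, and since $\Gamma$ is positive this same $R$ is positive. So it suffices to prove that a relation over an equality language that is both positive and Horn is necessarily negative (in fact a conjunction of equalities); then this $R$ cannot be Horn, so $\Gamma$ contains a relation with no Horn definition, and Lemma~\ref{lem:hauptlemma} yields the quantified conjunctive definition of $x=y\vee u=v$.

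To prove that claim I would work with polymorphisms. First, a positive relation is preserved by \emph{every} unary operation $h$ on $\mathbb{N}$: in each positive clause $\bigvee_j(x_j=y_j)$ some equality $x_j=y_j$ holds in a given tuple of the relation, and applying $h$ coordinatewise keeps it true. Second, a Horn relation is preserved by \emph{every} binary injection $f$ on $\mathbb{N}$ --- a short check (or cite \cite{BodirskyCP10}): a purely negative clause is preserved by any injection, and for a clause $(\bigwedge_j x_j=y_j)\rightarrow(u=v)$, injectivity of $f$ forces any equality $x_j=y_j$ that holds in $f(\bar s,\bar t)$ to hold already in both $\bar s$ and $\bar t$, whence $u=v$ holds in $\bar s$ and in $\bar t$ and hence in $f(\bar s,\bar t)$. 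Now fix one binary injection $f$, say a pairing function. The polymorphism clone of $R$ then contains all unary operations together with $f$, and this forces it to contain \emph{all} operations on $\mathbb{N}$: since $f$ is a bijection onto its image, every binary operation $v$ equals $h\circ f$ for a suitable unary $h$, so all binary operations are polymorphisms; iterating $f$ produces injections of each arity, and composing with suitable unary maps gives all operations of each arity. Finally, a nonempty relation of finite arity $n$ that is invariant under all operations on $\mathbb{N}$ is a conjunction of equalities: letting $\sim$ be the equivalence on $\{1,\dots,n\}$ given by $i\sim j$ iff $a_i=a_j$ for every tuple of $R$, and picking for each pair $i\not\sim j$ a witnessing tuple of $R$, read these finitely many tuples as the columns of an array; any tuple compatible with $\sim$ is obtained by applying to the rows of this array an operation of the appropriate arity, which is a polymorphism, so that tuple lies in $R$. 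Hence $R$ is defined by $\bigwedge_{i\sim j}(x_i=x_j)$, which is negative. Nonemptiness of $R$ is automatic since the constant tuple satisfies any positive CNF.

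The only real work is the second polymorphism observation --- preservation of Horn relations by binary injections --- and checking that all unary operations together with one binary injection generate every operation; both are routine, with the injectivity of $f$ doing the work in each. The rest is the reduction to Lemma~\ref{lem:hauptlemma} and the classical fact that relations invariant under every operation are conjunctions of equalities.
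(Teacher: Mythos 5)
Your proof is correct, but it takes a genuinely different route from the paper's. The paper re-runs the polymorphism analysis of Lemma~\ref{lem:hauptlemma} in parallel (mirroring Theorem 7.1 of \cite{BodirskyC10}): an essential polymorphism with infinite image would force $\Gamma$ to be Horn and hence --- since $\Gamma$ is positive --- negative, a contradiction; so all surjective polymorphisms of $\Gamma$ are essentially unary, $x=y\vee u=v$ is preserved by all such operations, and Lemma 4.1 of \cite{BodirskyC10} closes the argument. You instead invoke Lemma~\ref{lem:hauptlemma} as a black box, which requires exhibiting a relation of $\Gamma$ with no Horn definition; this you obtain from your claim that a relation which is both positive and Horn is a conjunction of equalities, hence negative. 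That claim is precisely the step the paper asserts without justification in the phrase ``Horn, and hence negative'' (which is false without positivity), and your argument for it is sound: positivity gives all unary polymorphisms, Hornness gives a binary injective polymorphism (your direct verification is fine and is in any case the easy direction of Proposition 43 of \cite{BodirskyCP10}), these together generate all operations since every binary operation factors as a unary map composed with the injection, and a nonempty finitary relation invariant under all operations is a conjunction of equalities by your column-array argument. What your version buys is a cleaner logical structure --- the lemma becomes a short corollary of Lemma~\ref{lem:hauptlemma} plus one self-contained semantic fact about equality relations --- and it makes explicit a step the paper's proof leaves implicit; the paper's version stays closer to the template of \cite{BodirskyC10} and avoids restating the generation argument by citing it.
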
 

\begin{proof}
We follow a line of reasoning similar to that in the proof of Theorem 7.1 in \cite{BodirskyC10}, except that the relation to be defined is $x =y \vee u = v$
not $x = y \vee y = z$. If Pol$(\Gamma)$ contains an essential operation $f$ with infinite image, then as in the proof of Lemma~\ref{lem:hauptlemma}, we show that $\Gamma$ is Horn, and hence negative. It contradicts the assumption. Thus, all surjective polymorphisms of $\Gamma$ are essentially unary.
Since $x=y \vee u=v$ is preserved by all essentially unary operation, the lemma follows by Lemma 4.1 in  \cite{BodirskyC10}.
\end{proof}
\begin{lem}
If $\Gamma$ is an equality language that is not positive, then the relation $\neq$ is quantified conjunctive definable in $\Gamma$.
\label{lem:tiny}
\end{lem}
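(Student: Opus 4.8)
The plan is to argue via polymorphisms, exactly in the spirit of Lemmas~\ref{lem:hauptlemma} and~\ref{lem:small}. Since $\neq$ is quantified conjunctive definable in $\Gamma$ precisely when $\neq$ is preserved by all surjective polymorphisms of $\Gamma$ (this is the characterisation cited in the preliminaries, via Lemma~4.1 in \cite{BodirskyC10}), it suffices to show that if $\Gamma$ is an equality language that is \emph{not} positive, then every surjective polymorphism of $\Gamma$ preserves $\neq$, i.e.\ is injective. Equivalently, I would show that a non-injective surjective operation forces $\Gamma$ to be positive.

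First I would recall the structure of surjective, non-injective polymorphisms on an equality language. If $f$ is essential with infinite image, then by Proposition~37 of \cite{BodirskyCP10} either $f$ preserves $\neq$ (hence is injective and we are done), or $\Gamma$ is preserved by all operations, in which case every relation of $\Gamma$ has a definition as a conjunction of equalities and so $\Gamma$ is positive, contradicting the hypothesis. So the remaining case is that $\Gamma$ is preserved by an essentially unary surjective operation that is not injective, or by some operation with finite image; but a non-injective essentially unary surjective operation on a countably infinite domain cannot exist (a surjective unary map on $\en$ is a bijection), and likewise a surjective operation with finite image is impossible on an infinite domain. Hence the only way for $\Gamma$ to fail to have $\neq$ among its surjective-polymorphism invariants is through the "all operations" case, which forces $\Gamma$ positive.

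Concretely, the key steps in order are: (1) assume $\neq$ is not quantified conjunctive definable in $\Gamma$, so by Lemma~4.1 of \cite{BodirskyC10} some surjective polymorphism $f$ of $\Gamma$ does not preserve $\neq$; (2) observe $f$ cannot be injective, so $f(a,\ldots) = f(b,\ldots)$ for some distinct tuples collapsing a disequality; (3) invoke Proposition~37 (and, if needed for the Horn subcase, Proposition~43) of \cite{BodirskyCP10} to conclude that the existence of such an $f$ among the polymorphisms forces $\Gamma$ to be preserved by all operations; (4) deduce each relation of $\Gamma$ is a conjunction of equalities, hence has a positive CNF definition, so $\Gamma$ is positive — contradicting the hypothesis. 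The contrapositive is exactly the lemma.

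The main obstacle I anticipate is step~(3): pinning down precisely which polymorphism-classification result from \cite{BodirskyCP10} delivers the "preserved by all operations" conclusion from the mere existence of a non-injective surjective polymorphism, and making sure the finite-image and essentially-unary cases are genuinely vacuous on a countable domain rather than needing separate treatment. Once the correct citation is identified, the argument is a short dualisation of the proof of Lemma~\ref{lem:small}, with $\neq$ in place of $x=y\vee u=v$ and "positive" in place of "negative"; in particular the appeal to Lemma~4.1 of \cite{BodirskyC10} to pass from the polymorphism statement to the quantified conjunctive definability is identical.
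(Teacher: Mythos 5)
There is a genuine gap, and it sits exactly where you flagged your own unease. Your dismissal of the essentially unary case rests on the claim that ``a surjective unary map on $\en$ is a bijection.'' That is false on an infinite domain: the map $g(n)=\max(n-1,0)$ is surjective on $\mathbb{N}$ but identifies $0$ and $1$. So surjective, non-injective, essentially unary polymorphisms are perfectly possible, and they are precisely the operations that violate $\neq$ while escaping Proposition~37 of \cite{BodirskyCP10} (which you correctly apply only to \emph{essential} operations). This is not a fringe case to be argued away -- it is the substantive case of the lemma. One must actually prove that an equality language preserved by such a non-injective unary surjection is positive; for instance, $I = (x=y\rightarrow y=z)$ is not preserved by the $g$ above (apply it to $(0,1,2)\in I$ to get $(0,0,1)\notin I$), which is consistent with, but does not establish, the general claim.

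The paper's own proof is a direct citation of Theorems~7.3 and~8.2 of \cite{BodirskyC10}, which together state that if $\mathrm{sPol}(\Gamma)$ contains \emph{any} operation violating $\neq$ -- essentially unary or essential -- then $\Gamma$ is positive; the contrapositive plus the surjective-preservation theorem then gives the quantified conjunctive definition of $\neq$. Your route through \cite{BodirskyCP10} correctly disposes of the essential case (and the finite-image case really is vacuous for surjective operations), but to close the argument you would still need the content of Theorem~7.3 of \cite{BodirskyC10} for the non-injective essentially unary case; without it the proof is incomplete.
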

\begin{proof}
If sPol$(\Gamma)$ has an operation that violates $\neq$, then Theorems 7.3 and 8.2 from \cite{BodirskyC10} imply that $\Gamma$ is positive. The contrapositive of this yields the result according to \cite{BodirskyC10}.
\end{proof}
The following is Proposition 43 in \cite{BodirskyCP10}.
\begin{lem}
All Horn relations are pp-definable in $x=y \rightarrow u=v$ and $\neq$. 
\label{lem:Horn}
\end{lem}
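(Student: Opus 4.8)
The final statement to prove is Lemma~\ref{lem:Horn}: every Horn relation is pp-definable in $x=y\rightarrow u=v$ and $\neq$. Here is the plan I would follow.

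\textbf{Reduction to clauses.} A Horn relation $R$ has, by definition, a CNF definition in which every clause has at most one positive literal. A pp-definition of a conjunction is the conjunction of pp-definitions (sharing the relevant free variables), so it suffices to pp-define a single Horn clause. There are three shapes of Horn clause to handle: (i) a clause with exactly one positive literal and one or more negative literals, i.e. $(x_1\neq y_1)\vee\cdots\vee(x_k\neq y_k)\vee(u=v)$; (ii) a clause with exactly one positive literal and no negative literals, i.e. the bare equality $u=v$; and (iii) a clause with no positive literal, i.e. a pure disjunction of disequalities $(x_1\neq y_1)\vee\cdots\vee(x_k\neq y_k)$. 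Case (ii) is already a relation built directly into $\Gamma$ via $x=y\rightarrow u=v$ by taking $x,y$ equal (e.g. $\exists w\, (w=w\rightarrow u=v)$, or more simply just substituting), so that case is immediate.

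\textbf{The single-disequality clause and chaining.} The key building block is the implication-with-one-hypothesis-disequality $(x\neq y)\vee(u=v)$, which is literally the relation $x=y\rightarrow u=v$ reading $\rightarrow$ as $\neg\vee$. To lengthen the hypothesis side, I would chain implications through fresh existentially quantified variables, exactly in the style of Lemma~\ref{lem:build-disjunction}: to pp-define $(x_1\neq y_1)\vee\cdots\vee(x_k\neq y_k)\vee(u=v)$, introduce auxiliary variables $w_0,\dots,w_{k-1}$, write the constraints $(x_1=y_1\rightarrow w_0=w_1)$, \ldots, $(x_{k-1}=y_{k-1}\rightarrow w_{k-2}=w_{k-1})$, $(x_k=y_k\rightarrow w_{k-1}=\text{[something forcing }u=v])$ — one needs to be a little careful to route the conclusion into $u=v$; the clean way is to set $w_{k-1}$ and route so that the final implication has conclusion $u=v$ directly, using $(x_k=y_k\rightarrow u=v)$ only reachable when all earlier hypotheses failed. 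Concretely, the standard trick is: the conjunction of $x_i=y_i\rightarrow w_{i-1}=w_i$ for $i=1,\dots,k$ together with forcing $w_0\ne\cdots$ is not quite it; instead one uses that if all $x_i=y_i$ hold then transitivity forces $w_0=w_k$, and one arranges a contradiction or routes to $u=v$. I would mirror Lemma~\ref{lem:build-disjunction}'s pattern but with equality conclusions, and separately handle pushing the disjunct $u=v$ to the front. For the pure-disequality clause (iii), the same chaining works but the final conclusion should be made unsatisfiable when all hypotheses hold: take the chain $x_i=y_i\rightarrow w_{i-1}=w_i$ for $i=1,\dots,k$ and add $w_0\neq w_k$; then all $x_i=y_i$ simultaneously is impossible, i.e. the clause $\bigvee_i x_i\neq y_i$ holds. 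This uses exactly $\neq$ and $x=y\rightarrow u=v$.

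\textbf{Assembling and the main obstacle.} Having pp-defined each clause type, conjoin them (identifying shared variables) to pp-define $R$. The main thing to get right — and the step I'd expect to be fiddly rather than deep — is the bookkeeping for a clause with one positive literal \emph{and} several negative literals, $(x_1\neq y_1)\vee\cdots\vee(x_k\neq y_k)\vee(u=v)$: one must chain the hypotheses so that ``all $x_i=y_i$'' propagates along the $w$-chain and then forces $u=v$, while if any $x_i\neq y_i$ the corresponding implication is vacuous and the $w$-chain is free, so no spurious constraint on $u,v$ is imposed. The honest route is induction on $k$: for $k=1$ it is the primitive relation; for the step, pp-define $(x_1=y_1\rightarrow w=w')$ conjoined with the $(k-1)$-clause $(x_2\neq y_2)\vee\cdots\vee(x_k\neq y_k)\vee(w=w')\vee$ --- wait, that reintroduces a disjunct; the cleaner induction keeps the single conclusion $u=v$ and grows the hypothesis list, which is exactly the chaining above. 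Since this lemma is quoted as Proposition~43 of \cite{BodirskyCP10}, I would in the writeup either cite it directly or give the short chaining argument; either way, no genuine obstacle arises, only careful indexing of the auxiliary variables.
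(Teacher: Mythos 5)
Your construction is correct, but note that the paper offers no proof of Lemma~\ref{lem:Horn} at all: it is quoted verbatim as Proposition~43 of Bodirsky--Chen--Pinsker \cite{BodirskyCP10}, so your self-contained clause-by-clause argument is a genuine (and welcome) alternative to a citation. The decomposition into the three clause shapes is the right one, and your treatment of the pure-disequality clause (chain $x_i=y_i\rightarrow w_{i-1}=w_i$ for $i=1,\dots,k$ plus $w_0\neq w_k$) is exactly the trick of Lemma~\ref{lem:build-disjunction} and is verifiably correct. The only place you wobble is the clause with one positive literal and several negative ones, where you say the routing of the conclusion into $u=v$ is ``fiddly'' and briefly talk yourself into and out of a wrong induction; in fact the clean resolution is simply to anchor the chain at $u$ and $v$: the clause $(x_1\neq y_1)\vee\cdots\vee(x_k\neq y_k)\vee(u=v)$ is pp-defined by
$\exists w_1\cdots\exists w_{k-1}\,(x_1=y_1\rightarrow u=w_1)\wedge(x_2=y_2\rightarrow w_1=w_2)\wedge\cdots\wedge(x_k=y_k\rightarrow w_{k-1}=v)$.
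If all hypotheses hold, transitivity forces $u=v$; if the $j$-th fails, setting $w_1=\cdots=w_{j-1}=u$ and $w_j=\cdots=w_{k-1}=v$ satisfies every conjunct without constraining $u,v$. With that one display written out, your proof is complete and arguably preferable to the bare citation, since the reader need not check that Proposition~43 of \cite{BodirskyCP10} states precisely this.
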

We are now in a position to prove Theorem~\ref{thm:bounded-alternation}.
\begin{proof}[Proof of Theorem~\ref{thm:bounded-alternation}]
The Logspace cases come a fortiori from \cite{BodirskyC10}, since QCSP$(\Gamma)$ is in Logspace in this case.


If $\Gamma$ is positive but not negative, then $\Pi_k$-QCSP$(\Gamma)$ is in NP by \cite{BodirskyC10}. Moreover, $\Pi_2$-QCSP$(\Gamma)$ is NP-hard by Corollary~\ref{cor:NP-hard-bis-general}.

If $\Gamma$ is Horn, but not negative, then $\Pi_k$-QCSP$(\Gamma)$ is in Co-NP by Theorem~\ref{thm:zhuk-co-NP}, together with Lemma~\ref{lem:Horn}. Moreover, $\Pi_2$-QCSP$(\Gamma)$ is Co-NP-hard by Corollary~\ref{cor:co-np-hard-2}. 

Finally, if $\Gamma$ is not Horn and not positive, then, by Lemmas~\ref{lem:hauptlemma} and \ref{lem:tiny}, $\neq$ and $x=y\vee y=z$ are quantified conjunctive definable, and we know from Corollary~\ref{cor:pi2} even by a $\Pi_2$ formula. It follows from Lemma~\ref{lem:hardest} that $\Pi_{k}$-QCSP$(\Gamma)$ is $\Pi_{k-2}^{\mathrm{P}}$-hard (since the innermost conjunction of $\Pi_2$ formulas may itself be turned into a single $\Pi_2$ formula by commuting conjunction and universal quantification).
\end{proof}

\section{Conclusions}

In the case where $\Gamma$ is not Horn and not positive, we proved that $\Pi_{k}$-QCSP$(\Gamma)$ is $\Pi_{k-2}^{\mathrm{P}}$-hard. It is likely that this can be improved to $\Pi_{k}^{\mathrm{P}}$-hard. However, neither Lemmas~\ref{lem:hauptlemma} and \ref{lem:tiny} can be improved from quantified conjunctive definability to pp-definability. Respective counterexamples are $x=y \rightarrow y=z$ and $x=y \vee y=z \vee x=z$.

\end{document}